\documentclass[journal,onecolumn,twoside]{IEEEtran}
\normalsize
\ifCLASSINFOpdf
\else
\fi
\usepackage{setspace}
\usepackage{color}
\usepackage{cite}
\usepackage{amsmath}
\usepackage{amsfonts}
\usepackage{amssymb}
\usepackage{amsthm}
\usepackage{tikz}
\usepackage{cases}
\usepackage{bm}
\usepackage{graphicx}
    \graphicspath{{../}}
    \DeclareGraphicsExtensions{.pdf}
\usepackage[caption=true,font=footnotesize]{subfig}
\usepackage{multirow}
\usepackage{makecell}
\usepackage{mathdots}
\usepackage{booktabs}
\usepackage{url}
\usepackage{bm}
\usepackage{xtab}
\usepackage{pifont}
\usepackage{array}
\usepackage{algorithm}
\usepackage{algorithmic}
\usepackage{enumerate}
\usetikzlibrary{arrows}
\usepackage{caption}
\usepackage{lineno} 
\usepackage{mathrsfs}
\usepackage{tikz-cd}
\usepackage{mathtools}
\usepackage{makecell}
\usepackage{booktabs}
\usepackage{array}
\usepackage{enumitem}
\usepackage[table]{xcolor}
\usepackage{graphicx}
\definecolor{readcolor}{RGB}{255,235,156}
\newcommand{\readcell}[1]{\cellcolor{readcolor}#1}
\usepackage{adjustbox}

\newcommand{\tinytabstyle}{%
  \tiny
  \setlength{\tabcolsep}{1.6pt}%
  \renewcommand{\arraystretch}{0.50}%
  \setlength{\fboxsep}{0.3pt}%
}
\allowdisplaybreaks[4]
\usepackage[colorlinks,
            linkcolor=blue,
            anchorcolor=blue,
            citecolor=blue,
            urlcolor=black]{hyperref}
\theoremstyle{plain}

\newtheorem{theorem}{Theorem}[section]
\newtheorem{lemma}{Lemma}[section]
\newtheorem{proposition}{Proposition} [section]
\newtheorem{definition}{Definition} [section]
\newtheorem{corollary}{Corollary} [section]
\theoremstyle{definition}

\newtheorem{remark}{Remark}[section]

\begin{document}	
\title{Tight Bandwidth Lower Bounds and Optimal Constructions of Locally Repairable Convertible Codes in the Global Split Regime}
\author{Haoming Shi,~\IEEEmembership{}
        Weijun~Fang~\IEEEmembership{}
        
\IEEEcompsocitemizethanks{\IEEEcompsocthanksitem Haoming Shi and Weijun Fang are with the State Key Laboratory of Cryptography and Digital Economy Security, Shandong University, Qingdao, 266237, China, the Key Laboratory of Cryptologic Technology and Information Security, Ministry of Education, Shandong University, Qingdao, 266237, China and the School of Cyber Science and Technology, Shandong University, Qingdao, 266237, China (emails: 202421328@mail.sdu.edu.cn, fwj@sdu.edu.cn).
}
\thanks{This work was supported in part by the National Key Research and Development Program of China under Grant Nos. 2022YFA1004900 and 2021YFA1001000, the National Natural Science Foundation of China under Grant No. 62571301, and the Shandong Provincial Natural Science Foundation under Grant No. ZR2025QA05. {\it (Corresponding Author: Weijun Fang)}.}
}

\maketitle
\begin{abstract}
Erasure coding is a key technique for providing fault tolerance in modern distributed storage systems. In practice, as storage systems evolve, the parameters of the deployed erasure code may need to be adjusted to accommodate changes in storage scale, reliability requirements, and disk failure rates. Such adaptation is achieved through code conversion, which transforms data encoded by an initial code into data encoded by a final code. Convertible codes are designed to carry out this transformation efficiently while preserving desirable code properties.

In this work, we study code conversion between systematic optimal-distance locally repairable codes (LRCs) in the global split regime, using read bandwidth as the conversion-efficiency metric. Specifically, we focus on the parameter range $g^I,g^F \leq r$, where the numbers of initial and final global parity nodes are at most the local information dimension $r$. Over this entire parameter range, we derive lower bounds on the read bandwidth of stable optimal-distance locally repairable convertible codes (LRCCs) via an information-theoretic approach, without imposing any linearity assumption on the initial codes, the final codes, or the conversion procedure. We then develop constructions based on MDS array codes with prescribed repair or alignment properties. Depending on the relative sizes of \(g^I\) and \(g^F\), we handle the construction separately in the three cases \(g^F=g^I\), \(g^F>g^I\), and \(g^F<g^I\), and show that each attains the corresponding lower bound. This yields a complete characterization of the optimal read bandwidth for stable optimal-distance LRCCs over the entire parameter range \(g^I,g^F\le r\).

\end{abstract}

\begin{IEEEkeywords}
convertible codes, split regime, read bandwidth, locally repairable codes, MDS array codes
\end{IEEEkeywords}

\IEEEpeerreviewmaketitle

\section{Introduction}
\label{se1}
In modern distributed storage systems, large volumes of data are distributed across many nodes, where node failures and disk errors are unavoidable. Erasure codes are therefore widely used to provide fault tolerance with lower storage overhead than replication~\cite{ghemawat2003google,huang2012erasure,sathiamoorthy2013xoring}. Among them, maximum distance separable codes, or MDS codes, achieve the optimal tradeoff between storage efficiency and fault tolerance: an $(n,k)$ MDS code can recover the original data from any $k$ out of $n$ encoded symbols, or equivalently, can tolerate any $n-k$ erasures\cite{singleton1964maximum}. However, repairing a failed node in an MDS-coded storage system often requires downloading data from many surviving nodes, leading to high repair bandwidth~\cite{rashmi2013solution,rashmi2014hitchhiker}. Locally repairable codes (LRCs)~\cite{gopalan2012locality} address this issue by adding local parity symbols to small local groups, so that a failed symbol can be recovered by accessing only a small number of other symbols in its group. Thus, while MDS codes maximize erasure tolerance for a given amount of redundancy, LRCs provide a more practical balance among reliability, storage overhead, and repair efficiency in distributed storage systems~\cite{balaji2018erasure,huang2013pyramid}.

In practical storage systems, the failure rates of storage nodes may vary over time~\cite{kadekodi2019cluster}. This temporal variation is often captured by a bathtub curve, with an early-failure period, a relatively stable useful-life period, and a wear-out period. As the underlying failure rates evolve, a fixed set of coding parameters may no longer be optimal for balancing reliability, storage overhead, and repair efficiency. To exploit such time-varying characteristics, prior work\cite{kim2024morph} showed that dynamically adapting the code rate can lead to substantial savings in both storage and operational costs. Adjusting the code rate requires changing the parameters of the underlying code. This procedure was formalized as code conversion\cite{maturana2022convertible}, where data encoded by an initial code are transformed into data encoded by a final
code. A naive approach to performing such a conversion is to fully decode the existing data and then re-encode it using the final code, incurring substantial read and write costs. Convertible codes are designed to carry out this process more efficiently while preserving the desired properties of the initial and final codes.

In this paper, we study code conversion for optimal-distance $(k,g,r,\ell)$ locally repairable codes\cite{maturana2023locally} in the
global split regime. Specifically, we consider stable conversions from an optimal-distance initial $(k^I,g^I,r,\ell)$-LRC to an optimal-distance final $(k^F,g^F,r,\ell)$-LRC, where one initial codeword is converted into $\lambda^F$ final codewords, with $k^I=\lambda^F k^F$ and the locality parameters $r$ and $\ell$ preserved. We focus on the parameter range $g^I,g^F\le r$, where the numbers of initial and final global parity nodes do not exceed the local information dimension. Since stable conversions maximize the number of unchanged nodes, the write cost is fixed in this setting, and the central question becomes how much data must be read from the initial codeword in order to generate the final codewords. Our goal is to establish read-bandwidth lower bounds for such conversions and to determine whether these bounds can be attained by matching constructions.

\subsection{Related Work}

The problem of code conversion was first introduced by Maturana and
Rashmi~\cite{maturana2022convertible} as a framework for efficiently updating the coding parameters of already encoded data in distributed storage systems. Their early work focused on conversions between MDS codes, with particular emphasis on the merge regime, where multiple initial codewords are combined into a single final codeword. Under the access cost metric, they established fundamental lower bounds and provided optimal constructions attaining these bounds.

Subsequent works refined the study of convertible codes in several directions. 
First, the conversion cost has been evaluated under different metrics, including access cost, which counts the number of accessed nodes during conversion, and bandwidth cost, which measures the total amount of data read from and written to storage nodes. 
Second, code conversion has been studied in several regimes, including the merge regime, the split regime, and more general regimes between arbitrary initial and final parameters.
For MDS convertible codes, several works \cite{maturana2022bandwidth,maturana2020access,maturana2023bandwidth,maturana2024code,chopra2024low,singhvi2025tight,ge2024mds,ramkumar2026mds,wang2026lower} have established lower bounds and optimal constructions under these different regimes and cost metrics.
In particular, an information-theoretic approach has recently been used to derive lower bounds on the bandwidth cost of MDS convertible codes in the split regime, without assuming linearity of the codes or the conversion procedure, or uniform downloads from the initial symbols~\cite{singhvi2025tight}.

Before the formal notion of convertible codes was introduced, related ideas had already appeared in the context of LRC-based storage systems. Xia et
al.~\cite{xia2015tale} proposed a storage architecture using two different LRCs to encode frequently accessed data and less frequently accessed data separately. Their scheme includes a mechanism for converting between the two LRCs. This mechanism is designed to reduce the number of nodes accessed during conversion when the access frequency of the data changes.
Later, Maturana and Rashmi~\cite{maturana2023locally} formally initiated the study of LRC conversion within the framework of convertible codes. Under the bandwidth cost metric, they proposed constructions for LRC conversion in both the global merge regime and the global split regime, and evaluated the bandwidth costs achieved by their constructions.

Since then, several works have further investigated the conversion problem for LRCs. Kong~\cite{kong2024locally} studied locally repairable convertible codes and established lower bounds and matching constructions under the access cost metric. Ge et al.~\cite{ge2026locally} improved the access cost lower bound and proposed more general access-optimal constructions. Shi et al.~\cite{shi2026bounds} further studied
generalized merge-convertible codes involving LRCs and developed new bounds and optimal constructions. Recently, Chopra et al.~\cite{chopra2026bandwidth} derived lower bounds on the bandwidth cost of LRC conversion in the global merge regime and showed that the global merge constructions of Maturana and Rashmi~\cite{maturana2023locally} are
bandwidth-optimal over a broad parameter range.

Prior to this work, however, the global split regime was much less understood. No bandwidth lower bound was known for LRC conversion in this regime, and constructions for which the bandwidth cost had been computed were essentially limited to those of Maturana and Rashmi~\cite{maturana2023locally}. Chopra et al.~\cite{chopra2026bandwidth} explicitly posed the problem of deriving bandwidth lower bounds for LRC conversion in the global split regime and determining whether the constructions in~\cite{maturana2023locally} achieve the optimal bandwidth cost.

\subsection{Main Contributions}

In this paper, we focus on stable optimal-distance LRCCs in the global split regime with \(g^I,g^F\le r\). Our main contributions are summarized as follows.

\begin{itemize}
    \item On the converse side, by extending the information-theoretic approaches of~\cite{chopra2026bandwidth,singhvi2025tight} to this setting, we derive two fundamental constraints on the data downloaded during conversion and use them to obtain lower bounds on the read bandwidth of stable optimal-distance LRCCs. These results hold without any linearity assumption on the initial and final LRCs or on the conversion procedure. To the best of our knowledge, this is the first work to derive read-bandwidth lower bounds for stable optimal-distance LRCCs in the global split regime.

    \item On the achievability side, we provide constructions whose read bandwidth attains the lower bounds established in the converse part throughout the parameter range \(g^I,g^F\le r\). Depending on the relative sizes of \(g^I\) and \(g^F\), we divide the analysis into three cases \(g^F=g^I\), \(g^F>g^I\), and 
    \(g^F<g^I\). For each case, we design the underlying MDS array codes using piggybacking techniques so that they have the repair or alignment properties required by the corresponding conversion procedure. We then show that, over any sufficiently large finite field, these MDS array blocks can be combined into stable optimal-distance LRCCs whose read bandwidth exactly matches the corresponding lower bound.

\end{itemize}
    Combining the converse and achievability results, we characterize the optimal read bandwidth for stable optimal-distance LRCCs in the global split regime over the entire parameter range \(g^I,g^F\le r\). This answers the open problem raised in~\cite{chopra2026bandwidth} for this parameter range. Moreover, by comparing our optimal bandwidth with the bandwidth achieved by the global-split constructions of Maturana and Rashmi~\cite{maturana2023locally}, we show that their constructions are not bandwidth-optimal in general in this regime.

\subsection{Organization of the Paper}

The remainder of this paper is organized as follows. Section~\ref{sec2} reviews the necessary preliminaries on locally repairable codes, MDS array codes, and convertible codes. Section~\ref{sec3} establishes several information-theoretic properties of optimal-distance LRCs, which serve as key tools for the bandwidth lower bound analysis. Section~\ref{sec4} derives read-bandwidth lower bounds for LRC conversion in the global split regime via an information-theoretic approach. Section~\ref{sec5} develops constructions based on MDS array codes with prescribed repair or alignment properties. In particular, we treat the three cases $g^F=g^I$, $g^F>g^I$, and $g^F<g^I$ separately, and provide conversion schemes that attain the lower bounds derived in Section~\ref{sec4}. Section~\ref{sec6} concludes the paper and discusses possible directions for future research.

\section{background and notation}
\label{sec2}
In this section, we introduce the notation used throughout the paper and
review some preliminaries needed for our results.

We use the following notation, largely following the conventions in~\cite{chopra2026bandwidth,singhvi2025tight}. For positive integers $a$ and $i$, define $[a]^i \coloneqq \{a(i-1)+1,a(i-1)+2,\ldots,ai\}$, with $[a]\coloneqq [a]^1$. Let $\mathbb F_q$ denote the finite field of size $q$, where $q$ is a prime power. For an indexed collection $Z=\{Z_i\}_{i\in \mathcal I}$ and a subset $\mathcal A\subseteq \mathcal I$, we write $Z_\mathcal A \coloneqq \{Z_i:i\in \mathcal A\}$. Given a vector $x=(x_i)_{i\in \mathcal I}$, we use $x_\mathcal A$ to denote its projection onto the coordinates indexed by $\mathcal A$. Analogously, given a family of deterministic functions \(\{f_i\}_{i\in\mathcal I}\), we write \(f_{\mathcal A}(Z_{\mathcal A})\coloneqq \{f_i(Z_i):i\in\mathcal A\}\). For random variables $X$ and $Y$, we denote the entropy of $X$ by $H(X)$, the conditional entropy of $X$ given $Y$ by $H(X\mid Y)$, and the mutual information between $X$ and $Y$ by $I(X;Y)$. All entropies are measured in units of $\mathbb F_q$-symbols.

An $[n,k,\alpha]_q$ array code $\mathcal C$ is an injective map \[ \mathcal C:\mathbb F_q^{\alpha k}\to \mathbb F_q^{\alpha n}. \] Unless explicitly stated otherwise, array codes in this paper are not assumed to be linear. We also use $\mathcal C$ to denote the image of this map when no confusion can arise. We identify $\mathbb F_q^{\alpha k}$ with $(\mathbb F_q^\alpha)^k$ and write a message vector as $\bm{m}=(m_1,\ldots,m_k)$, where each $m_i\in\mathbb F_q^\alpha$ is a message block. Given a message vector $\bm{m}$, let $ \bm{c}=\mathcal C(\bm{m})$ be the corresponding codeword and write
\[
     \bm{c}=(c_1,\ldots,c_n), \qquad c_i=(c_{i,1},\ldots,c_{i,\alpha})\in\mathbb F_q^\alpha, \quad i\in[n].
\] Each block $c_i$ is called a code symbol, and its $\alpha$ entries are called subsymbols. For later use, for a node $Z=(z_1,\ldots,z_\alpha)\in\mathbb F_q^\alpha$, we also write $Z[s]\coloneqq z_s$ for its $s$-th subsymbol, where $s\in[\alpha]$. In particular, $c_i[s]=c_{i,s}$. The parameter $\alpha$ is referred to as the sub-packetization level. When $\alpha=1$, the array code model reduces to the usual scalar code model.

For a subset \(\mathcal{S}\subseteq[n]\), let \(\bm{c}_\mathcal{S}\) denote the projection of the codeword \(\bm{c}\) onto the symbols indexed by \(\mathcal{S}\). The punctured code on $\mathcal S$ is denoted by \(\mathcal{C}_\mathcal{S}\coloneqq\{\bm{c}_\mathcal{S}:\bm{c}\in \mathcal{C}\}\). The Hamming distance between array codewords is measured at the symbol level. The minimum distance of an array code is defined with respect to this Hamming distance.

In the storage interpretation, each symbol $c_i$ is stored on a distinct storage node; hence, we use the terms code symbol and storage node interchangeably when discussing repair and conversion procedures. The code $\mathcal C$ is systematic if, possibly after a relabeling of nodes, the first $k$ nodes store the message blocks uncoded, i.e.,
\[ c_i=m_i,\qquad i\in[k]. \]
These $k$ nodes are called information nodes, and the remaining $n-k$ nodes are called parity nodes.

\subsection{Locally Repairable Codes}
We recall the standard definition of $(r,\delta)$-locality for codes that are not necessarily linear.

\begin{definition}[$(r,\delta)$-locality,\cite{kamath2014codes}]
    Let $\mathcal C$ be an $[n,k,\alpha]_q$ array code, and let 
    $r\in [k]$ and $\delta\ge 2$. A subset $\mathcal{S}\subseteq[n]$ is called an $(r,\delta)$-locality set of $\mathcal C$ if
\begin{enumerate}
    \item $|\mathcal{S}|\le r+\delta-1$;
    \item the minimum distance of the punctured code $\mathcal C_\mathcal{S}$ satisfies $d(\mathcal C_\mathcal{S})\ge \delta$.
\end{enumerate}
    The code $\mathcal C$ has $(r,\delta)$ all-symbol locality if every node $t\in[n]$ belongs to at least one $(r,\delta)$-locality set. Furthermore, if $\mathcal C$ is systematic and, without loss of generality, the information nodes are indexed by $[k]$, then $\mathcal C$ has $(r,\delta)$ information locality if every information node $t\in[k]$ belongs to at least one $(r,\delta)$-locality set.
\end{definition}

Next, we introduce the class of locally repairable array codes considered in this paper.

\begin{definition}[$(k,g,r,\ell,\alpha)$-LRC,\cite{maturana2023locally}]
    Suppose that $r\mid k$, set $ \mu=\frac{k}{r}$ and let $  n=k+\mu\ell+g.$ A systematic $[n,k,\alpha]_q$ array code $\mathcal C$ is called a $(k,g,r,\ell,\alpha)$-LRC if it satisfies the following properties:
\begin{enumerate}
    \item $\mathcal C$ has $(r,\ell+1)$ information locality;
    \item The $k$ information nodes are partitioned into $\mu$ disjoint sets $\mathcal I_1,\ldots,\mathcal I_\mu$, each of size $r$. For each $s\in[\mu]$, there is an associated set $\mathcal L_s$ of $\ell$ local parity nodes. These local parity nodes are deterministic functions only of the information nodes in $\mathcal I_s$, and the sets $\mathcal I_s\cup\mathcal L_s$ are pairwise disjoint. Moreover, $\mathcal I_s\cup\mathcal L_s$ is an $(r,\ell+1)$-locality set of $\mathcal C$. We refer to $\mathcal I_s\cup\mathcal L_s$ as the $s$-th local group;
    \item In addition to the information nodes and local parity nodes, each codeword contains $g$ global parity nodes. Each global parity node is a deterministic function of the information nodes.
\end{enumerate}
    When $\alpha$ is clear from the context, we simply call $\mathcal C$ a $(k,g,r,\ell)$-LRC.
\end{definition}

Let $d$ denote the minimum distance of $\mathcal C$. The Singleton-type bound for codes with $(r,\ell+1)$ information locality~\cite{kamath2014codes} gives \[ d \le n-k-\left(\left\lceil\frac{k}{r}\right\rceil-1\right)\ell+1.\] Since $r\mid k$ and $n=k+\mu\ell+g$, this bound reduces to \[d\le g+\ell+1.\]
A $(k,g,r,\ell)$-LRC is called optimal-distance if it meets this bound with equality.

A standard way to construct optimal-distance LRCs is the basic pyramid code construction~\cite{huang2013pyramid}. Starting from an MDS code with $k$ information nodes and $g+\ell$ parity nodes, one keeps $g$ parity nodes as global parity nodes and splits each of the remaining $\ell$ parity nodes into $\mu=\frac{k}{r}$ local parity nodes, one for each local group, while preserving the optimal-distance property.

\subsection{MDS Array Codes and Repair Bandwidth}

We next recall MDS array codes and the repair-bandwidth bounds used in our constructions.

\begin{definition}[MDS array code]
    Let $\mathcal C$ be an $[n,k,\alpha]_q$ array code. We say that $\mathcal C$ is an MDS array code if any $k$ out of the $n$ nodes determine the original message. Equivalently, for every subset $\mathcal{S}\subseteq[n]$ with $|\mathcal{S}|=k$, the projection map $\bm c\mapsto \bm c_{\mathcal S}$ is injective on $\mathcal C$. With respect to the Hamming distance, an MDS array code satisfies\[ d(\mathcal C)=n-k+1.\]
\end{definition}

Let $\mathcal E,\mathcal R\subseteq[n]$ be two disjoint subsets, where $\mathcal E$ denotes the erased nodes and $\mathcal R$ denotes the helper nodes. We denote by $N(\mathcal C,\mathcal E,\mathcal R)$ the minimum number of $\mathbb F_q$-symbols that must be downloaded from the helper nodes in $\mathcal R$ in order to recover the erased nodes in $\mathcal E$. The downloaded symbols may be arbitrary functions of the data stored in the helper nodes.

\begin{theorem}[Centralized repair cut-set bound~\cite{cadambe2013asymptotic}]
\label{thm:cut_set_bound}
    Let $\mathcal C$ be an $[n,k,\alpha]_q$ MDS array code. Let $\mathcal E,\mathcal R\subseteq[n]$ be two disjoint subsets with \[ |\mathcal E|=h,\qquad |\mathcal R|=d.\]
    If $1\le h\le n-k$ and $k\le d\le n-h$, then \[ N(\mathcal C,\mathcal E,\mathcal R) \ge \frac{hd}{h+d-k}\alpha. \]
    If equality holds, we say that the nodes in $\mathcal E$ can be optimally repaired from the helper nodes in $\mathcal R$.
\end{theorem}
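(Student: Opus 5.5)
We prove the centralized cut-set bound of Theorem~\ref{thm:cut_set_bound} by an information-theoretic averaging argument. Take the message $\bm m$ uniform on $\mathbb F_q^{\alpha k}$. Since $\mathcal C$ is injective, $H(\bm c)=k\alpha$. The MDS property gives that any $k$ nodes determine $\bm m$; hence any $k$ nodes are jointly uniform, with $H(\bm c_{\mathcal S})=|\mathcal S|\alpha$ for every $|\mathcal S|\le k$. Fix a repair scheme, and for each helper $j\in\mathcal R$ let $D_j=f_j(c_j)$ be the data it sends, with $\beta_j$ symbols, so that $H(D_j)\le\beta_j$. The total download is $N=\sum_{j\in\mathcal R}\beta_j$. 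Correctness means that $c_{\mathcal E}$ is a function of $D_{\mathcal R}$.

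The core step is a per-subset inequality. Write $d'=k-h$; we may assume $d'\ge 0$ (if $h>k$, then $d\ge k$ and the bound $\frac{hd}{h+d-k}\alpha\le d\alpha$ follows from a similar argument, or we reduce to $h=k$). Choose any subset $\mathcal T\subseteq\mathcal R$ with $|\mathcal T|=d-d'=d+h-k$, and let $\mathcal U=\mathcal R\setminus\mathcal T$, so $|\mathcal U|=k-h$. Then $\mathcal E\cup\mathcal U$ consists of exactly $k$ nodes, which determine everything. We bound
\[
h\alpha = H(c_{\mathcal E}\mid c_{\mathcal U})\le H(D_{\mathcal R}\mid c_{\mathcal U})\le H(D_{\mathcal T})\le \sum_{j\in\mathcal T}\beta_j .
\]
The first equality holds because $\mathcal E\cup\mathcal U$ has $k$ nodes, so these are jointly uniform and independent. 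The next step uses that $c_{\mathcal E}$ is a function of $D_{\mathcal R}$. The third step uses that $D_{\mathcal U}$ is a function of $c_{\mathcal U}$.

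Finally, we average over all $\binom{d}{d+h-k}$ choices of $\mathcal T$. Each helper lies in a fraction $\frac{d+h-k}{d}$ of these subsets, so summing gives
\[
\binom{d}{d+h-k}h\alpha\le \binom{d}{d+h-k}\frac{d+h-k}{d}\,N ,
\]
that is, $N\ge \frac{hd}{h+d-k}\alpha$. The constraint $d\le n-h$ only ensures that $\mathcal E$ and $\mathcal R$ are disjoint, and $h\le n-k$ ensures the setup is meaningful.

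The only delicate point is justifying $H(c_{\mathcal E}\mid c_{\mathcal U})=h\alpha$ for nonlinear codes. This holds because the projection onto any $k$ nodes is a bijection from $\mathbb F_q^{\alpha k}$: it is injective by the MDS property and maps into a set of size $q^{\alpha k}$. Hence those $k$ nodes are exactly uniform, so $c_{\mathcal E}$ is independent of $c_{\mathcal U}$ and uniform.

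The case $h>k$ needs its own argument. There we take $\mathcal U=\emptyset$ and any $k$ of the $h$ erased nodes, which gives $k\alpha\le N$. Then $\frac{hd}{h+d-k}\le k$ is not automatic, so instead we bound $H(c_{\mathcal E})\le H(D_{\mathcal R})$. Because the nodes in $\mathcal R$ also determine the message, a symmetric averaging over $k$-subsets of $\mathcal R$ yields the bound. I expect this edge case to be the main obstacle. If it resists, I would fall back on citing~\cite{cadambe2013asymptotic} for it.
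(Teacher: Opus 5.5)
Your argument for $h\le k$ is correct and is the standard cut-set averaging argument. The paper gives no proof of this theorem and simply cites \cite{cadambe2013asymptotic}. You pick $\mathcal U\subseteq\mathcal R$ with $|\mathcal U|=k-h$ and use that any $k$ nodes of an MDS array code are jointly uniform, since the projection is a bijection onto $\mathbb F_q^{\alpha k}$. Averaging $h\alpha\le\sum_{j\in\mathcal T}\beta_j$ over all $(d+h-k)$-subsets $\mathcal T\subseteq\mathcal R$ then gives exactly $N\ge\frac{hd}{h+d-k}\alpha$. This needs no linearity and no uniform-download assumption, which fits the paper's nonlinear setting.

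The gap is the case $h>k$, and it cannot be closed: the inequality is false there. Neither your proposed averaging over $k$-subsets of $\mathcal R$ nor an appeal to the citation can establish it.

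For $h\ge k$ we have $\frac{hd}{h+d-k}-k=\frac{(h-k)(d-k)}{h+d-k}\ge 0$. So the claimed bound strictly exceeds $k\alpha$ whenever $h>k$ and $d>k$. Your remark that $\frac{hd}{h+d-k}\le k$ ``is not automatic'' has the inequality backwards, and your reduction to $k$ erased nodes yields only the weaker $k\alpha$.

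On the other hand, downloading the full contents of any $k$ helpers costs $k\alpha$ and, by the MDS property, determines the message and hence $c_{\mathcal E}$. So $N(\mathcal C,\mathcal E,\mathcal R)\le k\alpha$. Concretely, $\mathcal C(m)=(m,m,m,m)$ is a $[4,1,\alpha]_q$ MDS array code, and $h=d=2$ satisfies all the hypotheses. Yet $N\le\alpha<\frac{4}{3}\alpha$, even with equal downloads of $\alpha/2$ from each helper when $\alpha$ is even.

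The statement therefore needs the extra hypothesis $h\le k$; for $h\ge k$ the tight bound is $N\ge k\alpha$. This costs the paper nothing. Every application in Section~\ref{sec5} has $k=r$ and $h\in\{g,\ g^F-g^I,\ g^F\}$, all at most $r$. You should drop the $h>k$ branch and state the missing hypothesis rather than try to prove it.
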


Theorem \ref{thm:cut_set_bound} defines optimal repair for a prescribed erased set $\mathcal E$ and a prescribed helper set $\mathcal R$. The following stronger property requires optimal repair for every choice of erased and helper sets of the same sizes.

\begin{definition}[$(h,d)$-optimal repair property]
    Let $\mathcal C$ be an $[n,k,\alpha]_q$ MDS array code. For integers $h$ and $d$ satisfying $ 1\le h\le n-k, k\le d\le n-h, $
    we say that $\mathcal C$ has the $(h,d)$-optimal repair property if, for every erased set $\mathcal E\subseteq[n]$ with $|\mathcal E|=h$ and every helper set $\mathcal R\subseteq[n]\setminus\mathcal E$ with $|\mathcal R|=d$, the nodes in $\mathcal E$ can be optimally repaired from the helper nodes in $\mathcal R$, i.e., \[  N(\mathcal C,\mathcal E,\mathcal R) = \frac{hd}{h+d-k}\alpha .\]
\end{definition}

\begin{theorem}[Existence of MDS array codes with optimal repair
{\cite{ye2017explicit1}}] \label{thm:existence}
Let \(1\le k<n\), set \(\rho=n-k\), and define $s=\operatorname{lcm}(1,2,\ldots,\rho).$ For every finite field \(\mathbb F_q\) with \(q\ge sn\), there exists an \([n,k,s^n]_q\) MDS array code having the \((h,d)\)-optimal repair property simultaneously for all
\[
    1\le h\le \rho,\qquad k\le d\le n-h.
\]
\end{theorem}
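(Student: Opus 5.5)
Since this is an existence statement quoted from~\cite{ye2017explicit1}, the plan is to reproduce the explicit Ye--Barg construction and verify its two properties: the MDS property, and universal optimal repair. Set $\rho=n-k$, $s=\operatorname{lcm}(1,\ldots,\rho)$ and $\alpha=s^n$. Index the $\alpha$ subsymbols by vectors $a=(a_1,\ldots,a_n)\in\{0,\ldots,s-1\}^n$, i.e., by base-$s$ expansions. Since $q\ge sn$, we can choose $sn$ pairwise distinct elements $\lambda_{i,u}\in\mathbb F_q$ with $i\in[n]$ and $u\in\{0,\ldots,s-1\}$. For each node $i$ define the diagonal matrix
\[
A_i=\sum_a \lambda_{i,a_i}\, e_a e_a^{T},
\]
so the $a$-th diagonal entry of $A_i$ depends only on the $i$-th digit of $a$. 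The code is defined by the parity-check equations
\[
\sum_{i=1}^n A_i^t c_i=0,\qquad t=0,1,\ldots,\rho-1 .
\]

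\emph{MDS property.} Any $\rho$ nodes must be recoverable from the others. Restricted to $\rho$ erased nodes, the parity-check system is a block Vandermonde matrix $(A_{i_j}^t)_{t,j}$. Its blocks are diagonal, hence commute, so its determinant factors coordinatewise. For each coordinate $a$, the factor is the scalar Vandermonde determinant in the distinct elements $\lambda_{i_1,a_{i_1}},\ldots,\lambda_{i_\rho,a_{i_\rho}}$. Each such factor is nonzero because distinct nodes use disjoint sets of $\lambda$'s. Hence the dimension is $k\alpha$ and any $k$ nodes determine the message.

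\emph{Optimal repair.} Fix $h,d$ in the stated ranges and put $s'=d+h-k\le\rho$, so $s'\mid s$. Fix an erased set $\mathcal E$ with $|\mathcal E|=h$ and a helper set $\mathcal R$ with $|\mathcal R|=d$. The unused nodes $\mathcal U=[n]\setminus(\mathcal E\cup\mathcal R)$ number $\rho-s'$. For each $i\in\mathcal E$, partition the digit range $\{0,\ldots,s-1\}$ into $s/s'$ consecutive blocks of size $s'$.

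Each helper $j\in\mathcal R$ then sends, for every "coarse" index, the sums $\sum_{u} c_j[a(i\leftarrow u)]$ taken over the $s'$ values of $a_i$ in a block. These sums are organized so that the total download from node $j$ is $h\alpha/s'$ subsymbols. This matches the bound $\frac{hd}{h+d-k}\alpha$ of Theorem~\ref{thm:cut_set_bound} after summing over the $d$ helpers.

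On the decoding side, I would sum the parity equations over each block of coordinates. The helper contributions then collapse to exactly the downloaded sums, because $A_j$ is constant on such blocks for $j\notin\mathcal E$. What remains is a linear system in the unknown subsymbols of the $h$ erased nodes and in the aggregated contributions of the $\rho-s'$ unused nodes. Counting equations, there are $\rho$ equations per block, matching $hs'$ erased unknowns plus $\rho-s'$ unused aggregates.

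The main obstacle is the last step: showing this square system is invertible for every choice of $\mathcal E$, $\mathcal R$ and block. The coefficients are powers of $\lambda_{i,u}$ for $i\in\mathcal E$ and $u$ ranging over one block, together with powers of single $\lambda_{j,a_j}$ for $j\in\mathcal U$. I expect to handle this by arranging the system, after a suitable ordering, as a scalar Vandermonde matrix in $hs'+(\rho-s')=\rho+(h-1)s'$ elements. When $h>1$ the count does not directly match, so I would first eliminate the erased nodes one digit-coordinate at a time. This requires a careful nested indexing, in which each erased node's block is handled for all combinations of the other erased digits. The intended reduction is to a Vandermonde system in pairwise distinct $\lambda$'s, which is invertible. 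Establishing that reduction, i.e.\ the bookkeeping that simultaneously covers all $h$ erased nodes with exactly $h\alpha/s'$ downloads per helper, is the delicate part of the argument.
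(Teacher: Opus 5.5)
The paper gives no proof of this theorem; it cites it from~\cite{ye2017explicit1}, so your proposal has to stand on its own. Your construction is the first Ye--Barg family. Your MDS argument (coordinatewise scalar Vandermonde determinants) is correct. So is your $h=1$ repair: summing over a block of size $s'$ leaves $s'+(\rho-s')=\rho$ unknowns with a Vandermonde matrix in distinct $\lambda$'s.

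The gap is $h\ge 2$, and it is not just bookkeeping: the download you propose cannot work. Suppose every helper sends block sums of one common size $s'=d+h-k$ along each of the $h$ erased digits. These sums are linearly dependent. For $h=2$, inside each $s'\times s'$ subcube (all other digits fixed), the $s'$ row sums and $s'$ column sums satisfy $\sum_v R(v)=\sum_u C(u)$. In general, all these line sums in an $s'^h$ subcube span only $s'^h-(s'-1)^h<hs'^{h-1}$ dimensions. A helper could therefore send a basis of this span, i.e.\ $\bigl(1-(1-1/s')^h\bigr)\alpha<h\alpha/s'$ symbols. If repair worked from your downloads, it would then beat the cut-set bound of Theorem~\ref{thm:cut_set_bound}. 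So no elimination order or ``nested indexing'' can make your system solvable. Note also that after summing along digit $i_1$, the other erased nodes appear only through aggregated sums. You get $\rho+h-1$ unknowns in $\rho$ equations, not a system in the subsymbols of all $h$ erased nodes.

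The missing idea is to repair the erased nodes \emph{sequentially with different block sizes}. This is why $s$ must be divisible by \emph{every} integer up to $\rho$, not just by $s'$.
\begin{enumerate}
\item Order $\mathcal E=\{i_1,\ldots,i_h\}$. In round $m$, repair $i_m$ as a single-node repair from $\mathcal R\cup\{i_1,\ldots,i_{m-1}\}$, which are already recovered. Treat $i_{m+1},\ldots,i_h$ and $\mathcal U$ as unavailable, and use blocks of size $b_m=d+m-k$ in digit $i_m$.
\item The summed equations have $b_m+(n-d-m)=\rho$ unknowns and a $\rho\times\rho$ Vandermonde matrix in distinct $\lambda$'s. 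Your $h=1$ argument then applies verbatim.
\item The helpers in $\mathcal R$ must supply, for every $m$, block sums of size $b_m$ along digit $i_m$. On a product set $B_1\times\cdots\times B_h$ with $|B_m|=b_m$, the annihilator of all these sums is the tensor product of the zero-sum spaces. Hence the sums span $\prod_m b_m-\prod_m(b_m-1)$ dimensions.
\item Since $b_m-1=b_{m-1}$, the ratio telescopes, and the fraction downloaded is $1-\frac{d-k}{d+h-k}=\frac{h}{d+h-k}$. Sending a basis of this span meets the cut-set bound exactly.
\item The block partitions exist because each $b_m\le d+h-k\le\rho$ divides $s=\operatorname{lcm}(1,\ldots,\rho)$.
\end{enumerate}
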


\subsection{Convertible Codes}

We next recall the formal model for code conversion.

\begin{definition}[Convertible code]
    A convertible code over $\mathbb F_q$ with parameters $(n^I,k^I;n^F,k^F,\alpha)$ consists of the following objects.
\begin{enumerate}
    \item An initial code $\mathcal C^I:\mathbb F_q^{\alpha k^I}\to\mathbb F_q^{\alpha n^I}$ and a final code $\mathcal C^F:\mathbb F_q^{\alpha k^F}\to\mathbb F_q^{\alpha n^F}$;
    \item Two partitions $\mathcal P^I$ and $\mathcal P^F$ of the common message index set $[M]$, where $M=\operatorname{lcm}(k^I,k^F)$. Each block in $\mathcal P^I$ has size $k^I$, and each block in $\mathcal P^F$ has size $k^F$;
    \item A conversion procedure that, for every message $\bm{m}\in \mathbb F_q^{\alpha M}$, takes the initial codewords $\{\mathcal C^I(\bm{m}_P):P\in\mathcal P^I\}$ as input and produces the final codewords $\{\mathcal C^F(\bm{m}_P):P\in\mathcal P^F\}$ as output, where $\bm{m}_P$ denotes the message blocks indexed by $P$, listed in increasing order of their indices in $[M]$.
\end{enumerate}
\end{definition}

During conversion, the nodes involved are classified into three categories. An \emph{unchanged node} is an initial node whose content is kept unchanged and appears as a node in the final codewords. A \emph{retired node} is an initial node that does not appear in the final codewords after conversion. A \emph{new node} is a node of the final codewords that is generated during conversion and was not present among the initial nodes.

The conversion is carried out by a \emph{conversion coordinator}, which downloads data from the initial nodes and computes the new nodes. A convertible code is called \emph{stable} if the number of unchanged nodes is maximized among all conversion procedures with the same initial and final code parameters. The \emph{read bandwidth cost}, or simply the read bandwidth, denoted by \(\gamma_R\), is the total number of \(\mathbb F_q\)-symbols downloaded by the conversion coordinator during the conversion procedure. Since each node stores \(\alpha\) symbols over \(\mathbb F_q\), we also use the normalized read bandwidth \(\widetilde{\gamma}_R\coloneqq \gamma_R/\alpha\), which measures the read bandwidth in units of full-node contents and allows us to compare constructions with different subpacketization levels. For stable conversions, the number of new nodes is fixed, and hence the write bandwidth is fixed as well. Therefore, under the stability assumption, it suffices to optimize the read bandwidth.

We now specialize the general model of code conversion to locally repairable codes.

\begin{definition}[LRCC] 
    Suppose the initial code $\mathcal C^I$ is a $(k^I,g^I,r^I,\ell^I,\alpha)$-LRC and the final code $\mathcal C^F$ is a $(k^F,g^F,r^F,\ell^F,\alpha)$-LRC. Then the corresponding convertible code is called a locally repairable convertible code, or LRCC, with parameters $(k^I,g^I,r^I,\ell^I;k^F,g^F,r^F,\ell^F,\alpha).$ If both $\mathcal C^I$ and $\mathcal C^F$ are optimal-distance LRCs, then the LRCC is called an optimal-distance LRCC. 
\end{definition}

For an LRCC, let $\mu^I=\frac{k^I}{r^I},\mu^F=\frac{k^F}{r^F}$ denote the numbers of local groups in the initial and final codewords, respectively.

\begin{definition}[Global split regime,\cite{maturana2023locally}] 
    An LRCC is said to be in the global split regime if one initial codeword is converted into $\lambda^F\ge 2$ final codewords and the locality parameters are preserved. More precisely, we have $k^I=\lambda^F k^F, r^I=r^F=r$ and $\ell^I=\ell^F=\ell$, which implies $\mu^I=\lambda^F\mu^F$. 
\end{definition}

For LRCCs in the global split regime, we use the shorthand $(k^I,g^I,r,\ell;k^F,g^F,r,\ell,\alpha)\text{-LRCC}$ and take \(\mathcal P^I=\{[k^I]\}\) and \(\mathcal P^F=\{[k^F]^t:t\in[\lambda^F]\}\) after a possible relabeling of the message indices. 

\begin{remark} \label{rem1}
    In the global split regime, the initial codeword and the collection of final codewords encode the same underlying data. Since the codes are systematic, we identify their information nodes with the same message blocks. Hence, the information nodes of the initial codeword remain unchanged and directly serve as the corresponding information nodes in the final codewords.
\end{remark}

For later comparison, we recall the global-split construction of Maturana and Rashmi. Using the piggybacking framework, they constructed optimal-distance LRCCs in which all information nodes and local parity nodes are preserved during conversion.  In particular, in the global split regime, their construction gives an optimal-distance
$(k^I,g^I,r,\ell;k^F,g^F,r,\ell,\alpha)$-LRCC whose read bandwidth is
\[
\gamma_R^{\mathrm{MR}}=
\begin{cases}
\displaystyle
\lambda^F g^F \frac{(\lambda^F-1)(k^F+\mu^F\ell)+g^I} {(\lambda^F-1)g^F+g^I+\ell(\lambda^F-1)} \alpha, & \text{if } g^F\le g^I,\\[2ex]
\displaystyle
\lambda^F g^F \frac{(k^F+\mu^F\ell)(\lambda^F g^F-g^I)+g^I g^F} {\lambda^F g^F(g^F+\ell)-g^I\ell} \alpha, & \text{otherwise}.
\end{cases}
\] Recent work on the global merge regime raised the corresponding converse problem for the global split regime, namely, to derive lower bounds on the bandwidth of LRCCs in this regime and to determine whether the global-split construction of Maturana and Rashmi is bandwidth-optimal.

\section{Information-Theoretic Characterization of LRC Storage}
\label{sec3}

In this section, we isolate several information-theoretic properties of a single optimal-distance $(k,g,r,\ell,\alpha)$-LRC. These properties are independent of any conversion procedure and will be used in Section~\ref{sec4} to derive the read-bandwidth lower bounds for stable LRCCs in the global split regime.

Consider a distributed storage system encoded using an optimal-distance $(k,g,r,\ell,\alpha)$-LRC, and let $\mu=\frac{k}{r}$. Each codeword is stored across $n=k+\mu\ell+g$ nodes, consisting of $k$ information nodes, $\mu\ell$ local parity nodes, and $g$ global parity nodes. We denote the random variables stored in the information nodes by \(X_j\), \(j\in[k]\), those stored in the local parity nodes by \(L_a\), \(a\in[\mu\ell]\), and those stored in the global parity nodes by \(G_i\), \(i\in[g]\). The local groups are indexed in the natural order: for each \(\tau\in[\mu]\), the \(\tau\)-th local group consists of the information nodes \(X_{[r]^\tau}\) and the local parity nodes \(L_{[\ell]^\tau}\).

We assume that the message blocks stored in the information nodes are independent and uniformly distributed. Thus, for every \(\mathcal S\subseteq[k]\),
\[
    H(X_{\mathcal S})=|\mathcal S|\alpha .
\]

Since each local parity node is a deterministic function only of the information nodes in its own local group, for every \(\tau\in[\mu]\),
\[
    H\bigl(L_{[\ell]^\tau}\mid X_{[r]^\tau}\bigr)=0.
\]

Similarly, since each global parity node is a deterministic function of the information nodes,
\[
    H\bigl(G_{[g]}\mid X_{[k]}\bigr)=0.
\]

The local code induced by each local group has minimum distance \(\ell+1\). Hence, for any \(\mathcal A\subseteq[\ell]^\tau\) and \(\mathcal D\subseteq[r]^\tau\) such that \(|\mathcal A|+|\mathcal D|=r\), the information nodes in the \(\tau\)-th local group can be recovered from \(L_{\mathcal A}\) and \(X_{\mathcal D}\), i.e.,
\[
    H\bigl(X_{[r]^\tau}\mid L_{\mathcal A},X_{\mathcal D}\bigr)=0.
\]

Since the LRC has optimal distance \(g+\ell+1\), any erasure pattern of size at most \(g+\ell\) is recoverable. Therefore, for arbitrary subsets \(\mathcal Q\subseteq[g]\), \(\mathcal R\subseteq[\mu\ell]\), and \(\mathcal S\subseteq[k]\) satisfying $|\mathcal Q|+|\mathcal R|+|\mathcal S|\le g+\ell,$
we have
\[
    H\bigl(G_{\mathcal Q},L_{\mathcal R},X_{\mathcal S} \mid  G_{[g]\setminus\mathcal Q},  L_{[\mu\ell]\setminus\mathcal R},  X_{[k]\setminus\mathcal S}\bigr)=0.
\]

Finally, since every node stores \(\alpha\) symbols over \(\mathbb F_q\),
\[
    H(L_a)\le \alpha,\qquad a\in[\mu\ell], \qquad H(G_i)\le \alpha,\qquad i\in[g].
\]

We next recall several known propositions that will be used in our proofs. The first is the total-entropy identity, which states that the parity nodes introduce no entropy beyond that of the \(k\) independent information nodes.

\begin{proposition}[Total-entropy identity~\cite{chopra2026bandwidth}]
\label{prop:lrc-total-entropy}
    For any optimal-distance \((k,g,r,\ell,\alpha)\)-LRC, the total entropy of the data stored across the \(k\) information nodes, the \(\mu\ell\) local parity nodes, and the \(g\) global parity nodes is
    \[
        H\!\left(X_{[k]},L_{[\mu\ell]},G_{[g]}\right)=k\alpha .
    \]
\end{proposition}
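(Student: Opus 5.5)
The identity follows from a chain-rule sandwich: a lower bound from the independence of the information nodes, and an upper bound from the fact that every parity node is a deterministic function of the information nodes.

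\emph{Lower bound.} By monotonicity of entropy,
\[
H\!\left(X_{[k]},L_{[\mu\ell]},G_{[g]}\right)\ge H\!\left(X_{[k]}\right).
\]
Since the message blocks are independent and uniform, $H(X_{[k]})=k\alpha$.

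\emph{Upper bound.} Expand with the chain rule:
\[
H\!\left(X_{[k]},L_{[\mu\ell]},G_{[g]}\right)=H\!\left(X_{[k]}\right)+H\!\left(L_{[\mu\ell]}\mid X_{[k]}\right)+H\!\left(G_{[g]}\mid X_{[k]},L_{[\mu\ell]}\right).
\]
For the second term, split $L_{[\mu\ell]}$ into the local groups $L_{[\ell]^\tau}$, $\tau\in[\mu]$. Subadditivity of conditional entropy, followed by the fact that conditioning reduces entropy, gives
\[
H\!\left(L_{[\mu\ell]}\mid X_{[k]}\right)\le\sum_{\tau\in[\mu]}H\!\left(L_{[\ell]^\tau}\mid X_{[k]}\right)\le\sum_{\tau\in[\mu]}H\!\left(L_{[\ell]^\tau}\mid X_{[r]^\tau}\right)=0,
\]
where the last equality is the local functional dependence listed above. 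For the third term, conditioning reduces entropy, so
\[
H\!\left(G_{[g]}\mid X_{[k]},L_{[\mu\ell]}\right)\le H\!\left(G_{[g]}\mid X_{[k]}\right)=0
\]
by the global functional dependence. Hence the joint entropy is at most $k\alpha$, and combined with the lower bound it equals $k\alpha$.

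No step is a real obstacle. The optimal-distance assumption is not used; the identity holds for any systematic LRC with deterministic parities. The only point needing care is the subadditivity step, which is what lets each local parity block be handled by its own group's functional dependence.
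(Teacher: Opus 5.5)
Your proof is correct. The paper gives no proof of its own and imports the statement from the cited work, but your argument (the chain rule with $H(X_{[k]})=k\alpha$, $H(L_{[\ell]^\tau}\mid X_{[r]^\tau})=0$ and $H(G_{[g]}\mid X_{[k]})=0$, the facts the paper lists just before the proposition) is the standard derivation, and you are right that optimal distance plays no role here; it first enters in Lemma~\ref{lem:lrc-averaged-function-bounds}, where recoverability from surviving nodes is combined with this identity.
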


The next proposition isolates an MDS-type structure induced by an optimal-distance LRC. After conditioning on all information nodes outside a fixed local group, the nodes in that local group together with the global parity nodes form an \((r+\ell+g,r)\)-MDS array code.

\begin{proposition}[\cite{chopra2026bandwidth}]
\label{prop:lrc-conditional-mds}
    Fix \(\tau\in[\mu]\). For any \(\mathcal{A}\subseteq[g]\), \(\mathcal{B}\subseteq[\ell]^\tau\), and \(\mathcal{D}\subseteq[r]^\tau\) satisfying \[ |\mathcal{A}|+|\mathcal{B}|+|\mathcal{D}|\le r,\]
    we have \[ H\!\left( G_{\mathcal{A}},L_{\mathcal{B}},X_{\mathcal{D}} \mid X_{[k]\setminus [r]^\tau} \right) = \bigl(|\mathcal{A}|+|\mathcal{B}|+|\mathcal{D}|\bigr)\alpha .\]
\end{proposition}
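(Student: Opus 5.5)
Since Proposition~\ref{prop:lrc-conditional-mds} is stated as a conditional entropy identity, the plan is to prove the two inequalities separately. The upper bound
\[
H\bigl(G_{\mathcal A},L_{\mathcal B},X_{\mathcal D}\mid X_{[k]\setminus[r]^\tau}\bigr)\le \bigl(|\mathcal A|+|\mathcal B|+|\mathcal D|\bigr)\alpha
\]
is immediate from subadditivity of entropy, conditioning reducing entropy, $H(G_i)\le\alpha$, $H(L_a)\le\alpha$, and $H(X_j)=\alpha$. All the work is in the matching lower bound.

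For the lower bound, I will first reduce to the case $|\mathcal A|+|\mathcal B|+|\mathcal D|=r$. By the chain rule, any collection of $r$ nodes with total conditional entropy $r\alpha$ forces each sub-collection to have full conditional entropy. Otherwise some sub-collection loses entropy, and subadditivity over the remaining nodes gives a total strictly below $r\alpha$. The reduction is legitimate because any admissible triple can be enlarged to size exactly $r$: add unused information nodes of the $\tau$-th group, of which there are $r-|\mathcal D|\ge|\mathcal A|+|\mathcal B|$.

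Now fix $|\mathcal A|+|\mathcal B|+|\mathcal D|=r$, and set $W\coloneqq X_{[k]\setminus[r]^\tau}$. Since $H(X_{[r]^\tau}\mid W)=r\alpha$ by independence of the message blocks, it suffices to show
\[
H\bigl(X_{[r]^\tau}\mid G_{\mathcal A},L_{\mathcal B},X_{\mathcal D},W\bigr)=0.
\]
Indeed, all of $G_{\mathcal A}$, $L_{\mathcal B}$, $X_{\mathcal D}$ are functions of $X_{[r]^\tau}$ and $W$. Hence, given $W$, the tuple $(G_{\mathcal A},L_{\mathcal B},X_{\mathcal D})$ and $X_{[r]^\tau}$ determine each other, and the two conditional entropies coincide at $r\alpha$.

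To show recoverability, I treat everything except $G_{\mathcal A},L_{\mathcal B},X_{\mathcal D}$ and $W$ as erased. Given $W$, all local parities of the other groups are known, since they are functions of $W$. The erased nodes are then the $r-|\mathcal D|$ information nodes of group $\tau$, the $\ell-|\mathcal B|$ local parities of group $\tau$, and the $g-|\mathcal A|$ global parities. Their total count is
\[
(r-|\mathcal D|)+(\ell-|\mathcal B|)+(g-|\mathcal A|)=g+\ell.
\]
By the optimal-distance erasure-recovery identity of this section (any $g+\ell$ erasures are recoverable), the erased information nodes are determined. The main point to check carefully is that this identity is stated with conditioning on all remaining nodes. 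Here the remaining nodes are exactly $W$, the other groups' local parities (functions of $W$), $G_{\mathcal A}$, $L_{\mathcal B}$ and $X_{\mathcal D}$. So conditioning on our set is equivalent, and the argument closes.
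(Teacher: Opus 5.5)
The paper gives no proof of this proposition. It imports the result from \cite{chopra2026bandwidth} and offers only the one-line intuition that, after conditioning on $X_{[k]\setminus[r]^\tau}$, the $\tau$-th local group together with the global parities forms an $(r+\ell+g,r)$ MDS array code.

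Your proof is correct, and it is a rigorous version of exactly that intuition. The following steps all check out:
\begin{enumerate}
\item The upper bound follows from subadditivity and the node-capacity bounds.
\item The reduction to $|\mathcal A|+|\mathcal B|+|\mathcal D|=r$ is valid: full conditional entropy of $r$ nodes forces full conditional entropy of every sub-collection.
\item The erased set has size exactly $(r-|\mathcal D|)+(\ell-|\mathcal B|)+(g-|\mathcal A|)=g+\ell$.
\item Conditioning on $(G_{\mathcal A},L_{\mathcal B},X_{\mathcal D},W)$ is equivalent to conditioning on all unerased nodes, because the other groups' local parities are functions of $W$. This is what lets you invoke the optimal-distance erasure identity from Section~\ref{sec3} verbatim.
\end{enumerate}
Your argument uses only the entropic properties listed in Section~\ref{sec3}. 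It needs no linearity and does not use $g\le r$.

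One cosmetic point concerns the enlargement step. You need $r-|\mathcal A|-|\mathcal B|-|\mathcal D|$ additional information nodes and have $r-|\mathcal D|$ available, which holds trivially. The inequality $r-|\mathcal D|\ge|\mathcal A|+|\mathcal B|$ that you wrote is true, but it is not the one the step relies on.
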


As an immediate consequence of Proposition~\ref{prop:lrc-conditional-mds}, we obtain the following corollary.

\begin{corollary}[]
\label{cor:lrc-global-uniform-independence}
    Let $\mathcal Q\subseteq[g]$ satisfy $|\mathcal Q|\le r$. Then, for every  $\tau\in[\mu]$, \[ H(G_{\mathcal Q})=|\mathcal Q|\alpha,
\qquad
I\left(G_{\mathcal Q};X_{[k]\setminus [r]^\tau}\right)=0 . \]
\end{corollary}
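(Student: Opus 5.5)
Both claims will follow from Proposition~\ref{prop:lrc-conditional-mds} applied with \(\mathcal B=\mathcal D=\emptyset\) and \(\mathcal A=\mathcal Q\); this is admissible because \(|\mathcal Q|\le r\). For any fixed \(\tau\in[\mu]\) it gives
\[
H\bigl(G_{\mathcal Q}\mid X_{[k]\setminus[r]^\tau}\bigr)=|\mathcal Q|\alpha .
\]

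To get the first claim, we combine this equality with the trivial upper bound. Since each global parity node stores \(\alpha\) symbols, we have \(H(G_i)\le\alpha\), and subadditivity of entropy gives \(H(G_{\mathcal Q})\le \sum_{i\in\mathcal Q}H(G_i)\le |\mathcal Q|\alpha\). Conditioning cannot increase entropy, so
\[
|\mathcal Q|\alpha=H\bigl(G_{\mathcal Q}\mid X_{[k]\setminus[r]^\tau}\bigr)\le H(G_{\mathcal Q})\le |\mathcal Q|\alpha .
\]
Every inequality in this chain is therefore an equality, which yields \(H(G_{\mathcal Q})=|\mathcal Q|\alpha\).

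For the second claim, we write
\[
I\bigl(G_{\mathcal Q};X_{[k]\setminus[r]^\tau}\bigr)=H(G_{\mathcal Q})-H\bigl(G_{\mathcal Q}\mid X_{[k]\setminus[r]^\tau}\bigr)=|\mathcal Q|\alpha-|\mathcal Q|\alpha=0 .
\]
Since \(\tau\) was arbitrary, both statements hold for every \(\tau\in[\mu]\).

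There is no real obstacle. The only points to check are that the hypothesis \(|\mathcal Q|\le r\) is exactly what makes Proposition~\ref{prop:lrc-conditional-mds} applicable with empty \(\mathcal B\) and \(\mathcal D\), and that the per-node bound \(H(G_i)\le\alpha\) closes the sandwich. The degenerate case \(\mathcal Q=\emptyset\) is trivially consistent, with both sides equal to zero.
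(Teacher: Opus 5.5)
Your proposal is correct and matches the paper's proof essentially step for step. Both apply Proposition~\ref{prop:lrc-conditional-mds} with $\mathcal B=\mathcal D=\emptyset$, then use subadditivity, the per-node capacity bound, and the fact that conditioning cannot increase entropy to obtain $H(G_{\mathcal Q})=|\mathcal Q|\alpha$ and hence zero mutual information.
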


\begin{proof}
    For any fixed \(\tau\in[\mu]\), Proposition~\ref{prop:lrc-conditional-mds} with \(\mathcal B=\mathcal D=\emptyset\) gives $H\!\left(G_{\mathcal Q}\mid X_{[k]\setminus [r]^\tau}\right)=|\mathcal Q|\alpha .$
    On the other hand, since each global parity node stores at most \(\alpha\) symbols, $H(G_{\mathcal Q})\le \sum_{i\in\mathcal Q}H(G_i) \le |\mathcal Q|\alpha .$
    Since conditioning cannot increase entropy, we must have $H(G_{\mathcal Q})= H\!\left(G_{\mathcal Q}\mid X_{[k]\setminus [r]^\tau}\right) = |\mathcal Q|\alpha .$
    Thus $I\!\left(G_{\mathcal Q};X_{[k]\setminus [r]^\tau}\right)=0.$
\end{proof}

Next, we recall another structural property of optimal-distance LRCs:
each global parity node is a function of all \(k\) information nodes.

\begin{proposition}[\cite{chopra2026bandwidth}]
\label{prop:lrc-global-dependence}
    For any optimal-distance \((k,g,r,\ell,\alpha)\)-LRC, each of the \(g\)
    global parity nodes depends nontrivially on every information node.
\end{proposition}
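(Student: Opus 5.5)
The plan is a contradiction argument that uses only the erasure-recovery identity for optimal distance and the uniform independence of the information nodes. Fix a global parity node \(G_i\) and an information node \(X_j\), with \(j\in[r]^\tau\) for some local group \(\tau\). Suppose \(G_i\) does not depend nontrivially on \(X_j\). This means \(G_i\) is a deterministic function of \(X_{[k]\setminus\{j\}}\), that is,
\[
H\bigl(G_i\mid X_{[k]\setminus\{j\}}\bigr)=0 .
\]
We will exhibit an erasure pattern of size \(g+\ell\) that must be recoverable by optimal distance, and show that it cannot be.

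The pattern erases \(X_j\), all \(\ell\) local parities \(L_{[\ell]^\tau}\) of group \(\tau\), and every global parity except \(G_i\). That is \(1+\ell+(g-1)=g+\ell\) erasures, so by the optimal-distance recoverability identity of Section~\ref{sec3},
\[
H\bigl(X_j \mid G_i,\,L_{[\mu\ell]\setminus[\ell]^\tau},\,X_{[k]\setminus\{j\}}\bigr)=0 .
\]
The conditioning variables reduce as follows.
\begin{itemize}
\item The local parities outside group \(\tau\) are functions of \(X_{[k]\setminus[r]^\tau}\subseteq X_{[k]\setminus\{j\}}\), so they can be dropped.
\item \(G_i\) is a function of \(X_{[k]\setminus\{j\}}\) by the contradiction hypothesis, so it can be dropped too.
\end{itemize}
Hence \(H(X_j\mid X_{[k]\setminus\{j\}})=0\). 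But independence and uniformity of the message blocks give
\[
H\bigl(X_j\mid X_{[k]\setminus\{j\}}\bigr)=\alpha>0,
\]
which is a contradiction. So every \(G_i\) depends on every \(X_j\).

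The main care point is the dependence notion. We must make precise that "does not depend nontrivially on \(X_j\)" means "is a function of the other information nodes," which is equivalent for deterministic functions of \(X_{[k]}\). We must also check that the erasure count is exactly \(g+\ell\) with all erased indices legitimately chosen. When \(g=1\), no other global parities are erased and the pattern has size \(\ell+1\le g+\ell\), so the argument still works. The case \(\ell=0\) is not in scope, since \(\delta\ge2\) forces \(\ell\ge1\); the argument never needs \(\ell\ge1\) anyway.

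An alternative route is through Proposition~\ref{prop:lrc-conditional-mds}. Conditioned on the information nodes outside group \(\tau\), the set \(\{G_i\}\cup L_{[\ell]^\tau}\cup X_{[r]^\tau\setminus\{j\}}\) has \(\ell+r\) nodes, so it determines the whole group when \(\ell+1\le r\). That route needs \(r\ge \ell+1\), so the direct erasure argument above is preferable.
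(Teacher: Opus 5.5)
Your argument is correct and complete. The paper does not prove this proposition itself; it imports it from \cite{chopra2026bandwidth}, so there is no in-text proof to compare against. Your erasure argument is a self-contained justification. It uses only three facts: any $g+\ell$ erasures are recoverable, local parities of other groups do not involve $X_j$, and the message blocks are independent.

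The entropy language is optional. Suppose $G_i$ ignored $X_j$, and take two messages differing only in $X_j$. Their codewords are distinct, yet they can differ only on $X_j$, $L_{[\ell]^\tau}$ and $G_{[g]\setminus\{i\}}$. That puts them at Hamming distance at most $g+\ell<d$, which is a contradiction.

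One correction to your closing remark: the route through Proposition~\ref{prop:lrc-conditional-mds} does not need $r\ge \ell+1$. Take $\mathcal A=\{i\}$, $\mathcal B=\emptyset$ and $\mathcal D=[r]^\tau\setminus\{j\}$, whose total size is exactly $r$. The proposition then gives $H(G_i,X_{\mathcal D}\mid X_{[k]\setminus[r]^\tau})=r\alpha$. Your hypothesis makes $G_i$ a function of $(X_{[k]\setminus[r]^\tau},X_{\mathcal D})$, which forces the same quantity to equal $(r-1)\alpha$. Preferring the direct route is still reasonable, because it avoids relying on another imported result whose own proof may use this one. It is just not justified by the reason you gave.
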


We now record a general entropy inequality on the conditional mutual information between functions of random variables.

\begin{lemma}[\cite{chopra2026bandwidth}]
\label{lem:conditional-mi-bound}
    Let \(Z_1,Z_2,\ldots,Z_n\) be a collection of random variables, and let \(\{f_i\}_{i\in[n]}\) be a collection of deterministic functions. For any pairwise disjoint subsets \(\mathcal{A},\mathcal{B},\mathcal{D}\subseteq[n]\) and any subsets \(\mathcal{A}'\subseteq\mathcal{A}\), \(\mathcal{B}'\subseteq\mathcal{B}\), if the random variables $f_i(Z_i), i\in(\mathcal{A}\cup\mathcal{B}) \setminus(\mathcal{A}'\cup\mathcal{B}')$
    are conditionally independent given \(f_{\mathcal{D}}(Z_{\mathcal{D}})\),
    then \[ I\!\left( f_{\mathcal{A}}(Z_{\mathcal{A}}); f_{\mathcal{B}}(Z_{\mathcal{B}}) \middle| f_{\mathcal{D}}(Z_{\mathcal{D}}) \right) \leq H\!\left( f_{\mathcal{A}'}(Z_{\mathcal{A}'}) \middle| f_{\mathcal{D}}(Z_{\mathcal{D}}) \right) + H\!\left( f_{\mathcal{B}'}(Z_{\mathcal{B}'}) \middle| f_{\mathcal{D}}(Z_{\mathcal{D}}) \right).\]
\end{lemma}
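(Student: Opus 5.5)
The plan is a direct chain-rule argument, splitting each side into the part covered by the conditional-independence hypothesis and the part that is not. Write $W\coloneqq f_{\mathcal D}(Z_{\mathcal D})$, and set
\[
U\coloneqq f_{\mathcal A\setminus\mathcal A'}(Z_{\mathcal A\setminus\mathcal A'}),\quad P\coloneqq f_{\mathcal A'}(Z_{\mathcal A'}),\quad V\coloneqq f_{\mathcal B\setminus\mathcal B'}(Z_{\mathcal B\setminus\mathcal B'}),\quad Q\coloneqq f_{\mathcal B'}(Z_{\mathcal B'}).
\]
Then $f_{\mathcal A}(Z_{\mathcal A})=(U,P)$ and $f_{\mathcal B}(Z_{\mathcal B})=(V,Q)$. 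The left-hand side of the statement is therefore $I(U,P;V,Q\mid W)$.

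\textbf{Step 1 (chain rule).} Expand the mutual information in the order $U$ then $P$ on the first argument, and $V$ then $Q$ on the second:
\[
I(U,P;V,Q\mid W)=I(U;V\mid W)+I(U;Q\mid V,W)+I(P;V,Q\mid U,W).
\]
This follows from $I(U,P;V,Q\mid W)=I(U;V,Q\mid W)+I(P;V,Q\mid U,W)$ and then splitting $I(U;V,Q\mid W)$ by the chain rule once more.

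\textbf{Step 2 (vanishing cross term).} By hypothesis, the variables $f_i(Z_i)$ with $i\in(\mathcal A\cup\mathcal B)\setminus(\mathcal A'\cup\mathcal B')$ are mutually conditionally independent given $W$. The index sets $\mathcal A\setminus\mathcal A'$ and $\mathcal B\setminus\mathcal B'$ are disjoint, since $\mathcal A\cap\mathcal B=\emptyset$. Grouping the components of $U$ and $V$ therefore gives that $U$ and $V$ are conditionally independent given $W$. Hence $I(U;V\mid W)=0$. This is the only place the hypothesis is used, and the main point to verify is exactly this: mutual conditional independence of the individual components implies independence of the two disjoint groups.

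\textbf{Step 3 (bounding the remaining terms).} Each remaining mutual information is at most a conditional entropy, and conditioning does not increase entropy:
\[
I(U;Q\mid V,W)\le H(Q\mid V,W)\le H(Q\mid W),\qquad I(P;V,Q\mid U,W)\le H(P\mid U,W)\le H(P\mid W).
\]
Substituting these and Step 2 into Step 1 yields
\[
I(U,P;V,Q\mid W)\le H(P\mid W)+H(Q\mid W).
\]
Since $P=f_{\mathcal A'}(Z_{\mathcal A'})$ and $Q=f_{\mathcal B'}(Z_{\mathcal B'})$, this is exactly the claimed bound.

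The degenerate cases, where some of $U,P,V,Q$ are empty, are covered by the same chain with the corresponding terms equal to zero. No step is a genuine obstacle; the only care needed is choosing the order of the chain-rule expansion so that the cross term $I(U;V\mid W)$ appears unconditioned on $P$ and $Q$.
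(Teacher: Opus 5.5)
Your proof is correct and complete. The chain-rule expansion isolates the single cross term $I(U;V\mid W)$, which vanishes because $(\mathcal A\setminus\mathcal A')\cup(\mathcal B\setminus\mathcal B')$ is exactly the index set covered by the hypothesis, and the remaining two terms are bounded by $H(Q\mid W)$ and $H(P\mid W)$.

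The paper gives no proof of its own here, since it imports the lemma from \cite{chopra2026bandwidth}, and your argument is the standard one. You are right to read the hypothesis as mutual (not merely pairwise) conditional independence: that reading is essential for the lemma to hold. It is also exactly what the paper verifies when it applies the lemma in Lemma~\ref{lem:lrc-averaged-function-bounds}, via the joint conditional entropy equalling the sum of the node capacities.
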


We next combine Proposition~\ref{prop:lrc-conditional-mds} with Lemma~\ref{lem:conditional-mi-bound} to obtain averaged mutual-information bounds for deterministic functions of the data stored in the nodes. In Section~\ref{sec4}, these functions will be instantiated as the data downloaded during conversion.

\begin{lemma}
\label{lem:lrc-averaged-function-bounds}
    Let $\mathcal C$ be an optimal-distance $(k,g,r,\ell,\alpha)$-LRC with $g\le r$. For each $j\in[k]$, let $f_j$ be a deterministic function of $X_j$; for each $a\in[\mu\ell]$, let $\varphi_a$ be a deterministic function of $L_a$; and for each $i\in[g]$, let $\psi_i$ be a deterministic function of $G_i$.

    For any fixed local group \(\tau\in[\mu]\), 
    \[ \begin{aligned} & I\!\left( \psi_{[g]}(G_{[g]});
    f_{[r]^\tau}(X_{[r]^\tau}), \varphi_{[\ell]^\tau}(L_{[\ell]^\tau}) \middle| X_{[k]\setminus[r]^\tau} \right) \\
    & \le \frac{g+\ell}{r+\ell} \left( \sum_{j\in[r]^\tau} H(f_j(X_j)) + \sum_{a\in[\ell]^\tau} H(\varphi_a(L_a)) \right).
    \end{aligned} \]
    Moreover,
    \[ \begin{aligned} & I\!\left( \psi_{[g]}(G_{[g]}); f_{[k]}(X_{[k]}), \varphi_{[\mu\ell]}(L_{[\mu\ell]}) \right) \\
    & \le \frac{g+\ell}{k+\mu\ell} \left( \sum_{j\in[k]} H(f_j(X_j)) + \sum_{a\in[\mu\ell]} H(\varphi_a(L_a)) \right).
    \end{aligned}\]
\end{lemma}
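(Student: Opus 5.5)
Both inequalities will be proved by a symmetrization (averaging) argument that combines Lemma~\ref{lem:conditional-mi-bound} with the MDS-type structure of Proposition~\ref{prop:lrc-conditional-mds}. Write $\mathcal D=X_{[k]\setminus[r]^\tau}$ for the conditioning variables.

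\textbf{First inequality.} Fix $\tau$ and consider the $r+\ell$ nodes of the $\tau$-th local group, namely $X_{[r]^\tau}$ and $L_{[\ell]^\tau}$. By Proposition~\ref{prop:lrc-conditional-mds}, conditioned on $\mathcal D$, these nodes together with $G_{[g]}$ form an $(r+\ell+g,r)$ MDS array code.

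\begin{enumerate}
\item Choose a subset $\mathcal T$ of the group nodes with $|\mathcal T|=g+\ell$. This is possible because $g\le r$.
\item The remaining $r-g$ group nodes, together with $G_{[g]}$, number exactly $r$. Proposition~\ref{prop:lrc-conditional-mds} then shows that, given $\mathcal D$, they have full conditional entropy $r\alpha$, so they are independent and uniform given $\mathcal D$.
\item In particular, the functions $\psi_{[g]}(G_{[g]})$ and the downloaded functions of the $r-g$ nodes outside $\mathcal T$ are mutually conditionally independent given $\mathcal D$. Functions of conditionally independent variables remain conditionally independent.
\item Apply Lemma~\ref{lem:conditional-mi-bound} with $\mathcal A=[g]$ (global parities), $\mathcal A'=\emptyset$, $\mathcal B=$ all group nodes and $\mathcal B'=\mathcal T$. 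The set $(\mathcal A\cup\mathcal B)\setminus(\mathcal A'\cup\mathcal B')$ consists of the globals plus the $r-g$ nodes outside $\mathcal T$, which are conditionally independent by the previous step. This gives
\[
I(\cdot;\cdot\mid\mathcal D)\le \sum_{v\in\mathcal T}H(f_v\mid\mathcal D)\le\sum_{v\in\mathcal T}H(f_v),
\]
where $f_v$ denotes the appropriate $f_j$ or $\varphi_a$. Conditioning reduces entropy, and subadditivity gives the last bound.
\item Average over all $\binom{r+\ell}{g+\ell}$ choices of $\mathcal T$. Each node lies in a fraction $\frac{g+\ell}{r+\ell}$ of them, which yields the first inequality.
\end{enumerate}

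\textbf{Second inequality.} Split the unconditional mutual information by the chain rule over local groups. Write $W_\tau$ for the downloaded functions of group $\tau$. Then
\[
I(\psi;W_{[\mu]})=\sum_{\tau}I(\psi;W_\tau\mid W_{<\tau}).
\]
Each summand is bounded by $I(\psi;W_\tau\mid X_{[k]\setminus[r]^\tau})$. To justify this, note that $W_{<\tau}$ is a function of $X_{[k]\setminus[r]^\tau}$, and that the two sides differ by
\[
I(\psi;X_{[k]\setminus[r]^\tau}\mid W_{\le\tau})\ \text{vs}\ I(\psi;X_{[k]\setminus[r]^\tau}\mid W_{<\tau}).
\]
The cleaner route is to write
\[
I(\psi;W_\tau\mid W_{<\tau})\le I(\psi,X_{\text{rest}};W_\tau\mid W_{<\tau}).
\]
Since $W_\tau$ is independent of $X_{\text{rest}}$ (the groups are independent), this equals
\[
I(\psi,X_{\text{rest}};W_\tau)=I(X_{\text{rest}};W_\tau)+I(\psi;W_\tau\mid X_{\text{rest}})=I(\psi;W_\tau\mid X_{\text{rest}}).
\]
Summing the first inequality over $\tau$ then gives the bound $\frac{g+\ell}{r+\ell}\sum(\ldots)$. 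Because $\frac{g+\ell}{r+\ell}=\frac{\mu(g+\ell)}{k+\mu\ell}$, this is weaker than the stated bound by a factor of $\mu$.

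To remove the factor $\mu$, symmetrize over all nodes at once instead of group by group. Apply Lemma~\ref{lem:conditional-mi-bound} unconditionally, with $\mathcal B'$ a set of $g+\ell$ nodes chosen among all $k+\mu\ell$ information and local nodes. This requires that the remaining nodes together with the globals be mutually independent. That holds when the remaining nodes contain at most $r-g$ nodes from each group (by Corollary~\ref{cor:lrc-global-uniform-independence} and Proposition~\ref{prop:lrc-conditional-mds}), but it fails if the complement of $\mathcal B'$ is taken across all groups.

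The factor therefore has to come from a different structure. The intended argument is likely that the dependence is concentrated: $\psi$ has only $g\alpha$ entropy, and the chain-rule sum telescopes. Taking a uniformly random group ordering and using $\mathcal B'$ sets of size $g+\ell$ spread across groups should give the averaged coefficient $\frac{g+\ell}{k+\mu\ell}$.

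\textbf{Main obstacle.} I expect the main difficulty to be exactly this second inequality: the independence hypothesis of Lemma~\ref{lem:conditional-mi-bound} must be verified for the complement sets, and the averaging over all $(g+\ell)$-subsets of the $k+\mu\ell$ nodes must be checked to preserve the needed independence via Proposition~\ref{prop:lrc-global-dependence}-type structure.
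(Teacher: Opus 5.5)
Your first inequality is correct and follows the paper's proof: charge $g+\ell$ nodes of the $\tau$-th group, use Proposition~\ref{prop:lrc-conditional-mds} to get conditional independence of $G_{[g]}$ and the remaining $r-g$ group nodes, apply Lemma~\ref{lem:conditional-mi-bound}, and average with factor $\binom{r+\ell-1}{g+\ell-1}/\binom{r+\ell}{g+\ell}=\frac{g+\ell}{r+\ell}$. The second inequality, however, is not proved. Your chain-rule argument is valid but gives only $\frac{g+\ell}{r+\ell}$. The proposed fix is speculative and points the wrong way: you ask for mutual independence of the uncharged nodes, and you suggest spreading the charged set across groups.

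The missing idea is to put all $g+\ell$ charged nodes in a single local group, and to use only joint independence of $G_{[g]}$ from all uncharged nodes. Fix $\tau$, and take $\mathcal D\subseteq[r]^\tau$, $\mathcal A\subseteq[\ell]^\tau$ with $|\mathcal D|+|\mathcal A|=g+\ell$. Set $\overline{\mathcal D}=[k]\setminus\mathcal D$ and $\overline{\mathcal A}=[\mu\ell]\setminus\mathcal A$. The entropy count runs in three steps.
\begin{enumerate}
\item Since $g+\ell$ erasures are correctable, $X_{\overline{\mathcal D}},L_{\overline{\mathcal A}},G_{[g]}$ determine the codeword. By Proposition~\ref{prop:lrc-total-entropy}, their joint entropy is $k\alpha$.
\item The other $\mu-1$ groups contribute $(\mu-1)r\alpha$, and the $r-g$ surviving nodes of group $\tau$ contribute $(r-g)\alpha$. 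Hence $H(X_{\overline{\mathcal D}},L_{\overline{\mathcal A}})=(k-g)\alpha$.
\item By Corollary~\ref{cor:lrc-global-uniform-independence}, $H(G_{[g]})=g\alpha$; this is where $g\le r$ is used.
\end{enumerate}
Therefore $I(G_{[g]};X_{\overline{\mathcal D}},L_{\overline{\mathcal A}})=0$. This passes to $\psi,f,\varphi$ by data processing, and the chain rule gives
\[
I\bigl(\psi_{[g]}(G_{[g]});f_{[k]}(X_{[k]}),\varphi_{[\mu\ell]}(L_{[\mu\ell]})\bigr)\le \sum_{j\in\mathcal D}H(f_j(X_j))+\sum_{a\in\mathcal A}H(\varphi_a(L_a)).
\]
Now average over all $\mu\binom{r+\ell}{g+\ell}$ triples $(\tau,\mathcal D,\mathcal A)$. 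Each node is charged $\binom{r+\ell-1}{g+\ell-1}$ times, which yields exactly $\frac{g+\ell}{\mu(r+\ell)}=\frac{g+\ell}{k+\mu\ell}$.

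Spreading the charged set across groups destroys the vanishing term. If the charged set meets group $\tau$ in $e_\tau$ nodes, the same count gives $I(G_{[g]};\text{rest})=\bigl(g-\sum_\tau\max\{e_\tau-\ell,0\}\bigr)\alpha$. This is zero only when all $g+\ell$ charged nodes lie in one group.
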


\begin{proof}
    Fix a local group $\tau\in[\mu]$, and choose subsets $\mathcal D\subseteq[r]^\tau, \mathcal A\subseteq[\ell]^\tau $ such that $|\mathcal D|+|\mathcal A|=g+\ell$. Since $g+\bigl|[r]^\tau\setminus\mathcal D\bigr| +\bigl|[\ell]^\tau\setminus\mathcal A\bigr|=r,$ Proposition~\ref{prop:lrc-conditional-mds}, together with the node-capacity bound, implies that the nodes in $G_{[g]}, X_{[r]^\tau\setminus\mathcal D}, L_{[\ell]^\tau\setminus\mathcal A}$ are conditionally independent given \(X_{[k]\setminus[r]^\tau}\). Hence, their deterministic functions are also conditionally independent given $X_{[k]\setminus[r]^\tau}$. Applying
    Lemma~\ref{lem:conditional-mi-bound}, we obtain \[ \begin{aligned} I\!\left( \psi_{[g]}(G_{[g]}); f_{[r]^\tau}(X_{[r]^\tau}), \varphi_{[\ell]^\tau}(L_{[\ell]^\tau}) \middle| X_{[k]\setminus[r]^\tau} \right) 
    &\le H\!\left( f_{\mathcal D}(X_{\mathcal D}), \varphi_{\mathcal A}(L_{\mathcal A}) \middle| X_{[k]\setminus[r]^\tau} \right)\\
    &\le \sum_{j\in\mathcal D}H(f_j(X_j)) + \sum_{a\in\mathcal A}H(\varphi_a(L_a)).
    \end{aligned}\]
    Averaging this inequality over all pairs of $\mathcal D,\mathcal A$ with $|\mathcal D|+|\mathcal A|=g+\ell$, each of the \(r+\ell\) nodes in the \(\tau\)-th local group appears in \(\binom{r+\ell-1}{g+\ell-1}\) pairs, while the total number of pairs is \(\binom{r+\ell}{g+\ell}\). Therefore, the averaging factor is $\frac{\binom{r+\ell-1}{g+\ell-1}} {\binom{r+\ell}{g+\ell}} = \frac{g+\ell}{r+\ell}$. This proves the first inequality.

    For the second inequality, fix a local group \(\tau\in[\mu]\), and choose subsets \(\mathcal D\subseteq[r]^\tau\) and \(\mathcal A\subseteq[\ell]^\tau\) such that $|\mathcal D|+|\mathcal A|=g+\ell$. Let $\overline{\mathcal D}=[k]\setminus\mathcal D,  \overline{\mathcal A}=[\mu\ell]\setminus\mathcal A$. Since the nodes \(X_{\mathcal D}\) and \(L_{\mathcal A}\) form an erasure pattern of size \(g+\ell\), the optimal-distance property implies that the remaining nodes \(X_{\overline{\mathcal D}}\), \(L_{\overline{\mathcal A}}\), and \(G_{[g]}\) determine the whole codeword. Hence, by Proposition~\ref{prop:lrc-total-entropy}, \[ H\!\left( X_{\overline{\mathcal D}}, L_{\overline{\mathcal A}}, G_{[g]} \right) = k\alpha.\]
    We also claim that
    \[ H\!\left( X_{\overline{\mathcal D}}, L_{\overline{\mathcal A}} \right) = (k-g)\alpha.\]
    Indeed, all local groups except the $\tau$-th one contribute $(\mu-1)r\alpha$ entropy, and the $r-g$ remaining local nodes in the $\tau$-th local group contribute $(r-g)\alpha$ entropy by Proposition~\ref{prop:lrc-conditional-mds}. Since $g\le r$, Corollary~\ref{cor:lrc-global-uniform-independence} gives $H(G_{[g]})=g\alpha.$
    Therefore,
    \[
    \begin{aligned} H\!\left( G_{[g]} \middle|X_{\overline{\mathcal D}},L_{\overline{\mathcal A}}\right) &= H\!\left( X_{\overline{\mathcal D}}, L_{\overline{\mathcal A}}, G_{[g]} \right) - H\!\left( X_{\overline{\mathcal D}}, L_{\overline{\mathcal A}} \right)\\
    &= k\alpha-(k-g)\alpha = g\alpha = H(G_{[g]}).
    \end{aligned}
    \]
    Hence
    \[
    I\!\left( G_{[g]}; X_{\overline{\mathcal D}},L_{\overline{\mathcal A}}\right)=0.
    \]
    By the data-processing inequality,
    \[
    I\!\left( \psi_{[g]}(G_{[g]}); f_{\overline{\mathcal D}}(X_{\overline{\mathcal D}}),  \varphi_{\overline{\mathcal A}}(L_{\overline{\mathcal A}})\right)=0.
    \]
    Thus, by the chain rule,
    \[
    \begin{aligned} I\!\left( \psi_{[g]}(G_{[g]}); f_{[k]}(X_{[k]}), \varphi_{[\mu\ell]}(L_{[\mu\ell]}) \right)
    & = I\!\left( \psi_{[g]}(G_{[g]}); f_{\mathcal D}(X_{\mathcal D}), \varphi_{\mathcal A}(L_{\mathcal A}) \middle| f_{\overline{\mathcal D}}(X_{\overline{\mathcal D}}), \varphi_{\overline{\mathcal A}}(L_{\overline{\mathcal A}}) \right)\\
    &\le H\!\left( f_{\mathcal D}(X_{\mathcal D}), \varphi_{\mathcal A}(L_{\mathcal A}) \right)\\
    &\le \sum_{j\in\mathcal D}H(f_j(X_j)) + \sum_{a\in\mathcal A}H(\varphi_a(L_a)).
    \end{aligned}
\]
Finally, average this inequality over all local groups \(\tau\in[\mu]\) and all choices of \(\mathcal D\subseteq[r]^\tau\), \(\mathcal A\subseteq[\ell]^\tau\) with \(|\mathcal D|+|\mathcal A|=g+\ell\). The total number of choices is $\mu\binom{r+\ell}{g+\ell}$, and each information or local parity node appears in $\binom{r+\ell-1}{g+\ell-1}$ choices. Therefore, the averaging factor is $\frac{\binom{r+\ell-1}{g+\ell-1}} {\mu\binom{r+\ell}{g+\ell}} = \frac{g+\ell}{\mu(r+\ell)} = \frac{g+\ell}{k+\mu\ell}$.
This proves the second inequality.
\end{proof}

\section{Lower Bounds on the Read Bandwidth}
\label{sec4}

In this section, we derive lower bounds on the read bandwidth of stable optimal-distance $(k^I=\lambda^F k^F,g^I,r,\ell;$ $\;k^F,g^F,r,\ell)$ $\text{-LRCC}$ in the global split regime. We first identify the node structure of stable conversions in this regime and introduce the random variables used to model the downloaded data. We then establish two entropy constraints on the downloaded data, which lead to the desired read-bandwidth lower bounds.

We first record a structural property of optimal-distance LRCCs in the
global split regime.

\begin{lemma}[]
\label{lem:split-new-retired}
For any optimal-distance \((k^I=\lambda^F k^F,g^I,r,\ell;\;k^F,g^F,r,\ell)\)-LRCC in the global split regime, the following hold:
\begin{enumerate}
    \item all initial global parity nodes are retired nodes;
    \item all final global parity nodes are new nodes;
    \item every unchanged initial local parity node must serve as a local  parity node for the same set of information nodes in the final codewords.
\end{enumerate}
\end{lemma}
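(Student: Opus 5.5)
All three items follow from Proposition~\ref{prop:lrc-global-dependence}, which says a global parity of an optimal-distance LRC depends nontrivially on every information node. We combine it with the fact that each final codeword encodes only the $k^F$ message blocks $X_{[k^F]^t}$, and by Remark~\ref{rem1} its information nodes are exactly these blocks. The common idea is that a node that is a deterministic function of one set of message blocks cannot play a role in a codeword that requires dependence on a different set.

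For item 1, suppose an initial global parity node $G^I_i$ were unchanged, so it appears as some node of the $t$-th final codeword. It cannot be an information node there, since those are the unchanged $X_j$, $j\in[k^F]^t$. Every other node of that final codeword is a deterministic function of $X_{[k^F]^t}$ alone. Since $\lambda^F\ge 2$, there is an information node $X_j$ with $j\notin[k^F]^t$. By Proposition~\ref{prop:lrc-global-dependence} applied to the initial code, $G^I_i$ depends nontrivially on $X_j$. Because the message blocks are independent and uniform, a function of $X_{[k^I]}$ that is a function of $X_{[k^F]^t}$ alone cannot depend nontrivially on $X_j$. This is a contradiction, so $G^I_i$ is retired.

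For item 2, a final global parity node of the $t$-th codeword that is not new must be an unchanged initial node. It is not an information node, since those occupy the information positions. By item 1 it is not an initial global parity. So it would be an initial local parity of some group $\tau$, hence a function of $X_{[r]^\tau}$ only. By Proposition~\ref{prop:lrc-global-dependence} applied to the final code, a final global parity depends nontrivially on all $k^F$ information nodes of its codeword. Since $\lambda^F\ge2$ and $r\mid k^F$, we have $k^F\ge r$, so we need $k^F>r$, i.e.\ $\mu^F\ge 2$, for a contradiction.

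The case $\mu^F=1$ is the main obstacle for item 2. Here the final code has a single local group, and a local parity of that group could a priori coincide with a global parity. To handle it, we use optimal distance instead. Suppose the final code has the same function of $X_{[r]^\tau}$ as both a local node and a global node; this can happen only if an initial local parity is reused once in each role. Erasing those two identical copies together with $g^F+\ell-1$ further nodes gives an erasure pattern of size $g^F+\ell+1$, but the information loss is at most $g^F+\ell$ nodes' worth. We can then adapt Proposition~\ref{prop:lrc-conditional-mds}: the $r+\ell+g^F$ nodes form an MDS array code, so any $r$ of them are independent. Two equal nodes among them contradict this once $r\ge2$; if $r=1$ we argue directly from the MDS property. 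If the only unchanged candidate is used solely as a global parity and not as a local one, we need a counting argument on stability instead. Concretely, we would show that an initial local parity with dependence on all $r$ nodes of its group has entropy $\alpha$ and lies in the span structure of exactly one role. We expect to settle this via the conditional MDS structure.

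For item 3, let $L^I_a$ belong to initial group $\tau$ and be unchanged. It is a function of $X_{[r]^\tau}$, and by Proposition~\ref{prop:lrc-conditional-mds} it depends nontrivially on every node of that group; otherwise it would be a function of at most $r-1$ of those $X$'s, contradicting the independence of any $r$ nodes of the conditional MDS code. By item 2 it is not a final global parity, so it is a final local parity of some final group with information set $\mathcal I$. It is then a function of $X_{\mathcal I}$ only, so $[r]^\tau\subseteq\mathcal I$. Since $|\mathcal I|=r$, we get $\mathcal I=[r]^\tau$.
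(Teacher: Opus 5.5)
Your items 1 and 3, and item 2 when $\mu^F\ge 2$ (i.e.\ $k^F>r$), are correct and match the paper's argument. The genuine gap is item 2 when $\mu^F=1$, i.e.\ $k^F=r$. Your sketch there (an optimal-distance contradiction, or a ``counting argument on stability'') cannot be completed, because in this case there is nothing contradictory to find.

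With $\mu^F=1$, each final codeword has a single local group. Being optimal-distance, it is an $[r+\ell+g^F,r,\alpha]$ MDS array code all of whose parities are functions of $X_{[k^F]^t}$ only. So the LRC definition does not distinguish local from global parities. For instance, take any conversion with $\mu^F=1$ that preserves all local parities, and in every final codeword swap the labels of one inherited local parity and one new global parity. The final code is still an optimal-distance LRC, since any $r+\ell$ nodes of an MDS code form a locality set with distance $\ell+1$. The number of unchanged nodes does not change, yet now a final global parity is unchanged. So item 2 is literally false here without a convention. Stability, which is not even a hypothesis of the lemma, cannot help, and the ``two identical copies'' case you analyze is not the relevant one.

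The missing idea is to treat this case as a labeling convention, which is what the paper does. By item 1, no initial global parity is inherited. An initial local parity of a group whose information set differs from $[k^F]^t$ depends nontrivially on some information node outside $[k^F]^t$, so it cannot appear in the $t$-th final codeword. Hence the inherited parities of the $t$-th final codeword lie among the $\ell$ initial local parities of the group with information set $[k^F]^t$. Since there are at most $\ell$ of them, you may designate them as the final local parities and all remaining (necessarily new) parities as the global ones. Under this labeling item 2 holds, and your item 3, which relies on item 2, then goes through unchanged.
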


\begin{proof}
    We first prove that all initial global parity nodes are retired in the global split regime. Assume to the contrary that some initial global parity node is not retired. Then this node is unchanged and hence appears as a node in one of the final codewords, say the \(t\)-th final codeword for some \(t\in[\lambda^F]\). Since the final codes are systematic and the information nodes have already been identified with the corresponding message blocks by Remark~\ref{rem1}, this unchanged node cannot be a final information node. Therefore, it must serve as either a local parity node or a global parity node of the \(t\)-th final codeword. In either case, it is a deterministic function of only the \(k^F\) information nodes indexed by \([k^F]^t\). On the other hand, by Proposition~\ref{prop:lrc-global-dependence}, every global parity node of the initial codeword depends nontrivially on every one of the \(k^I\) information nodes. Since \(k^I=\lambda^F k^F\) and \(\lambda^F\ge 2\), there exist initial information nodes outside \([k^F]^t\). This contradicts the fact that the unchanged node is a function only of the information nodes indexed by \([k^F]^t\). Therefore, all initial global parity nodes are retired nodes.

    We next show that all final global parity nodes are new. First suppose \(k^F=r\). Then each final codeword has only one local group and, being optimal-distance, is an \([r+\ell+g^F,r,\alpha]\) MDS array code. Hence the distinction between its local and global parity nodes is only a matter of labeling. Fix the \(t\)-th final codeword. By the first part, no inherited parity node can be an initial global parity node. Moreover, an initial local parity node from a different local group is a nonconstant function of information nodes outside \([k^F]^t\), whereas every node in the \(t\)-th final codeword is a function of \(X_{[k^F]^t}\). Thus the only possible inherited parity nodes are among the \(\ell\) initial local parity nodes associated with \([k^F]^t\). We label these inherited parity nodes as local parity nodes and the remaining parity nodes as global parity nodes. Under this labeling, all final global parity nodes are new.

    It remains to consider the case \(k^F>r\). Suppose, for contradiction, that some final global parity node is unchanged, say in the \(t\)-th final codeword. Since the information nodes are preserved as the corresponding information nodes by Remark~\ref{rem1}, this unchanged node must have originated from an initial parity node. If it was an initial local parity node, then it is a function of at most \(r\) information nodes. This contradicts Proposition~\ref{prop:lrc-global-dependence}, because every global parity node of the \(t\)-th final codeword depends nontrivially on all \(k^F\) information nodes, and \(k^F>r\). If it was an initial global parity node, then it is retired by the first part of the proof. Hence no final global parity node can be unchanged when \(k^F>r\).

    Combining the two cases, we conclude that all final global parity nodes are new nodes.

    Finally, consider an unchanged initial local parity node. Since all final global parity nodes are new, this node must appear as a local parity node in some final codeword. Suppose that it is associated with the initial local group \(\tau\). Then it is a deterministic function only of the information nodes \(X_{[r]^\tau}\). If it were used as a local parity node for another set of \(r\) information nodes in the final codewords, then it would also be a deterministic function only of that set. By the local MDS property, a local parity node depends nontrivially on every information node in its own local group. Hence these two sets of information nodes must coincide. Therefore every unchanged initial local parity node must remain a local parity node associated with the same set of information nodes in the final codewords.
\end{proof}

By Remark~\ref{rem1}, the \(k^I\) information nodes are reused as the corresponding information nodes in the final codewords. Lemma~\ref{lem:split-new-retired} further shows that all initial global parity nodes are retired, all final global parity nodes are new, and every unchanged initial local parity node must remain a local parity node associated with the same set of \(r\) information nodes. Hence, in any optimal-distance LRCC conversion in the global split regime, the number of unchanged nodes is at most $k^I+\mu^I\ell$. Moreover, equality holds if and only if every initial local parity node is preserved; equivalently, if and only if every local group is preserved.

For the global split parameters considered in this paper, the above upper bound is attainable. For instance, the construction of Maturana and Rashmi~\cite{maturana2023locally} preserves all information nodes and all local parity nodes. Hence, a conversion is stable if and only if it attains the upper bound \(k^I+\mu^I\ell\) on the number of unchanged nodes, or equivalently, if and only if every local group is preserved. In the sequel, we therefore consider stable optimal-distance LRCCs in this sense: all local groups are preserved, all initial global parity nodes are retired, and all final global parity nodes are new.

Following the notation in~\cite{singhvi2025tight,chopra2026bandwidth}, we use global indexing for all information nodes, local groups, local parity nodes, and global parity nodes. Set \(\mu=\mu^F=k^F/r\). Then \(\mu^I=\lambda^F\mu\). The initial codeword contains information nodes indexed by \([\lambda^F k^F]\), local groups indexed by \([\lambda^F \mu]\), local parity nodes indexed by \([\lambda^F \mu\ell]\), and initial global parity nodes indexed by \([g^I]\).

For each \(t\in[\lambda^F]\), the \(t\)-th final codeword contains information nodes indexed by \([k^F]^t\), local groups indexed by \([\mu]^t\), local parity nodes indexed by \([\mu\ell]^t\), and final global parity nodes indexed by \([g^F]^t\). For a local group \(\tau\in[\lambda^F \mu]\), we denote its information nodes by \([r]^\tau\) and its local parity nodes by \([\ell]^\tau\).

We use \(X_j\), \(j\in[\lambda^F k^F]\), to denote the random variable stored in the \(j\)-th information node, and \(L_a\), \(a\in[\lambda^F \mu\ell]\), to denote the random variable stored in the \(a\)-th local parity node. The random variable stored in the \(i\)-th initial global parity node is denoted by \(G_i^I\), \(i\in[g^I]\), while the random variable stored in the \(i\)-th final global parity node is denoted by \(G_i^F\), \(i\in[\lambda^F g^F]\). For an index set \(\mathcal S\) in the appropriate ambient set, we write \(X_{\mathcal S}\), \(L_{\mathcal S}\), \(G_{\mathcal S}^I\), and \(G_{\mathcal S}^F\) for the corresponding collections of random variables.

We now model the data downloaded by the conversion coordinator. For each information node $j\in[\lambda^F k^F]$, let $V_j$ be the random variable representing the data downloaded from $X_j$. For each initial global parity node $i\in[g^I]$, let $U_i$ be the data downloaded from \(G_i^I\). For each local parity node $a\in[\lambda^F \mu\ell]$, let $W_a$ be the data downloaded from \(L_a\). Since the downloaded data are deterministic functions of the corresponding stored data, we have
\[ \label{eq:download-read-relation}
    H(V_j\mid X_j)=0,\qquad H(U_i\mid G_i^I)=0,\qquad H(W_a\mid L_a)=0. \]
Let \(\beta_j\), \(\sigma_i\), and \(\delta_a\) denote the numbers of \(\mathbb F_q\)-symbols read from the \(j\)-th information node, the \(i\)-th initial global parity node, and the \(a\)-th local parity node, respectively. Then \[ H(V_j)\le \beta_j\le \alpha,\qquad H(U_i)\le \sigma_i\le \alpha,\qquad H(W_a)\le \delta_a\le \alpha. \]

Since all final global parity nodes are new in a stable conversion, the conversion coordinator must compute them from the downloaded data. Hence we have the conversion-coordinator property
\begin{equation}\label{eq:conver_coor_prop}
    H\!\left( G^F_{[\lambda^F g^F]} \mid V_{[\lambda^F k^F]},U_{[g^I]},W_{[\lambda^F \mu\ell]} \right)=0.
\end{equation}

The read bandwidth $\gamma_R$ is the total number of \(\mathbb F_q\)-symbols read during conversion, namely
\[
    \gamma_R:= \sum_{j\in[\lambda^F k^F]}\beta_j + \sum_{i\in[g^I]}\sigma_i+\sum_{a\in[\lambda^F \mu\ell]}\delta_a.
\]

Before deriving the main entropy constraints, we record a simple consequence of the independence of the final codewords. Since different final codewords are functions of disjoint sets of information nodes, the corresponding final global parity nodes and any deterministic functions of their information and local parity nodes are mutually independent across final codewords. This gives the following conditional-entropy decomposition.

\begin{proposition}
\label{prop:final-codeword-decomposition}
    For each information node $j\in[\lambda^F k^F]$, let $f_j$ be a deterministic function of $X_j$. For each local parity node $a\in[\lambda^F \mu\ell]$, let $\varphi_a$ be a deterministic function of $L_a$.

    For $\mathcal J\subseteq[\lambda^F]$, define
    \[
    [k^F]^{\mathcal J}:=\bigcup_{t\in\mathcal J}[k^F]^t, \qquad [\mu\ell]^{\mathcal J}:=\bigcup_{t\in\mathcal J}[\mu\ell]^t.
    \]
    For $\mathcal S\subseteq[\lambda^F]$, define
    \[
    [g^F]^{\mathcal S}:=\bigcup_{t\in\mathcal S}[g^F]^t.
    \]
    Then, for any $\mathcal S\subseteq\mathcal J\subseteq[\lambda^F]$,
    \[
    \begin{aligned}
    &H\!\left(G^F_{[g^F]^{\mathcal S}} \middle| f_{[k^F]^{\mathcal J}}(X_{[k^F]^{\mathcal J}}), \varphi_{[\mu\ell]^{\mathcal J}}(L_{[\mu\ell]^{\mathcal J}}) \right)\\
    &= \sum_{t\in\mathcal S} H\!\left( G^F_{[g^F]^t} \middle| f_{[k^F]^t}(X_{[k^F]^t}), \varphi_{[\mu\ell]^t}(L_{[\mu\ell]^t}) \right).
    \end{aligned}
    \]
\end{proposition}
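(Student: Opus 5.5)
The plan is to exploit independence across final codewords. For each $t\in[\lambda^F]$, define the block random variable
\[
B_t:=\bigl(X_{[k^F]^t},L_{[\mu\ell]^t},G^F_{[g^F]^t}\bigr).
\]
Each $L_a$ with $a\in[\mu\ell]^t$ is a function of $X_{[k^F]^t}$, since the final local groups are preserved and $[\mu\ell]^t$ indexes the local parities of groups in $[\mu]^t$. Each $G^F_i$ with $i\in[g^F]^t$ is a function of $X_{[k^F]^t}$, since it is a global parity of the $t$-th final codeword. Hence $B_t$ is a deterministic function of $X_{[k^F]^t}$. The message blocks $X_j$ are mutually independent, and the sets $[k^F]^t$ are disjoint. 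Therefore $B_1,\ldots,B_{\lambda^F}$ are mutually independent.

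Next, set
\[
Y_t:=\bigl(f_{[k^F]^t}(X_{[k^F]^t}),\varphi_{[\mu\ell]^t}(L_{[\mu\ell]^t})\bigr)
\]
and $Z_t:=G^F_{[g^F]^t}$. Both are functions of $B_t$, so the pairs $(Z_t,Y_t)$, $t\in\mathcal J$, are mutually independent. The conditioning variable on the left-hand side is exactly $Y_{\mathcal J}=(Y_t)_{t\in\mathcal J}$, and the target is $Z_{\mathcal S}$.

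The key computation applies the chain rule:
\[
H(Z_{\mathcal S}\mid Y_{\mathcal J})=H(Z_{\mathcal S},Y_{\mathcal J})-H(Y_{\mathcal J}).
\]
By independence, $H(Z_{\mathcal S},Y_{\mathcal J})=\sum_{t\in\mathcal S}H(Z_t,Y_t)+\sum_{t\in\mathcal J\setminus\mathcal S}H(Y_t)$, using $\mathcal S\subseteq\mathcal J$. Similarly, $H(Y_{\mathcal J})=\sum_{t\in\mathcal J}H(Y_t)$. Subtracting, the terms indexed by $\mathcal J\setminus\mathcal S$ cancel and we obtain
\[
\sum_{t\in\mathcal S}\bigl(H(Z_t,Y_t)-H(Y_t)\bigr)=\sum_{t\in\mathcal S}H(Z_t\mid Y_t),
\]
which is the claim.

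The only step needing care is justifying that $B_t$ depends only on $X_{[k^F]^t}$. Specifically, the local parity nodes indexed by $[\mu\ell]^t$ must be functions of information nodes in $[k^F]^t$ alone. This follows from the stability convention adopted above: all local groups are preserved, and the global indexing places local groups $[\mu]^t$, whose information nodes are $[r]^\tau\subseteq[k^F]^t$, inside the $t$-th final codeword. Alternatively, the same property follows directly from the definition of the $t$-th final LRC. Everything else is routine entropy bookkeeping, and all entropies involved are finite because the alphabets are finite.
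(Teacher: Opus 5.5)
Your proof is correct and follows essentially the same route as the paper: both rest on the mutual independence of the per-codeword blocks $(G^F_{[g^F]^t}, f_{[k^F]^t}(X_{[k^F]^t}), \varphi_{[\mu\ell]^t}(L_{[\mu\ell]^t}))$, $t\in[\lambda^F]$, which you justify exactly as the paper does. The only difference is bookkeeping. The paper expands $H(G^F_{[g^F]^{\mathcal S}}\mid\cdot)$ by the chain rule over an enumeration of $\mathcal S$ and drops the conditioning on other codewords term by term. You instead write the conditional entropy as a difference of joint entropies and let the $\mathcal J\setminus\mathcal S$ terms cancel, which is slightly cleaner.
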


\begin{proof}
    Since all nodes in the \(t\)-th final codeword are deterministic functions of the information nodes \(X_{[k^F]^t}\), and the sets \([k^F]^t\), \(t\in[\lambda^F]\), are disjoint and consist of independent information nodes, the random vectors
    \[
    \left( G^F_{[g^F]^t}, f_{[k^F]^t}(X_{[k^F]^t}), \varphi_{[\mu\ell]^t}(L_{[\mu\ell]^t}) \right),\qquad t\in[\lambda^F]
    \]
    are mutually independent. Enumerate \(\mathcal S=\{t_1,\ldots,t_m\}\). By the chain rule,
    \[
    \begin{aligned}
    &H\!\left( G^F_{[g^F]^{\mathcal S}} \middle| f_{[k^F]^{\mathcal J}}(X_{[k^F]^{\mathcal J}}), \varphi_{[\mu\ell]^{\mathcal J}}(L_{[\mu\ell]^{\mathcal J}}) \right)\\
    &= \sum_{b=1}^{m} H\!\left( G^F_{[g^F]^{t_b}} \middle| f_{[k^F]^{\mathcal J}}(X_{[k^F]^{\mathcal J}}), \varphi_{[\mu\ell]^{\mathcal J}}(L_{[\mu\ell]^{\mathcal J}}), G^F_{[g^F]^{t_1}},\ldots,G^F_{[g^F]^{t_{b-1}}} \right).
    \end{aligned}
    \]
    For each \(b \in[m]\), since \(\mathcal S\subseteq\mathcal J\), the conditioning includes the random variables \(f_{[k^F]^{t_b}}(X_{[k^F]^{t_b}})\) and \(\varphi_{[\mu\ell]^{t_b}}(L_{[\mu\ell]^{t_b}})\). All other conditioned variables come from final codewords different from the \(t_b\)-th one, and are independent of the \(t_b\)-th final codeword. Hence
    \[
    \begin{aligned}
    &H\!\left( G^F_{[g^F]^{t_b}} \middle| f_{[k^F]^{\mathcal J}}(X_{[k^F]^{\mathcal J}}), \varphi_{[\mu\ell]^{\mathcal J}}(L_{[\mu\ell]^{\mathcal J}}), G^F_{[g^F]^{t_1}},\ldots,G^F_{[g^F]^{t_{b-1}}} \right)\\
    &= H\!\left( G^F_{[g^F]^{t_b}} \middle| f_{[k^F]^{t_b}}(X_{[k^F]^{t_b}}), \varphi_{[\mu\ell]^{t_b}}(L_{[\mu\ell]^{t_b}}) \right).
    \end{aligned}
    \]
    Summing over $b\in[m]$ gives the desired equality.
\end{proof}

We next establish the first entropy constraint on the downloaded data, which will serve as the key ingredient for the first read-bandwidth lower bound.

\begin{lemma} \label{lem: bound1}
    Suppose \(g^F\le r\). Then, for any stable optimal-distance \((k^I=\lambda^F k^F,g^I,r,\ell;\;k^F,g^F,r,\ell)\)-LRCC in the global split regime, it holds that
    \[
    H(U_{[g^I]})+\frac{g^F+\ell}{k^F+\mu\ell}\left(\sum_{j\in[\lambda^F k^F]} H(V_j)+\sum_{a\in[\lambda^F \mu\ell]} H(W_a)\right)\ge\lambda^F g^F\alpha.
    \]
\end{lemma}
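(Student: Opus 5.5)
We start from the conversion-coordinator property \eqref{eq:conver_coor_prop} and lower bound the total entropy of the final global parity nodes. Write $V=V_{[\lambda^F k^F]}$, $W=W_{[\lambda^F\mu\ell]}$, $U=U_{[g^I]}$.

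\emph{Entropy of the final global parities.} Since each final codeword is an optimal-distance $(k^F,g^F,r,\ell)$-LRC with $g^F\le r$, Corollary~\ref{cor:lrc-global-uniform-independence} applied to each final codeword gives $H(G^F_{[g^F]^t})=g^F\alpha$. The final codewords depend on disjoint sets of independent information nodes, so
\[
H\bigl(G^F_{[\lambda^F g^F]}\bigr)=\lambda^F g^F\alpha .
\]

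\emph{Splitting off the contribution of $U$.} By \eqref{eq:conver_coor_prop},
\[
\lambda^F g^F\alpha = I\bigl(G^F_{[\lambda^F g^F]};V,U,W\bigr)
= I\bigl(G^F_{[\lambda^F g^F]};V,W\bigr)+I\bigl(G^F_{[\lambda^F g^F]};U\mid V,W\bigr).
\]
The second term is at most $H(U\mid V,W)\le H(U_{[g^I]})$, which produces the first term of the claimed inequality.

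\emph{Bounding the $(V,W)$ contribution.} Take $\mathcal S=\mathcal J=[\lambda^F]$ in Proposition~\ref{prop:final-codeword-decomposition}, with $f_j$ and $\varphi_a$ the download functions. Together with the product structure of the unconditional entropies, this gives
\[
I\bigl(G^F_{[\lambda^F g^F]};V,W\bigr)=\sum_{t\in[\lambda^F]} I\bigl(G^F_{[g^F]^t};V_{[k^F]^t},W_{[\mu\ell]^t}\bigr).
\]
Here $G^F_{[g^F]^t}$ are the global parity nodes themselves, so each $\psi_i$ is the identity. For each $t$, the second inequality of Lemma~\ref{lem:lrc-averaged-function-bounds} applies to the $t$-th final code, which is an optimal-distance LRC with $g^F\le r$. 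It bounds the $t$-th summand by
\[
\frac{g^F+\ell}{k^F+\mu\ell}\Bigl(\sum_{j\in[k^F]^t}H(V_j)+\sum_{a\in[\mu\ell]^t}H(W_a)\Bigr).
\]
One technical point: $V_j$ and $W_a$ are functions of $X_j$ and $L_a$, and $L_a$ is the same random variable in the initial and the $t$-th final code because local groups are preserved. So the hypotheses of the lemma hold verbatim. Summing over $t$ and combining with the previous step gives the claim.

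\emph{Main obstacle.} The delicate step is the decomposition in the third step, i.e., justifying that the mutual information splits additively across final codewords. Proposition~\ref{prop:final-codeword-decomposition} provides this for the conditional entropies. Independence across $t$ of the tuples $\bigl(G^F_{[g^F]^t},V_{[k^F]^t},W_{[\mu\ell]^t}\bigr)$ also makes the unconditional entropy additive. Subtracting the two then yields the additive split of the mutual information.
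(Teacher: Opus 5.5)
Your proposal is correct and is essentially the paper's argument, rearranged. The paper writes $H(U_{[g^I]})\ge I\bigl(U_{[g^I]};G^F_{[\lambda^F g^F]}\mid V_{[\lambda^F k^F]},W_{[\lambda^F\mu\ell]}\bigr)=H\bigl(G^F_{[\lambda^F g^F]}\mid V_{[\lambda^F k^F]},W_{[\lambda^F\mu\ell]}\bigr)$, decomposes this term via Proposition~\ref{prop:final-codeword-decomposition} into $\sum_t\bigl(g^F\alpha-I(G^F_{[g^F]^t};V_{[k^F]^t},W_{[\mu\ell]^t})\bigr)$, and applies the second inequality of Lemma~\ref{lem:lrc-averaged-function-bounds} to each summand; your chain-rule split of $H(G^F_{[\lambda^F g^F]})=\lambda^F g^F\alpha$ into the $(V,W)$ part and the $U$ part uses the same identity and the same lemmas.
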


\begin{proof}
    By the conversion coordinator property~\eqref{eq:conver_coor_prop}, all final global parity nodes are determined by the downloaded data \(U_{[g^I]},V_{[\lambda^F k^F]},W_{[\lambda^F \mu\ell]}\). Hence
    \[
    \begin{aligned}
    H(U_{[g^I]}) &\ge H\!\left( U_{[g^I]} \mid V_{[\lambda^F k^F]},W_{[\lambda^F \mu\ell]} \right)\\
    &\ge I\!\left( U_{[g^I]}; G^F_{[\lambda^F g^F]} \mid V_{[\lambda^F k^F]},W_{[\lambda^F \mu\ell]} \right)\\
    &= H\!\left( G^F_{[\lambda^F g^F]} \mid V_{[\lambda^F k^F]},W_{[\lambda^F \mu\ell]} \right).
    \end{aligned}
    \]
    By Proposition~\ref{prop:final-codeword-decomposition}, applied with \(\mathcal S=\mathcal J=[\lambda^F]\), \(f_j(X_j)=V_j\), and \(\varphi_a(L_a)=W_a\), we have
    \[
    \begin{aligned}
    H\!\left( G^F_{[\lambda^F g^F]} \mid V_{[\lambda^F k^F]},W_{[\lambda^F \mu\ell]} \right)
    = \sum_{t\in[\lambda^F]} H\!\left( G^F_{[g^F]^t} \mid V_{[k^F]^t},W_{[\mu\ell]^t} \right).
    \end{aligned}
    \]
    For each \(t\in[\lambda^F]\), since \(g^F\le r\), Corollary~\ref{cor:lrc-global-uniform-independence} gives
    \[
    H\!\left(G^F_{[g^F]^t}\right)=g^F\alpha .
    \]
    Therefore,
    \[
    \begin{aligned}
    H\!\left( G^F_{[g^F]^t} \mid V_{[k^F]^t},W_{[\mu\ell]^t} \right)
    = g^F\alpha - I\!\left( G^F_{[g^F]^t}; V_{[k^F]^t},W_{[\mu\ell]^t} \right).
    \end{aligned}
    \]
    Applying Lemma~\ref{lem:lrc-averaged-function-bounds} to the \(t\)-th final LRC, with \(\psi_i(G_i^F)=G_i^F\), \(f_j(X_j)=V_j\), and \(\varphi_a(L_a)=W_a\), yields
    \[
    \begin{aligned}
    I\!\left( G^F_{[g^F]^t}; V_{[k^F]^t},W_{[\mu\ell]^t} \right)
    \le \frac{g^F+\ell}{k^F+\mu\ell} \left( \sum_{j\in[k^F]^t}H(V_j) + \sum_{a\in[\mu\ell]^t}H(W_a) \right).
    \end{aligned}
    \]
    Combining the above inequalities gives
    \[
    \begin{aligned}
    H(U_{[g^I]}) &\ge \lambda^F g^F\alpha
    - \frac{g^F+\ell}{k^F+\mu\ell} \sum_{t\in[\lambda^F]} \left( \sum_{j\in[k^F]^t}H(V_j) + \sum_{a\in[\mu\ell]^t}H(W_a) \right)\\
    &= \lambda^F g^F\alpha - \frac{g^F+\ell}{k^F+\mu\ell} \left( \sum_{j\in[\lambda^F k^F]}H(V_j) + \sum_{a\in[\lambda^F \mu\ell]}H(W_a) \right),
    \end{aligned}
    \]
    where the last equality follows because \(\{[k^F]^t:t\in[\lambda^F]\}\) partitions \([\lambda^F k^F]\) and \(\{[\mu\ell]^t:t\in[\lambda^F]\}\) partitions \([\lambda^F \mu\ell]\). Rearranging gives the desired inequality.
\end{proof}

We now derive the first read-bandwidth lower bound from Lemma~\ref{lem: bound1}.

\begin{theorem} \label{thm:bound1}
    Suppose \(g^F\le r\). For any stable optimal-distance \((k^I=\lambda^F k^F,g^I,r,\ell;\;k^F,g^F,r,\ell)\)-LRCC in the global split regime, the read bandwidth satisfies
    \[
    \gamma_R \ge \lambda^F\frac{g^F}{g^F+\ell}(k^F+\mu\ell)\alpha - g^I\left(\frac{k^F+\mu\ell}{g^F+\ell}-1\right)\alpha .
    \]
\end{theorem}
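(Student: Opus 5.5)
The plan is to derive Theorem~\ref{thm:bound1} directly from the entropy constraint in Lemma~\ref{lem: bound1}, combined with the per-node capacity bounds on the downloaded data.

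First, write $c\coloneqq \frac{g^F+\ell}{k^F+\mu\ell}$. Since $g^F\le r$, and $k^F+\mu\ell=\mu(r+\ell)\ge r+\ell\ge g^F+\ell$, we have $0<c\le 1$. Using $H(V_j)\le\beta_j$, $H(W_a)\le\delta_a$ and subadditivity $H(U_{[g^I]})\le\sum_{i}H(U_i)\le\sum_i\sigma_i$, Lemma~\ref{lem: bound1} gives
\[
\sum_{i\in[g^I]}\sigma_i + c\left(\sum_{j}\beta_j+\sum_a\delta_a\right)\ge \lambda^F g^F\alpha .
\]
Denote $\Sigma_U=\sum_i\sigma_i$ and $\Sigma_{VW}=\sum_j\beta_j+\sum_a\delta_a$, so that $\gamma_R=\Sigma_U+\Sigma_{VW}$.

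Second, eliminate $\Sigma_{VW}$. From the displayed inequality, $\Sigma_{VW}\ge c^{-1}(\lambda^F g^F\alpha-\Sigma_U)$. Hence
\[
\gamma_R\ge \Sigma_U+\frac{1}{c}\bigl(\lambda^F g^F\alpha-\Sigma_U\bigr)=\frac{\lambda^F g^F}{c}\alpha-\Sigma_U\left(\frac1c-1\right).
\]
Since $c\le1$, the coefficient $\frac1c-1$ is nonnegative. The right-hand side is therefore decreasing in $\Sigma_U$, and it is minimized at the largest admissible value of $\Sigma_U$. The constraint $\sigma_i\le\alpha$ gives $\Sigma_U\le g^I\alpha$. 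Substituting this yields
\[
\gamma_R\ge \lambda^F\frac{g^F}{g^F+\ell}(k^F+\mu\ell)\alpha-g^I\left(\frac{k^F+\mu\ell}{g^F+\ell}-1\right)\alpha,
\]
which is the claimed bound.

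The argument is mostly routine. The only point requiring care is the sign of $\frac1c-1$, which is exactly where $g^F\le r$ (together with $\mu\ge1$) enters. One should also check that the bound of Lemma~\ref{lem: bound1} is applied with $H(U_{[g^I]})$ bounded by the total read from the global parity nodes, and not by $H(U_{[g^I]})$ alone. No further information-theoretic input is needed beyond Lemma~\ref{lem: bound1}.
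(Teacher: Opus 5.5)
Your proposal is correct and matches the paper's proof: both combine Lemma~\ref{lem: bound1} with $c=\frac{g^F+\ell}{k^F+\mu\ell}\le 1$, eliminate the information/local-parity term to get a bound of the form $\frac{\lambda^F g^F\alpha}{c}-\bigl(\frac{1}{c}-1\bigr)(\cdot)$, and then use that the global-parity contribution is at most $g^I\alpha$. The only cosmetic difference is that you substitute the download counts $\sigma_i,\beta_j,\delta_a$ into the lemma before eliminating, whereas the paper works with $T=H(U_{[g^I]})$ and $S$ and bounds $\gamma_R\ge T+S$ at the outset.
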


\begin{proof}
    Let $S :=\sum_{j\in[\lambda^F k^F]}H(V_j) + \sum_{a\in[\lambda^F \mu\ell]}H(W_a), T := H(U_{[g^I]}),$ and set $c:=\frac{g^F+\ell}{k^F+\mu\ell}.$ By the definition of read bandwidth and the entropy bounds on the downloaded data, we have
    \begin{align*}
    \gamma_R &= \sum_{i\in[g^I]}\sigma_i + \sum_{j\in[\lambda^F k^F]}\beta_j + \sum_{a\in[\lambda^F \mu\ell]}\delta_a \\
    &\ge \sum_{i\in[g^I]}H(U_i) + \sum_{j\in[\lambda^F k^F]}H(V_j) + \sum_{a\in[\lambda^F \mu\ell]}H(W_a) \\
    &\ge T+S.
    \end{align*}
    By Lemma~\ref{lem: bound1},
    \[
    T+cS\ge \lambda^F g^F\alpha .
    \]
    Moreover, since \(g^F\le r\) and \(k^F=\mu r\), we have
    \[
    c= \frac{g^F+\ell}{k^F+\mu\ell} = \frac{g^F+\ell}{\mu(r+\ell)} \le 1.
    \]
    Thus \(0<c\le 1\), and hence
    \begin{align*}
    T+S &= \frac{1}{c}(T+cS) - \left(\frac{1}{c}-1\right)T \\
    &\ge \frac{\lambda^F g^F\alpha}{c} - \left(\frac{1}{c}-1\right)T.
    \end{align*}
    For each \(i\in[g^I]\), the downloaded variable \(U_i\) is a deterministic function of \(G_i^I\), and each initial global parity node stores at most \(\alpha\) \(\mathbb F_q\)-symbols. Hence
    \[
    H(U_i)\le H(G_i^I)\le \alpha .
    \]
    By subadditivity,
    \[
    T=H(U_{[g^I]}) \le \sum_{i\in[g^I]}H(U_i) \le g^I\alpha .
    \]
    Combining the above inequalities yields
    \begin{align*}
    \gamma_R &\ge T+S \ge \frac{\lambda^F g^F\alpha}{c} - \left(\frac{1}{c}-1\right)g^I\alpha \\
    &= \lambda^F\frac{g^F}{g^F+\ell}(k^F+\mu\ell)\alpha - g^I\left(\frac{k^F+\mu\ell}{g^F+\ell}-1\right)\alpha.
    \end{align*}
    This completes the proof.
\end{proof}

We next derive a lower bound on
\[
\sum_{j\in[\lambda^F k^F]}H(V_j) + \sum_{a\in[\lambda^F \mu\ell]}H(W_a),
\]
which, together with Lemma~\ref{lem: bound1}, will yield the second
read-bandwidth lower bound.

\begin{lemma} \label{lem:download-vw-lower-bound}
    Suppose \(g^F\le r\) and \(g^I\le r\). Then, for any stable optimal-distance $(k^I=\lambda^F k^F,g^I,r,\ell;\;k^F,g^F,r,\ell) \text{-LRCC}$ in the global split regime, we have
    \[
    \sum_{j\in[\lambda^F k^F]}H(V_j) + \sum_{a\in[\lambda^F \mu\ell]}H(W_a) \ge \frac{ \lambda^F(\lambda^F-1)g^F(k^F+\mu\ell) }{ (\lambda^F-1)(g^F+\ell)+g^I+\ell }\alpha.
    \]
\end{lemma}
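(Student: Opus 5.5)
The plan is to fix one final codeword $t\in[\lambda^F]$ and obtain an inequality that charges the downloads of codeword $t$ with the coefficient coming from the \emph{initial} global parities, and the downloads of the other codewords with the coefficient coming from the \emph{final} global parities. Summing over $t$ should then give the bound. Write $S_t:=\sum_{j\in[k^F]^t}H(V_j)+\sum_{a\in[\mu\ell]^t}H(W_a)$ and $S:=\sum_t S_t$. Set $c^F:=\frac{g^F+\ell}{k^F+\mu\ell}$ and $c^I:=\frac{g^I+\ell}{k^F+\mu\ell}$. Let $V,W$ denote all downloads from information and local parity nodes, and write $X_{-t}:=X_{[k^I]\setminus[k^F]^t}$.

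\emph{Step 1 (demand from the other codewords).} By the conversion-coordinator property~\eqref{eq:conver_coor_prop}, $H(G^F_{-t}\mid V,W)\le I(U_{[g^I]};G^F_{-t}\mid V,W)$, where $G^F_{-t}$ denotes the final global parities of the codewords $s\neq t$. As in the proof of Lemma~\ref{lem: bound1}, Proposition~\ref{prop:final-codeword-decomposition} and Corollary~\ref{cor:lrc-global-uniform-independence} (using $g^F\le r$), together with the second inequality of Lemma~\ref{lem:lrc-averaged-function-bounds}, give
\[
H(G^F_{-t}\mid V,W)\ge\sum_{s\neq t}\bigl(g^F\alpha-c^F S_s\bigr).
\]

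\emph{Step 2 (supply from $U$).} Since $G^F_{-t}$ is a function of $X_{-t}$,
\[
I(U_{[g^I]};G^F_{-t}\mid V,W)\le I(U_{[g^I]};X_{-t}\mid V,W)=H(U_{[g^I]}\mid V,W)-H(U_{[g^I]}\mid X_{-t},V_{[k^F]^t},W_{[\mu\ell]^t}).
\]
Here the last term simplifies because $V_{-t},W_{-t}$ are functions of $X_{-t}$. The first term is at most $g^I\alpha$. For the second term I will show
\[
H(U_{[g^I]}\mid X_{-t},V_{[k^F]^t},W_{[\mu\ell]^t})\ge g^I\alpha-c^I S_t .
\]
First, $H(U_{[g^I]}\mid X_{-t})=g^I\alpha$. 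This holds because $U_{[g^I]}$ is a function of $G^I_{[g^I]}$, and Corollary~\ref{cor:lrc-global-uniform-independence} (using $g^I\le r$) gives $H(G^I_{[g^I]}\mid X_{-\tau})=g^I\alpha$ for a single local group $\tau$ inside codeword $t$. Conditioning on less only increases entropy, so $H(G^I_{[g^I]}\mid X_{-t})=g^I\alpha$. The statement must then be made for $U$ rather than $G^I$; if $U$ is a strict function of $G^I$ I would run the whole argument with $G^I$ in place of $U$ in Steps 1--2 and return to $H(U)$ only at the end. Second, I need to bound $I(G^I_{[g^I]};V_{[k^F]^t},W_{[\mu\ell]^t}\mid X_{-t})\le c^I S_t$.

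\emph{Step 3 (the main obstacle).} This conditional averaged bound is where the difficulty lies. Applying the first inequality of Lemma~\ref{lem:lrc-averaged-function-bounds} group by group only yields a coefficient $\frac{g^I+\ell}{r+\ell}$, which loses a factor $\mu$. Instead I will redo the proof of the second inequality of that lemma conditionally on $X_{-t}$. Given $X_{-t}$, the nodes of codeword $t$ together with $G^I_{[g^I]}$ behave like an optimal-distance LRC with $k^F$ information nodes and $g^I$ global parities. I will pick a local group $\tau$ of codeword $t$ and $\mathcal D\subseteq[r]^\tau$, $\mathcal A\subseteq[\ell]^\tau$ with $|\mathcal D|+|\mathcal A|=g^I+\ell$. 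Then:
\begin{itemize}
\item By the optimal-distance property, the remaining nodes of codeword $t$ together with $G^I$ and $X_{-t}$ recover everything.
\item By Proposition~\ref{prop:lrc-conditional-mds} applied to the initial code, the remaining nodes of codeword $t$ have conditional entropy $(k^F-g^I)\alpha$ given $X_{-t}$.
\end{itemize}
Hence $G^I$ is conditionally independent of those remaining nodes given $X_{-t}$. The chain rule and data processing then give the bound $\sum_{\mathcal D}H(V_j)+\sum_{\mathcal A}H(W_a)$, and averaging over $\tau$, $\mathcal D$, $\mathcal A$ gives the coefficient $c^I$.

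\emph{Step 4 (combine).} Steps 1--3 yield $(\lambda^F-1)g^F\alpha\le c^F\sum_{s\ne t}S_s+c^I S_t$. Summing over $t\in[\lambda^F]$ gives
\[
\lambda^F(\lambda^F-1)g^F\alpha\le\bigl((\lambda^F-1)c^F+c^I\bigr)S,
\]
which rearranges exactly to the claimed inequality.
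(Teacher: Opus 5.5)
Your proof is correct, but it is organized differently from the paper's. The paper never conditions on all of $V,W$. For each $s$ and each tuple $\mathcal T=(\tau_1,\ldots,\tau_{\lambda^F})$ with one local group per final codeword, it conditions on $B_{\mathcal T}$, i.e., all information nodes outside these groups together with the downloads from them. This way only the per-group (first) inequality of Lemma~\ref{lem:lrc-averaged-function-bounds} is needed, with coefficients $\frac{g^F+\ell}{r+\ell}$ and $\frac{g^I+\ell}{r+\ell}$. The right-hand side is handled by $I(U_{[g^I]};G^F_{\bar s}\mid B_{\mathcal T})\le I(U_{[g^I]};X_{\bar\tau_s},R_{\tau_s})$. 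The factor $\mu$ you were worried about is recovered only at the end, by summing over all $\mu^{\lambda^F}$ tuples.

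Your per-codeword inequality $(\lambda^F-1)g^F\alpha\le c^F\sum_{s\ne t}S_s+c^IS_t$ is exactly the average of the paper's tuple inequalities over $\mathcal T$. You obtain it directly by reusing the codeword-level computation of Lemma~\ref{lem: bound1} for the final codes. The cost is proving a new conditional analogue, given $X_{-t}$, of the second inequality of Lemma~\ref{lem:lrc-averaged-function-bounds} for the initial code. Your Step 3 does establish this: the count $(k^F-g^I)\alpha$ follows from Proposition~\ref{prop:lrc-conditional-mds} and the independence of the other groups of codeword $t$ via the chain rule, and $g^I\le r$ makes the choice $|\mathcal D|+|\mathcal A|=g^I+\ell$ possible. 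So the paper's route reuses its lemma verbatim, while yours gives a cleaner single sum over $t$.

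The replacement of $U_{[g^I]}$ by $G^I_{[g^I]}$ is not optional. The claim $H(U_{[g^I]}\mid X_{-t})=g^I\alpha$ fails whenever the coordinator reads only part of the initial global parities, as in Section~\ref{subsec5.3}. The replacement is legitimate because $H(G^F_{-t}\mid G^I_{[g^I]},V,W)=0$. Since the lemma does not involve $U_{[g^I]}$, nothing has to be transferred back to $H(U_{[g^I]})$ afterwards.

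Alternatively, as the paper effectively does, you can bound $I(U_{[g^I]};X_{-t}\mid V,W)\le I(U_{[g^I]};X_{-t},V_{[k^F]^t},W_{[\mu\ell]^t})=I(U_{[g^I]};V_{[k^F]^t},W_{[\mu\ell]^t}\mid X_{-t})$, using $I(U_{[g^I]};X_{-t})=0$. Then pass to $G^I_{[g^I]}$ by data processing.
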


\begin{proof}
    For each local group \(\tau\in[\lambda^F \mu]\), define
    \[
    R_\tau:=\big(V_{[r]^\tau},W_{[\ell]^\tau}\big), \qquad \Delta_\tau:= \sum_{j\in[r]^\tau}H(V_j) + \sum_{a\in[\ell]^\tau}H(W_a).
    \]
    For \(s\in[\lambda^F]\) and \(\tau\in[\mu]^s\), write
    \[
    X_{s,\bar\tau}:=X_{[k^F]^s\setminus [r]^\tau}, \qquad X_{\bar\tau}:=X_{[\lambda^F k^F]\setminus [r]^\tau}.
    \]
    We first record two local mutual-information bounds. Applying Lemma~\ref{lem:lrc-averaged-function-bounds} to the \(s\)-th final LRC, with \(\psi_i(G_i^F)=G_i^F\), \(f_j(X_j)=V_j\), and \(\varphi_a(L_a)=W_a\), gives, for every \(\tau\in[\mu]^s\),
    \begin{equation} 
    I\!\left(G^F_{[g^F]^s};R_\tau\mid X_{s,\bar{\tau}}\right) \le \frac{g^F+\ell}{r+\ell}\Delta_\tau. \label{eq:final-local-mi} 
    \end{equation}
    Applying Lemma~\ref{lem:lrc-averaged-function-bounds} to the initial LRC, with \(\psi_i(G_i^I)=U_i\), \(f_j(X_j)=V_j\), and \(\varphi_a(L_a)=W_a\), gives, for every \(\tau\in[\lambda^F\mu]\),
    \begin{equation}
    I\!\left(U_{[g^I]};R_\tau\mid X_{\bar\tau}\right) \le \frac{g^I+\ell}{r+\ell}\Delta_\tau .
    \label{eq:initial-local-mi}
    \end{equation}
    Now fix a tuple
    \[
    \mathcal{T}=(\tau_1,\ldots,\tau_{\lambda^F}),\qquad \tau_w\in[\mu]^w,
    \]
    and define
    \[
    B_{\mathcal T}:= \bigl(X_{w,\bar{\tau}_w},R_{\tau_w}:w\in[\lambda^F]\bigr).
    \]
    We claim that \(B_{\mathcal T}\) determines \(V_{[\lambda^F k^F]}\) and \(W_{[\lambda^F\mu\ell]}\). Indeed, for the selected local group \(\tau_w\), the variables \(V_{[r]^{\tau_w}}\) and \(W_{[\ell]^{\tau_w}}\) are included in \(R_{\tau_w}\). For every other local group in the \(w\)-th final codeword, all its information nodes are contained in \(X_{w,\bar{\tau}_w}\), and hence both the corresponding downloaded information-node data and local-parity-node data are determined.

    Fix \(s\in[\lambda^F]\), and denote by
    \[
    G^F_{\bar s}:=G^F_{\bigcup_{w\in[\lambda^F]\setminus\{s\}}[g^F]^w}
    \]
    the collection of final global parities outside the \(s\)-th final codeword. By the conversion-coordinator property (\ref{eq:conver_coor_prop}), we have
    \[
    H\!\left(G^F_{\bar s}\mid U_{[g^I]},B_\mathcal{T}\right)=0.
    \]
    Therefore,
    \begin{equation}
    H\!\left(G^F_{\bar s}\mid B_\mathcal{T}\right) = I\!\left(U_{[g^I]};G^F_{\bar s}\mid B_\mathcal{T}\right).
    \label{eq:main-identity}
    \end{equation}

    We lower bound the left-hand side of \eqref{eq:main-identity}. Apply Proposition~\ref{prop:final-codeword-decomposition} with \(\mathcal S=[\lambda^F]\setminus\{s\}\) and \(\mathcal J=[\lambda^F]\), where, for each \(w\in[\lambda^F]\), the functions are chosen so that the conditioning variables in the \(w\)-th final codeword are exactly \(X_{w,\bar{\tau}_w}\) and \(R_{\tau_w}\). Then
    \[
    \begin{aligned}
    H\!\left(G^F_{\bar s}\mid B_\mathcal{T}\right) &= \sum_{w\in[\lambda^F]\setminus\{s\}} H\!\left( G^F_{[g^F]^w}\mid X_{w,\bar\tau_w},R_{\tau_w} \right)\\
    &= \sum_{w\in[\lambda^F]\setminus\{s\}} \Big[ H\!\left( G^F_{[g^F]^w}\mid X_{w,\bar\tau_w} \right) - I\!\left( G^F_{[g^F]^w};R_{\tau_w}\mid X_{w,\bar\tau_w} \right) \Big]\\
    &\ge \sum_{w\in[\lambda^F]\setminus\{s\}} \left( g^F\alpha - \frac{g^F+\ell}{r+\ell}\Delta_{\tau_w} \right),
    \end{aligned}
    \]
    where the last step follows from Corollary~\ref{cor:lrc-global-uniform-independence}, which gives \(H(G^F_{[g^F]^w}\mid X_{w,\bar{\tau}_w})=g^F\alpha\), and from \eqref{eq:final-local-mi}.

    We next upper bound the right-hand side of \eqref{eq:main-identity}. Since \(X_{\bar{\tau}_s}\) contains all information nodes outside the local group \(\tau_s\), it determines all final global parity nodes outside the \(s\)-th final codeword and all components of \(B_{\mathcal T}\) except \(R_{\tau_s}\). Hence \((G^F_{\bar{s}},B_{\mathcal T})\) is a deterministic function of \((X_{\bar{\tau}_s},R_{\tau_s})\). Therefore,
    \[
    \begin{aligned}
    I\!\left(U_{[g^I]};G^F_{\bar s}\mid B_\mathcal{T}\right) &\le I\!\left(U_{[g^I]};G^F_{\bar s},B_\mathcal{T}\right)\\
    &\le I\!\left(U_{[g^I]};X_{\bar\tau_s},R_{\tau_s}\right)\\
    &= I\!\left(U_{[g^I]};X_{\bar\tau_s}\right) + I\!\left(U_{[g^I]};R_{\tau_s}\mid X_{\bar\tau_s}\right).
    \end{aligned}
    \]
    By Corollary~\ref{cor:lrc-global-uniform-independence} applied to the initial LRC and by the data-processing inequality,
    \[
    I\!\left(U_{[g^I]};X_{\bar\tau_s}\right)=0.
    \]
    Combining this with \eqref{eq:initial-local-mi}, we obtain
    \[
    I\!\left(U_{[g^I]};G^F_{\bar s}\mid B_T\right) \le \frac{g^I+\ell}{r+\ell}\Delta_{\tau_s}.
    \]

    Combining the preceding lower and upper bounds with \eqref{eq:main-identity}, we obtain, for every \(s\in[\lambda^F]\) and every tuple \(\mathcal T\),
    \[
    \sum_{w\in[\lambda^F]\setminus\{s\}} \left( g^F\alpha - \frac{g^F+\ell}{r+\ell}\Delta_{\tau_w} \right) \le \frac{g^I+\ell}{r+\ell}\Delta_{\tau_s}.
    \]
    Equivalently,
    \begin{equation}
    \sum_{w\in[\lambda^F]\setminus\{s\}} \frac{g^F+\ell}{r+\ell}\Delta_{\tau_w} + \frac{g^I+\ell}{r+\ell}\Delta_{\tau_s} \ge (\lambda^F-1)g^F\alpha .
    \label{eq:tuple-ineq}
    \end{equation}

    Finally, sum \eqref{eq:tuple-ineq} over all \(s\in[\lambda^F]\) and over all tuples \(\mathcal T=(\tau_1,\ldots,\tau_{\lambda^F})\) satisfying \(\tau_w\in[\mu]^w\) for every \(w\in[\lambda^F]\). Fix a final codeword \(w\in[\lambda^F]\) and a local group \(\tau\in[\mu]^w\). The term \(\Delta_\tau\) appears in \(\mu^{\lambda^F-1}\) tuples. For each such tuple, its total coefficient after summing over all choices of \(s\) is $\frac{(\lambda^F-1)(g^F+\ell)+g^I+\ell}{r+\ell}$. Thus,
    \[
    \mu^{\lambda^F-1} \frac{(\lambda^F-1)(g^F+\ell)+g^I+\ell}{r+\ell} \sum_{w\in[\lambda^F]}\sum_{\tau\in[\mu]^w}\Delta_\tau \ge \lambda^F \mu^{\lambda^F}(\lambda^F-1)g^F\alpha .
    \]
    Since
    \[ 
    \sum_{w\in[\lambda^F]}\sum_{\tau\in[\mu]^w}\Delta_\tau = \sum_{j\in[\lambda^F k^F]}H(V_j) + \sum_{a\in[\lambda^F \mu\ell]}H(W_a),
    \]
    we get
    \[
    \sum_{j\in[\lambda^F k^F]}H(V_j) + \sum_{a\in[\lambda^F \mu\ell]}H(W_a) \ge \frac{ \lambda^F(\lambda^F-1)g^F \mu(r+\ell) }{ (\lambda^F-1)(g^F+\ell)+g^I+\ell }\alpha .
    \]
    Using \(\mu(r+\ell)=k^F+\mu\ell\), the desired bound follows.
\end{proof}

Combining Lemma~\ref{lem: bound1} with Lemma~\ref{lem:download-vw-lower-bound}, we obtain the following second read-bandwidth lower bound for stable optimal-distance LRCCs in the global split regime.

\begin{theorem} \label{thm:bound2}
    Suppose \(g^F\le r\) and \(g^I\le r\). For any stable optimal-distance \((k^I=\lambda^F k^F,g^I,r,\ell;\;k^F,g^F,r,\ell)\)-LRCC in the global split regime, the read bandwidth satisfies
    \[
    \gamma_R \ge \lambda^F g^F \frac{ (\lambda^F-1)(k^F+\mu\ell)+g^I+\ell }{ (\lambda^F-1)(g^F+\ell)+g^I+\ell } \alpha.
    \]
\end{theorem}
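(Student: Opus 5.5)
The plan is to combine the two entropy constraints already established, as in the proof of Theorem~\ref{thm:bound1}. Set
\[
S:=\sum_{j\in[\lambda^F k^F]}H(V_j)+\sum_{a\in[\lambda^F\mu\ell]}H(W_a),\qquad T:=H(U_{[g^I]}),\qquad c:=\frac{g^F+\ell}{k^F+\mu\ell}.
\]
First, as in Theorem~\ref{thm:bound1}, the read bandwidth is at least the total entropy of the downloaded data, and subadditivity gives
\[
\gamma_R\ge \sum_i H(U_i)+S\ge T+S.
\]
Lemma~\ref{lem: bound1} gives $T\ge \lambda^F g^F\alpha-cS$. Hence
\[
\gamma_R\ge \lambda^F g^F\alpha+(1-c)S.
\]

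Next, since $g^F\le r$, we have $c\le 1$, so the coefficient $1-c$ is nonnegative. We can therefore insert the lower bound on $S$ from Lemma~\ref{lem:download-vw-lower-bound}, which requires $g^I\le r$ as well. Writing $D:=(\lambda^F-1)(g^F+\ell)+g^I+\ell$ and $K:=k^F+\mu\ell$, this yields
\[
\gamma_R\ge \lambda^F g^F\alpha+\Bigl(1-\tfrac{g^F+\ell}{K}\Bigr)\frac{\lambda^F(\lambda^F-1)g^F K}{D}\alpha
=\lambda^F g^F\alpha\Bigl[1+\frac{(\lambda^F-1)(K-g^F-\ell)}{D}\Bigr].
\]

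It remains to simplify the bracket. Its numerator over $D$ is
\[
D+(\lambda^F-1)(K-g^F-\ell)=(\lambda^F-1)(g^F+\ell)+g^I+\ell+(\lambda^F-1)K-(\lambda^F-1)(g^F+\ell)=(\lambda^F-1)K+g^I+\ell.
\]
This gives exactly
\[
\gamma_R\ge\lambda^F g^F\frac{(\lambda^F-1)(k^F+\mu\ell)+g^I+\ell}{(\lambda^F-1)(g^F+\ell)+g^I+\ell}\alpha.
\]

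The argument has no real obstacle. The one point needing care is the sign of $1-c$: substituting the lower bound on $S$ is legitimate only because $c\le1$, which holds by $g^F\le r$ and $k^F=\mu r$.
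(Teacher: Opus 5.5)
Your proposal is correct and follows essentially the same route as the paper. Like the paper, you bound $\gamma_R\ge T+S = T+cS+(1-c)S$, apply Lemma~\ref{lem: bound1} to the first part, and then use Lemma~\ref{lem:download-vw-lower-bound} on $S$, justified by $0<c\le 1$ from $g^F\le r$; the algebraic simplification also matches.
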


\begin{proof}
    Let
    \[
    S := \sum_{j\in[\lambda^F k^F]}H(V_j) + \sum_{a\in[\lambda^F \mu\ell]}H(W_a), \qquad T:=H(U_{[g^I]}).
    \]
    By the definition of read bandwidth and the entropy bounds on the downloaded data,
    \begin{align*}
    \gamma_R &= \sum_{i\in[g^I]}\sigma_i + \sum_{j\in[\lambda^F k^F]}\beta_j + \sum_{a\in[\lambda^F \mu\ell]}\delta_a \\
    &\ge \sum_{i\in[g^I]}H(U_i) + \sum_{j\in[\lambda^F k^F]}H(V_j) + \sum_{a\in[\lambda^F \mu\ell]}H(W_a) \\
    &\ge T+S.
    \end{align*}
    Set \(c=(g^F+\ell)/(k^F+\mu\ell)\). Since \(g^F\le r\) and \(k^F=\mu r\), we have \(0<c\le 1\). Hence
    \[
    \gamma_R\ge T+S = T+cS+(1-c)S .
    \]
    By Lemma \ref{lem: bound1},
    \[
    T+cS\ge \lambda^F g^F\alpha .
    \]
    Moreover, by Lemma~\ref{lem:download-vw-lower-bound},
    \[
    S \ge \frac{ \lambda^F(\lambda^F-1)g^F(k^F+\mu\ell) }{(\lambda^F-1)(g^F+\ell)+g^I+\ell}\alpha .
    \]
    Combining the above inequalities gives
    \begin{align*}
    \gamma_R &\ge \lambda^F g^F\alpha + \left( 1-\frac{g^F+\ell}{k^F+\mu\ell} \right) \frac{ \lambda^F(\lambda^F-1)g^F(k^F+\mu\ell) }{ (\lambda^F-1)(g^F+\ell)+g^I+\ell } \alpha \\
    &= \lambda^F g^F\alpha + \frac{ \lambda^F(\lambda^F-1)g^F(k^F+\mu\ell-g^F-\ell) }{ (\lambda^F-1)(g^F+\ell)+g^I+\ell} \alpha \\
    &= \lambda^F g^F \frac{ (\lambda^F-1)(k^F+\mu\ell)+g^I+\ell }{ (\lambda^F-1)(g^F+\ell)+g^I+\ell } \alpha .
    \end{align*}
    This completes the proof.
\end{proof}

Theorems~\ref{thm:bound1} and~\ref{thm:bound2} together imply the following piecewise lower bound on the read bandwidth.

\begin{theorem} 
\label{thm:final_thm}
Suppose \(g^I,g^F\le r\). For any stable optimal-distance \((k^I=\lambda^F k^F,g^I,r,\ell;\;k^F,g^F,r,\ell)\)-LRCC in the global split regime, the read bandwidth satisfies 
\[
\gamma_R \ge \begin{cases} \displaystyle \lambda^F g^F \frac{ (\lambda^F-1)(k^F+\mu\ell)+g^I+\ell }{ (\lambda^F-1)(g^F+\ell)+g^I+\ell } \alpha, & \text{if } g^F\le g^I, \\[2.2ex] \displaystyle \lambda^F\frac{g^F}{g^F+\ell}(k^F+\mu\ell)\alpha - g^I\left(\frac{k^F+\mu\ell}{g^F+\ell}-1\right)\alpha, & \text{if } g^F>g^I. \end{cases} 
\]
\end{theorem}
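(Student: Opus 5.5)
The plan is to observe that, under the standing hypothesis \(g^I,g^F\le r\), Theorems~\ref{thm:bound1} and~\ref{thm:bound2} hold \emph{simultaneously}: Theorem~\ref{thm:bound1} needs only \(g^F\le r\), and Theorem~\ref{thm:bound2} needs \(g^F\le r\) and \(g^I\le r\). Hence \(\gamma_R\) is at least the maximum of the two right-hand sides. It then suffices to show that this maximum is the bound of Theorem~\ref{thm:bound2} when \(g^F\le g^I\), and the bound of Theorem~\ref{thm:bound1} when \(g^F>g^I\). No new information-theoretic argument is needed; the remaining work is a purely algebraic comparison.

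To carry this out, I would abbreviate \(\lambda=\lambda^F\), \(N=k^F+\mu\ell\), \(a=g^F+\ell\), \(b=g^I+\ell\) and \(D=(\lambda-1)a+b>0\). The two bounds (divided by \(\alpha\)) then read
\[
B_2=\lambda g^F\frac{(\lambda-1)N+b}{D},\qquad B_1=\frac{\lambda g^F N-g^I N+g^I a}{a}.
\]
I would clear the positive denominator \(aD\) and expand \(aD(B_2-B_1)\). The terms \(\lambda(\lambda-1)g^F aN\) cancel, and grouping the rest should give
\[
aD\,(B_2-B_1)=(N-a)\bigl(g^I D-\lambda g^F b\bigr).
\]
Expanding the second factor, the \(g^Ig^F\) and \(\ell\)-terms collapse to
\[
g^I D-\lambda g^F b=(g^I-g^F)(g^I+\lambda\ell).
\]

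Finally, I would check the sign of \(N-a\). Since \(N=\mu(r+\ell)\ge r+\ell\ge g^F+\ell=a\) (using \(\mu\ge1\) and \(g^F\le r\)), we have \(N-a\ge 0\). Also \(g^I+\lambda\ell>0\) and \(aD>0\). Therefore \(B_2\ge B_1\) exactly when \(g^I\ge g^F\), and \(B_1\ge B_2\) when \(g^F>g^I\), which gives the stated piecewise bound. The only step requiring care is the expansion leading to the factorization. I would verify it by a sanity check: at \(g^I=g^F\) the two bounds should coincide, and indeed both then equal \(\lambda g^F-\ldots\) evaluated consistently.
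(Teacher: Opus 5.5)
Your proposal is correct and follows the paper's proof: both take the maximum of the bounds in Theorems~\ref{thm:bound1} and~\ref{thm:bound2} and compare them algebraically. Your factorization $aD(B_2-B_1)=(N-a)(g^I-g^F)(g^I+\lambda\ell)$ is the paper's expression for $B-A$ with the sign reversed, and your sign argument via $N-a=\mu(r+\ell)-(g^F+\ell)\ge 0$ is the same as the paper's.
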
 

\begin{proof} 
By the preceding two theorems, we have 
\[
\gamma_R\ge A \qquad\text{and}\qquad \gamma_R\ge B, 
\] 
where 
\[ 
A:= \lambda^F g^F \frac{ (\lambda^F-1)(k^F+\mu\ell)+g^I+\ell }{ (\lambda^F-1)(g^F+\ell)+g^I+\ell } \alpha 
\]
and 
\[ 
B:= \lambda^F\frac{g^F}{g^F+\ell}(k^F+\mu\ell)\alpha - g^I\left(\frac{k^F+\mu\ell}{g^F+\ell}-1\right)\alpha . 
\]
Thus \(\gamma_R\ge\max\{A,B\}\). It remains to compare \(A\) and \(B\).
A direct calculation gives
\[
B-A = \frac{ (g^F-g^I)(\lambda^F\ell+g^I)(k^F+\mu\ell-g^F-\ell) }{ (g^F+\ell)\bigl((\lambda^F-1)(g^F+\ell)+g^I+\ell\bigr) } \alpha . 
\]
Since \(k^F+\mu\ell=\mu(r+\ell)\) and \(g^F\le r\), we have \(k^F+\mu\ell-g^F-\ell\ge 0\). All other factors except \(g^F-g^I\) are nonnegative, and the denominator is positive. Hence \(B-A\le 0\) when \(g^F\le g^I\), while \(B-A\ge 0\) when \(g^F\ge g^I\). Therefore, when \(g^F\le g^I\), the bound \(A\) is at least as strong as \(B\), and when \(g^F>g^I\), the bound \(B\) is at least as strong as \(A\). This gives the desired piecewise bound. \end{proof}

\begin{remark}
\label{rem:normalized-lower-bound}
    Theorem~\ref{thm:final_thm} is stated for the unnormalized read bandwidth \(\gamma_R\). Dividing both sides by \(\alpha\) gives the corresponding lower bound on the normalized read bandwidth \(\widetilde{\gamma}_R=\gamma_R/\alpha\). In the next section, we construct stable optimal-distance LRCCs whose normalized read bandwidth attains this bound over sufficiently large finite fields.
\end{remark}

\begin{remark}
Let $\gamma_R^{\mathrm{MR}}$ denote the read bandwidth achieved by the global split constructions of Maturana and Rashmi~\cite{maturana2023locally}, and let $\gamma_R^{\mathrm{LB}}$ denote the right-hand side of Theorem~\ref{thm:final_thm}. Set
\[
    S:=k^F+\mu\ell, \qquad D:=(\lambda^F-1)(g^F+\ell)+g^I.
\]
A direct subtraction gives
\[
\gamma_R^{\mathrm{MR}}-\gamma_R^{\mathrm{LB}} =
\begin{cases}
\displaystyle
\frac{ \lambda^F g^F\ell(\lambda^F-1) (S-g^F-\ell) }{ D(D+\ell) }\alpha,
& g^F\le g^I, \\[3mm]
\displaystyle
\frac{ g^I\ell(\lambda^F g^F-g^I) (S-g^F-\ell) }{ (g^F+\ell) \bigl(\lambda^F g^F(g^F+\ell)-g^I\ell\bigr) }\alpha,
& g^F>g^I.
\end{cases}
\]
Since $k^F=\mu r$ and $g^F\le r$,
\[
    S-g^F-\ell  =(\mu-1)(r+\ell)+(r-g^F)\ge0.
\]
Therefore, $\gamma_R^{\mathrm{MR}}\ge\gamma_R^{\mathrm{LB}}$. Under the standing assumptions $\lambda^F\ge2$ and $g^I,\ell>0$, equality holds if and only if $\mu=1$ and $g^F=r$; otherwise, the inequality is strict. Since the next section presents constructions that attain the lower bound in Theorem~\ref{thm:final_thm}, the global-split constructions of Maturana and Rashmi are not read-bandwidth-optimal in general in this parameter regime.
\end{remark}

\section{Optimal Constructions of Locally Repairable Convertible Codes in the Global Split Regime}
\label{sec5}
In this section, we construct stable optimal-distance LRCCs that attain the read-bandwidth lower bounds derived in Section~\ref{sec4}. Throughout this section, we assume \(g^I,g^F\le r\) and keep the notation \(\mu=\frac{k^F}{r}\) and \(\lambda=\lambda^F\). According to the relative sizes of \(g^I\) and \(g^F\), we treat the three cases \(g^F=g^I\), \(g^F>g^I\), and \(g^F<g^I\) separately. In each case, the construction is based on MDS array codes with prescribed repair or alignment properties, and we show that the resulting stable LRCC attains the corresponding branch of the lower bound in Theorem~\ref{thm:final_thm}.

\subsection{\texorpdfstring{$g^F=g^I$}{gF = gI}}
\label{subsec5.1}

\subsubsection{An illustrative example}

We first present an illustrative example for the case \(g^F=g^I\). Consider a global split conversion from an initial \((8,1,2,1,\alpha=2)\)-LRC codeword to two final \((4,1,2,1,\alpha=2)\)-LRC codewords. Here \(\lambda^F=2\), \(\mu=2\), and \(g^I=g^F=1\). We take \(\alpha=2\) and work over \(\mathbb F_3\).

Let the eight information nodes be $X_{2i-1}=(a_i,b_i)^\top, X_{2i}=(c_i,d_i)^\top, i\in[4]$. For the \(i\)-th local group, define the local parity node $L_i=(a_i+2c_i+b_i+d_i,\; b_i+2d_i)^\top$ and introduce the virtual global parity node $P_i=(a_i+c_i,\; b_i+d_i)^\top$. The two final global parity nodes are \(G_1^F=P_1+P_2\) and \(G_2^F=P_3+P_4\), while the initial global parity node is \(G_1^I=G_1^F+G_2^F=P_1+P_2+P_3+P_4\).

The initial codeword and the two final codewords are shown in Table~\ref{tab:example_g_eq_initial}. The yellow entries in the initial codeword are the symbols read during conversion.

\begin{table*}[t]
\centering
\caption{The \((8,1,2,1,\alpha=2)\)-to-\((4,1,2,1,\alpha=2)\) split conversion example. Yellow entries are read during conversion.}
\label{tab:example_g_eq_initial}
\begingroup
\scriptsize
\setlength{\tabcolsep}{2.2pt}
\renewcommand{\arraystretch}{0.52}
\setlength{\fboxsep}{0.5pt}
\textbf{Initial codeword.}\par\vspace{0.2ex}
\begin{adjustbox}{max width=0.86\textwidth}
\begin{tabular}{c|c|c|c}
\hline
Pos. & Node & \(1\) & \(2\)\\
\hline
1  & \(X_1\) & \readcell{\(a_1\)} & \(b_1\)\\
2  & \(X_2\) & \readcell{\(c_1\)} & \(d_1\)\\
3  & \(L_1\) & \readcell{\(a_1+2c_1+b_1+d_1\)} & \(b_1+2d_1\)\\
\hline
4  & \(X_3\) & \readcell{\(a_2\)} & \(b_2\)\\
5  & \(X_4\) & \readcell{\(c_2\)} & \(d_2\)\\
6  & \(L_2\) & \readcell{\(a_2+2c_2+b_2+d_2\)} & \(b_2+2d_2\)\\
\hline
7  & \(X_5\) & \(a_3\) & \(b_3\)\\
8  & \(X_6\) & \(c_3\) & \(d_3\)\\
9  & \(L_3\) & \(a_3+2c_3+b_3+d_3\) & \(b_3+2d_3\)\\
\hline
10 & \(X_7\) & \(a_4\) & \(b_4\)\\
11 & \(X_8\) & \(c_4\) & \(d_4\)\\
12 & \(L_4\) & \(a_4+2c_4+b_4+d_4\) & \(b_4+2d_4\)\\
\hline
13 & \(G_1^I\) &
\readcell{\(a_1+c_1+a_2+c_2+a_3+c_3+a_4+c_4\)} &
\readcell{\(b_1+d_1+b_2+d_2+b_3+d_3+b_4+d_4\)}\\
\hline
\end{tabular}
\end{adjustbox}
\vspace{0.8mm}

\textbf{Final codewords.}\par\vspace{0.2ex}
\begin{adjustbox}{max width=\textwidth}
\begin{tabular}{c|c|c|c||c|c|c|c}
\hline
Pos. & Node in \(C_1^F\) & \(1\) & \(2\)
& Pos. & Node in \(C_2^F\) & \(1\) & \(2\)\\
\hline
1 & \(X_1\) & \(a_1\) & \(b_1\)
  & 1 & \(X_5\) & \(a_3\) & \(b_3\)\\
2 & \(X_2\) & \(c_1\) & \(d_1\)
  & 2 & \(X_6\) & \(c_3\) & \(d_3\)\\
3 & \(L_1\) & \(a_1+2c_1+b_1+d_1\) & \(b_1+2d_1\)
  & 3 & \(L_3\) & \(a_3+2c_3+b_3+d_3\) & \(b_3+2d_3\)\\
\hline
4 & \(X_3\) & \(a_2\) & \(b_2\)
  & 4 & \(X_7\) & \(a_4\) & \(b_4\)\\
5 & \(X_4\) & \(c_2\) & \(d_2\)
  & 5 & \(X_8\) & \(c_4\) & \(d_4\)\\
6 & \(L_2\) & \(a_2+2c_2+b_2+d_2\) & \(b_2+2d_2\)
  & 6 & \(L_4\) & \(a_4+2c_4+b_4+d_4\) & \(b_4+2d_4\)\\
\hline
7 & \(G_1^F\) &
\(a_1+c_1+a_2+c_2\) &
\(b_1+d_1+b_2+d_2\)
  & 7 & \(G_2^F\) &
\(a_3+c_3+a_4+c_4\) &
\(b_3+d_3+b_4+d_4\)\\
\hline
\end{tabular}
\end{adjustbox}
\endgroup
\end{table*}

A Magma computation gives \(d(C^I)=3\) and \(d(C^F)=3\). Hence \(C^I\) and \(C^F\) are both optimal-distance LRCs.

The conversion preserves all information nodes and local parity nodes. The coordinator first reads the whole initial global parity node \(G_1^I\), which costs \(2\) symbols. Then, for \(i\in [2]\), it reads \(a_i\), \(c_i\), and \(L_{i}[1]=a_i+2c_i+b_i+d_i\). Using these downloaded symbols, it computes $P_i=\big(a_i+c_i,\; L_{i}[1]-(a_i+2c_i)\big)^\top, i\in[2]$. Therefore, \(P_1\) and \(P_2\) are recovered by reading \(2\cdot3\) symbols, and the coordinator obtains \(G_1^F=P_1+P_2\). Finally, the remaining final global parity node is obtained as \(G_2^F=G_1^I-G_1^F\). Therefore the total read bandwidth is \(\gamma_R=2+2\cdot 3=8\). For these parameters, the lower bound in Theorem~\ref{thm:final_thm} is also \(8\), hence this conversion attains the optimal read bandwidth.

\subsubsection{General Construction}

Set \(g=g^F=g^I\) and $\alpha=g+\ell$. We now give the general construction underlying the preceding example. The construction uses one MDS array block for each local group. The only repair property required from this block is that its \(g\) global parity nodes can be optimally repaired from its \(r+\ell\) information and local parity nodes. Universal optimal-repair MDS array codes, such as those guaranteed by Theorem~\ref{thm:existence}, would provide this property, but they are stronger than needed here. We therefore first construct a small-subpacketization MDS array block tailored to this single repair pattern.

\begin{lemma}[]
\label{lem:fixed_parity_repair1}
    Let \(r,\ell,g\) be positive integers and set \(\alpha=g+\ell\). Let \(\mathbb F_q\) be a finite field over which there exists a \((g+\ell)\times r\) superregular matrix, i.e., a matrix whose square submatrices are all nonsingular. Such a matrix can be obtained, for example, from a Cauchy matrix whenever \(q\ge r+g+\ell\). Then there exists a systematic linear \([r+\ell+g,r,\alpha]_q\) MDS array code with nodes
    \[
    X_1,\ldots,X_r,\quad L_1,\ldots,L_\ell,\quad P_1,\ldots,P_g
    \]
    such that \(P_1,\ldots,P_g\) can be optimally repaired from \(X_1,\ldots,X_r,L_1,\ldots,L_\ell\) by downloading the first \(g\) subsymbols from each helper node.
\end{lemma}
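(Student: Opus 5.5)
The plan is to build the code as a piggybacked stack of \(\alpha=g+\ell\) scalar MDS codes. First, a direct check shows that downloading \(g\) subsymbols from each helper is optimal. With \(h=g\) erased nodes, \(d=r+\ell\) helpers and dimension \(r\), the bound of Theorem~\ref{thm:cut_set_bound} is
\[
\frac{g(r+\ell)}{g+\ell+r-r}\,\alpha=g(r+\ell).
\]
This is exactly \(g\) subsymbols from each of the \(r+\ell\) helpers. So it suffices to construct an MDS array code in which the first \(g\) subsymbols of \(X_1,\ldots,X_r,L_1,\ldots,L_\ell\) determine all of \(P_1,\ldots,P_g\).

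\textbf{Construction.} Let \(M=(M_{u,j})\) be the \((g+\ell)\times r\) superregular matrix. Rows \(1,\ldots,\ell\) of \(M\) are assigned to the local parities and rows \(\ell+1,\ldots,\ell+g\) to the global parities. For every subsymbol index \(s\in[\alpha]\), define the base (unpiggybacked) values
\[
\widehat L_a[s]=\sum_{j\in[r]}M_{a,j}X_j[s],\qquad P_i[s]=\sum_{j\in[r]}M_{\ell+i,j}X_j[s].
\]
Set \(L_a[s]=\widehat L_a[s]\) for \(s>g\). For \(i\in[g]\), add a piggyback:
\[
L_a[i]=\widehat L_a[i]+P_i[g+a],\qquad a\in[\ell].
\]
This mirrors the example, where \(L_1[1]\) carries \(b_1+d_1=P_1[2]\). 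The code is systematic and linear. Each \(L_a\) depends only on the \(X_j\)'s.

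\textbf{Repair.} The coordinator reads \(X_j[1..g]\) and \(L_a[1..g]\).
\begin{enumerate}[label=(\roman*)]
\item From \(X_{[r]}[s]\) with \(s\le g\), it computes \(P_i[s]\) for all \(i\in[g]\) and \(s\in[g]\), and also \(\widehat L_a[i]\).
\item It then subtracts: \(L_a[i]-\widehat L_a[i]=P_i[g+a]\). This yields every remaining subsymbol \(P_i[g+a]\), \(a\in[\ell]\).
\end{enumerate}
Hence all \(g\) parity nodes are recovered with \(g(r+\ell)\) downloaded symbols, which meets the cut-set bound with equality.

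\textbf{MDS property.} This is the step requiring care. For each fixed row \(s\), the unpiggybacked code has generator \(\begin{pmatrix}I_r\\ M\end{pmatrix}\). This scalar code is \([r+\ell+g,r]\) MDS if and only if every \(r\times r\) submatrix of the generator is nonsingular. Expanding such a minor along the identity rows reduces it to a square submatrix of \(M\), which is nonsingular by superregularity.

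Now take any \(r\) surviving nodes.
\begin{enumerate}[label=(\roman*)]
\item In rows \(s=g+1,\ldots,\alpha\), no piggybacks are present. Hence the base MDS code recovers \(X_{[r]}[s]\) for these rows.
\item From these, the decoder computes every \(P_i[g+a]\). It subtracts them from any surviving \(L_a[i]\), \(i\le g\), which yields the base values \(\widehat L_a[i]\).
\item Rows \(1,\ldots,g\) then become ordinary base codewords restricted to the same \(r\) nodes. The base MDS code decodes them.
\end{enumerate}
Thus any \(r\) nodes determine the message, and the code is an \([r+\ell+g,r,\alpha]_q\) MDS array code. The Cauchy remark follows because a Cauchy matrix with \(g+\ell+r\) distinct field elements is superregular, which requires \(q\ge r+g+\ell\).
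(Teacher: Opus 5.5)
Your proposal is correct and follows the paper's proof essentially step for step. The paper uses the same piggyback $L_b=Q_b+\sum_{h\in[g]}q_{\ell+h,g+b}e_h$, which is your $L_a[i]=\widehat L_a[i]+P_i[g+a]$, together with the same repair by projecting onto the first $g$ coordinates and the same MDS argument that decodes the unpiggybacked last $\ell$ coordinates first. Your minor-expansion check that $\begin{pmatrix} I_r \\ M \end{pmatrix}$ generates a scalar MDS code spells out a step the paper only asserts from superregularity.
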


\begin{proof}
    Let \(\Gamma=(\gamma_{\rho,i})_{\rho\in[g+\ell],\,i\in[r]}\) be a \((g+\ell)\times r\) superregular matrix over \(\mathbb F_q\). For each \(i\in[r]\), let \(X_i=(x_{i,1},\ldots,x_{i,g+\ell})^\top\in\mathbb F_q^{g+\ell}\). For each \(\rho\in[g+\ell]\), define \(Q_\rho:=\sum_{i=1}^r\gamma_{\rho,i}X_i =(q_{\rho,1},\ldots,q_{\rho,g+\ell})^\top \in\mathbb F_q^{g+\ell}\) where \(q_{\rho,c}= \sum_{i=1}^{r}\gamma_{\rho,i}x_{i,c}, c\in[g+\ell].\) Let \(e_1,\ldots,e_{g+\ell}\) denote the standard basis vectors of \(\mathbb F_q^{g+\ell}\). Define the nodes \( L_b:=Q_b+\sum_{h=1}^g q_{\ell+h,g+b}e_h\), \(b\in[\ell]\) and $P_h:=Q_{\ell+h}, h\in[g]$.

    We first prove that the resulting array code is MDS. Without the piggyback terms, each coordinate induces an \([r+g+\ell,r]\) scalar MDS code, since the matrix \(\Gamma\) is superregular. The piggyback terms only modify the first \(g\) coordinates by adding symbols from the last \(\ell\) coordinates. Hence, given any \(r\) surviving nodes, we first decode the last \(\ell\) coordinates \(g+1,\ldots,g+\ell\). These coordinates are unaffected by the piggyback terms, and therefore the corresponding scalar MDS codes can be decoded directly. In particular, all symbols \(q_{\ell+h,g+b}\), \(h\in[g]\), \(b\in[\ell]\), are recovered.

    We then decode the first \(g\) coordinates. For each \(h\in[g]\), the piggyback terms in coordinate \(h\) are now known and can be subtracted from the surviving local parity symbols. After removing these terms, the remaining symbols are exactly those of the original \(h\)-th scalar MDS code. Hence the first \(g\) coordinates can also be decoded. Therefore, any \(r\) surviving nodes determine the whole message, and the resulting array code is MDS.

    It remains to verify the repair property. Let \(\pi:\mathbb F_q^{g+\ell}\to\mathbb F_q^g\) denote the projection onto the first \(g\) coordinates. To repair \(P_1,\ldots,P_g\), the decoder downloads \(\pi(X_i)\) from each information node \(X_i\) and \(\pi(L_b)\) from each node \(L_b\). From the downloaded information-node subsymbols, the decoder can compute \(\pi(Q_\rho)=(q_{\rho,1},\ldots,q_{\rho,g})^\top\) for all \(\rho\in[g+\ell]\). Since \(P_h=Q_{\ell+h}\), the first \(g\) coordinates of all global parity nodes are recovered.  Moreover, for each \(b\in[\ell]\),
    \[
        \pi(L_b)-\pi(Q_b) =(q_{\ell+1,g+b},\ldots,q_{\ell+g,g+b})^\top .
    \]
    Therefore, the \((g+b)\)-th coordinates of \(P_1,\ldots,P_g\) are recovered. Repeating this for every \(b\in[\ell]\) recovers all remaining coordinates of the global parity nodes. The total download is \(g(r+\ell)\), which equals the centralized repair cut-set bound
    \[
        \frac{g(r+\ell)}{g+(r+\ell)-r}\alpha=g(r+\ell),
    \]
    where \(\alpha=g+\ell\). Hence the repair achieves the cut-set bound and is optimal.
\end{proof}

We now construct the LRCC.  For each \(t\in[\lambda]\), the local groups of the \(t\)-th final codeword are indexed by \([\mu]^t\). For \(\tau\in[\mu]^t\), define $\iota(\tau):=\tau-(t-1)\mu\in[\mu]$, which gives the local index of the group \(\tau\) within the \(t\)-th final codeword.

Throughout this construction, each node is viewed as a column vector in \(\mathbb F_q^\alpha\). For each local group \(\tau\in[\lambda \mu]\), take an independent copy of the MDS array code in Lemma~\ref{lem:fixed_parity_repair1}. The \(r\) information nodes of this copy are \(X_{[r]^\tau}\), and the \(\ell\) local parity nodes are \(L_{[\ell]^\tau}\). This copy also produces \(g\) virtual global parity nodes \(P_{\tau,1},\ldots,P_{\tau,g}\in\mathbb F_q^\alpha\). Define the
stacked vector
\[
\mathbf P_\tau:=(P_{\tau,1}^{\top},\ldots,P_{\tau,g}^{\top})^{\top} \in\mathbb F_q^{g\alpha}.
\]

Now we choose mixing matrices from the block-scalar diagonal family. More precisely, for \(c_1,\ldots,c_g\in\mathbb F_q^*\), define $D(c_1,\ldots,c_g):=\operatorname{diag}(c_1I_\alpha,\ldots,c_gI_\alpha)\in\operatorname{GL}_{g\alpha}(\mathbb F_q).$ We choose $A_i=D(a_{i,1},\ldots,a_{i,g}), i\in[\mu]$ and $ B_t=D(b_{t,1},\ldots,b_{t,g}), t\in[\lambda],$ where all \(a_{i,h}\) and \(b_{t,h}\) are nonzero. These matrices are used to ensure that both the initial code and the final code are optimal-distance LRCs. The existence of such choices over sufficiently large finite fields follows from the mixing lemma in Appendix~\ref{appe}.

For the \(t\)-th final codeword, define the stacked vector of its \(g\) global parity nodes by
\[
    \mathbf G^F_{[g]^t}:= \sum_{\tau\in[\mu]^t} A_{\iota(\tau)}\mathbf P_\tau \in \mathbb F_q^{g\alpha}.
\]
Writing
\[
    \mathbf G^F_{[g]^t} = \big( (G^F_{(t-1)g+1})^\top,\ldots,(G^F_{tg})^\top \big)^\top, \qquad G^F_i\in\mathbb F_q^\alpha,
\]
the \(g\) consecutive \(\alpha\)-dimensional blocks \(G^F_i\), \(i\in[g]^t\), are stored as the \(g\) final global parity nodes of the \(t\)-th final codeword. Thus the \(t\)-th final codeword consists of the information nodes \(X_{[k^F]^t}\), the local parity nodes \(L_{[\mu\ell]^t}\), and the global parity nodes \(G^F_{[g]^t}\). All final codewords use the same matrices \(A_1,\ldots,A_\mu\), and hence are codewords of the same final LRC. By the choice of the matrices \(A_1,\ldots,A_\mu\), this final code is an optimal-distance \((k^F,g,r,\ell,\alpha)\)-LRC.

The initial codeword contains all \(\lambda \mu\) local groups. We define
the stacked vector of its \(g\) initial global parity nodes as
\[
    \mathbf G^I_{[g]} := \sum_{t=1}^{\lambda} B_t\mathbf G^F_{[g]^t} = \sum_{t=1}^{\lambda}\sum_{\tau\in[\mu]^t} B_tA_{\iota(\tau)}\mathbf P_\tau \in\mathbb F_q^{g\alpha}.
\]
Writing
\[
    \mathbf G^I_{[g]} = \big( (G^I_1)^\top,\ldots,(G^I_g)^\top \big)^\top, \qquad G^I_i\in\mathbb F_q^\alpha,
\]
the \(g\) consecutive \(\alpha\)-dimensional blocks \(G^I_1,\ldots,G^I_g\) are stored as the \(g\) initial global parity nodes. By the choice of the matrices \(A_1,\ldots,A_\mu\) and \(B_1,\ldots,B_\lambda\), the initial code is an optimal-distance \((k^I=\lambda k^F,g,r,\ell,\alpha)\)-LRC.

We next describe the conversion procedure. All information nodes and local parity nodes are preserved. The coordinator first downloads all \(g\) initial global parity nodes, equivalently the whole stacked vector \(\mathbf G^I_{[g]}\), which costs \(g\alpha\) symbols. Then, for each \(t\in[\lambda-1]\) and \(\tau\in[\mu]^t\), it uses Lemma~\ref{lem:fixed_parity_repair1} to recover \(\mathbf P_\tau\) from the preserved local nodes \(X_{[r]^\tau}\) and \(L_{[\ell]^\tau}\). This requires \(\frac{g(r+\ell)}{g+\ell}\alpha\) symbols for each local group. After obtaining all \(\mathbf P_\tau\) with \(\tau\in[\mu]^t\), the coordinator computes
\[
    \mathbf G^F_{[g]^t} = \sum_{\tau\in[\mu]^t} A_{\iota(\tau)}\mathbf P_\tau, \qquad t\in[\lambda-1].
\]
Finally, since \(B_\lambda\) is invertible, the stacked vector of the last
final global parity nodes is obtained as
\[
    \mathbf G^F_{[g]^\lambda} = B_\lambda^{-1} \left( \mathbf G^I_{[g]} - \sum_{t=1}^{\lambda-1}B_t\mathbf G^F_{[g]^t} \right).
\]
The vector \(\mathbf G^F_{[g]^\lambda}\) is then split into its \(g\) consecutive \(\alpha\)-dimensional blocks, which are stored as the global parity nodes \(G^F_{[g]^\lambda}\) of the last final codeword.

Therefore, the read bandwidth is
\[
    \gamma_R = g\alpha + (\lambda-1)\mu\frac{g(r+\ell)}{g+\ell}\alpha = g\alpha + (\lambda-1)\frac{g}{g+\ell}(k^F+\mu\ell)\alpha .
\]
When \(g^F=g^I=g\), the lower bound in Theorem~\ref{thm:final_thm} reduces to this quantity. Hence the construction attains the optimal read bandwidth in the case \(g^F=g^I\).

\subsection{\texorpdfstring{$g^F>g^I$}{gF > gI}}
\label{subsec5.2}
\subsubsection{An illustrative example}
Consider a global split conversion from an initial \((8,1,2,1,\alpha=3)\)-LRC codeword to two final \((4,2,2,1,\alpha=3)\)-LRC codewords. Here \(\lambda^F=2,\mu=2, g^I=1 \) and \(g^F=2\). We take \(\alpha=3\) and work over \(\mathbb F_5\).

Let the eight information nodes be \(X_{2i-1}=(a_i,b_i,c_i)^\top\) and \(X_{2i}=(d_i,e_i,f_i)^\top\), \(i\in[4]\). For each \(i\in[4]\), define \(A_i=a_i+d_i\), \(\bar A_i=b_i+e_i\), \(B_i=a_i+2d_i\), \(\bar B_i=b_i+2e_i\), \(C_i=c_i+f_i\) and \(D_i=c_i+2f_i\). In addition, define the local parity components \(L_i^{1}=a_i+3d_i+C_i\), \(L_i^{2}=b_i+3e_i+D_i\), and \(T_i=c_i+3f_i\). The local parity node of the \(i\)-th local group is \(L_i=(L_i^{1},L_i^{2},T_i)^\top\). The two virtual global parity nodes associated with this local group are \(P_{i,1}=(A_i,\bar A_i+B_i,C_i)^\top\) and \(P_{i,2}=(B_i,\bar B_i,D_i)^\top\).

The first final codeword has global parity nodes \(G^F_1=P_{1,1}+P_{2,1}\) and \(G^F_2=P_{1,2}+P_{2,2}\). The second final codeword has global parity nodes \(G^F_3=P_{3,1}+P_{4,1}\) and \(G^F_4=P_{3,2}+P_{4,2}\). The initial codeword has one global parity node \(G^I_1=G^F_1+G^F_3=\sum_{i=1}^4P_{i,1}\), namely \(G^I_1= \left( \sum_{i=1}^4 A_i,\, \sum_{i=1}^4(\bar A_i+B_i),\, \sum_{i=1}^4 C_i \right)^\top\).

The initial codeword and the two final codewords are shown in Table~\ref{tab:example_gf_gt_gi_two_layer}. The yellow entries in the initial codeword indicate the symbols read during conversion.

\begin{table*}[t]
\centering
\caption{The \((8,1,2,1,\alpha=3)\)-to-\((4,2,2,1,\alpha=3)\) split conversion example. Yellow entries are read during conversion.}
\label{tab:example_gf_gt_gi_two_layer}
\begingroup
\tinytabstyle
\setlength{\tabcolsep}{2.2pt}
\renewcommand{\arraystretch}{0.55}
\setlength{\fboxsep}{0.5pt}

\textbf{Initial codeword.}\par\vspace{0.2ex}
\begin{adjustbox}{max width=0.88\textwidth}
\begin{tabular}{c|c|c|c|c}
\hline
Pos. & Node & \(1\) & \(2\) & \(3\)\\
\hline
1  & \(X_1\)  & \readcell{\(a_1\)} & \readcell{\(b_1\)} & \(c_1\)\\
2  & \(X_2\)  & \readcell{\(d_1\)} & \readcell{\(e_1\)} & \(f_1\)\\
3  & \(L_1\)  & \readcell{\(L_1^{1}\)} & \readcell{\(L_1^{2}\)} & \(T_1\)\\
\hline
4  & \(X_3\)  & \readcell{\(a_2\)} & \readcell{\(b_2\)} & \(c_2\)\\
5  & \(X_4\)  & \readcell{\(d_2\)} & \readcell{\(e_2\)} & \(f_2\)\\
6  & \(L_2\)  & \readcell{\(L_2^{1}\)} & \readcell{\(L_2^{2}\)} & \(T_2\)\\
\hline
7  & \(X_5\)  & \(a_3\) & \readcell{\(b_3\)} & \(c_3\)\\
8  & \(X_6\)  & \(d_3\) & \readcell{\(e_3\)} & \(f_3\)\\
9  & \(L_3\)  & \(L_3^{1}\) & \readcell{\(L_3^{2}\)} & \(T_3\)\\
\hline
10 & \(X_7\)  & \(a_4\) & \readcell{\(b_4\)} & \(c_4\)\\
11 & \(X_8\)  & \(d_4\) & \readcell{\(e_4\)} & \(f_4\)\\
12 & \(L_4\)  & \(L_4^{1}\) & \readcell{\(L_4^{2}\)} & \(T_4\)\\
\hline
13 & \(G^I_1\)
& \readcell{\(\sum_{i=1}^4 A_i\)}
& \readcell{\(\sum_{i=1}^4(\bar A_i+B_i)\)}
& \readcell{\(\sum_{i=1}^4 C_i\)}\\
\hline
\end{tabular}
\end{adjustbox}

\vspace{0.8mm}

\textbf{Final codewords.}\par\vspace{0.2ex}
\begin{adjustbox}{max width=\textwidth}
\begin{tabular}{c|c|c|c|c||c|c|c|c|c}
\hline
Pos. & Node in \(C^F_1\) & \(1\) & \(2\) & \(3\)
& Pos. & Node in \(C^F_2\) & \(1\) & \(2\) & \(3\)\\
\hline
1 & \(X_1\) & \(a_1\) & \(b_1\) & \(c_1\)
& 1 & \(X_5\) & \(a_3\) & \(b_3\) & \(c_3\)\\
2 & \(X_2\) & \(d_1\) & \(e_1\) & \(f_1\)
& 2 & \(X_6\) & \(d_3\) & \(e_3\) & \(f_3\)\\
3 & \(L_1\) & \(L_1^{1}\) & \(L_1^{2}\) & \(T_1\)
& 3 & \(L_3\) & \(L_3^{1}\) & \(L_3^{2}\) & \(T_3\)\\
\hline
4 & \(X_3\) & \(a_2\) & \(b_2\) & \(c_2\)
& 4 & \(X_7\) & \(a_4\) & \(b_4\) & \(c_4\)\\
5 & \(X_4\) & \(d_2\) & \(e_2\) & \(f_2\)
& 5 & \(X_8\) & \(d_4\) & \(e_4\) & \(f_4\)\\
6 & \(L_2\) & \(L_2^{1}\) & \(L_2^{2}\) & \(T_2\)
& 6 & \(L_4\) & \(L_4^{1}\) & \(L_4^{2}\) & \(T_4\)\\
\hline
7 & \(G^F_1\) & \(A_1+A_2\) & \(\bar A_1+\bar A_2+B_1+B_2\) & \(C_1+C_2\) & 7 & \(G^F_3\) & \(A_3+A_4\) & \(\bar A_3+\bar A_4+B_3+B_4\) & \(C_3+C_4\)\\
8 & \(G^F_2\) & \(B_1+B_2\) & \(\bar B_1+\bar B_2\) & \(D_1+D_2\) & 8 & \(G^F_4\) & \(B_3+B_4\) & \(\bar B_3+\bar B_4\) & \(D_3+D_4\)\\
\hline
\end{tabular}
\end{adjustbox}

\endgroup
\end{table*}

A direct computation using Magma over \(\mathbb F_5\) gives
\(d(C^I)=3\) and \(d(C^F)=4\). Hence both the initial and final codes
are optimal-distance LRCs.

We now describe the conversion. First, the coordinator reads the whole initial global parity node \(G^I_1\), which costs \(3\) symbols. Then, for \(i\in[2]\), it reads \(a_i,b_i,d_i,e_i,L_i^{1},L_i^{2}\). From these symbols, it computes \(C_i=L_i^{1}-(a_i+3d_i)\) and \(D_i=L_i^{2}-(b_i+3e_i)\), and hence obtains both virtual global parity nodes \(P_{i,1}=(A_i,\bar A_i+B_i,C_i)^\top\) and \(P_{i,2}=(B_i,\bar B_i,D_i)^\top \). Therefore, it generates \(G^F_1=P_{1,1}+P_{2,1}\) and \(G^F_2=P_{1,2}+P_{2,2}\). This step reads \(12\) symbols.

Next, \(G^F_3=G^I_1-G^F_1\) is obtained. Write \(G^F_3=(A_{34},\bar A_{34}+B_{34},C_{34})^\top\), where \(A_{34}=A_3+A_4\), \(\bar A_{34}=\bar A_3+\bar A_4\), \(B_{34}=B_3+B_4\) and \(C_{34}=C_3+C_4\). For \(i=3,4\), the coordinator reads \(b_i,e_i,L_i^{2}\). It first computes \(\bar A_{34}=(b_3+e_3)+(b_4+e_4)\). Then, by subtracting \(\bar A_{34}\) from the second coordinate of \(G^F_3\), it obtains \(B_{34}=B_3+B_4\). Moreover, it computes \(\bar B_3+\bar B_4=(b_3+2e_3)+(b_4+2e_4)\) and \(D_3+D_4 = L_3^{2}+L_4^{2} -\bigl((b_3+3e_3)+(b_4+3e_4)\bigr)\). Thus \(G^F_4=(B_3+B_4,\bar B_3+\bar B_4,D_3+D_4)^\top\) is generated. This step reads \(6\) symbols.

Therefore, all final global parity nodes are generated. The total read bandwidth is $\gamma_R=3+12+6=21.$ For these parameters, the lower bound in Theorem~\ref{thm:final_thm} is also \(21\). Hence this conversion attains the optimal read bandwidth.

\subsubsection{General Construction}

Throughout this subsection, we assume \(g^I<g^F\le r\) and set $\alpha=g^F+\ell$. Although MDS array codes with universal optimal-repair properties, such as those guaranteed by Theorem~\ref{thm:existence}, may be used, only two prescribed repair patterns are needed here. First, all \(g^F\) global parity nodes must be optimally repaired from the \(r+\ell\) information and local parity nodes. Second, the last \(g^F-g^I\) global parity nodes must be optimally repaired from the same nodes when the first \(g^I\) global parity nodes are available as side information. We therefore construct a small-subpacketization MDS array code tailored to these two repair patterns, using one layer of local-parity piggybacks and one layer of head-global piggybacks.

\begin{lemma}
\label{lem:fixed_parity_repair_side}
  Let \(r,\ell,g^I,g^F\) be positive integers with \(g^I<g^F\le r\), and set \(\alpha=g^F+\ell\). Let \(\mathbb F_q\) be a finite field over which there exists a \((g^F+\ell)\times r\) superregular matrix. For example, a \((g^F+\ell)\times r\) Cauchy matrix over a field with \(q\ge r+g^F+\ell\) is superregular and satisfies the required condition. Then there exists a systematic linear \([r+\ell+g^F,r,\alpha]_q\) MDS array code with nodes
  \[
  X_1,\ldots,X_r,\quad L_1,\ldots,L_\ell,\quad P_1,\ldots,P_{g^F}
  \]
  such that \(P_1,\ldots,P_{g^F}\) can be optimally repaired from \(X_1,\ldots,X_r,L_1,\ldots,L_\ell\) by downloading the first \(g^F\) subsymbols from each helper node. Moreover, \(P_{g^I+1},\ldots,P_{g^F}\) can be optimally repaired from \(X_1,\ldots,X_r,L_1,\ldots,L_\ell,P_1,\ldots,P_{g^I}\) by downloading the \(g^F-g^I\) subsymbols in coordinate positions \(g^I+1,\ldots,g^F\) from each helper node. In particular, when \(P_1,\ldots,P_{g^I}\) are available as side information, \(P_{g^I+1},\ldots,P_{g^F}\) can be recovered by downloading only these \(g^F-g^I\) subsymbols from each of \(X_1,\ldots,X_r,L_1,\ldots,L_\ell\).
\end{lemma}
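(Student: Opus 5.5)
We will generalize the two-layer piggyback pattern of the illustrative example, building on the construction in the proof of Lemma~\ref{lem:fixed_parity_repair1}. Split the coordinate positions of \(\mathbb F_q^{\alpha}\), \(\alpha=g^F+\ell\), into three ranges: the head \([g^I]\), the middle \(\{g^I+1,\ldots,g^F\}\), and the tail \(\{g^F+1,\ldots,g^F+\ell\}\). Let \(\Gamma=(\gamma_{\rho,i})\) be the \((g^F+\ell)\times r\) superregular matrix. Put \(Q_\rho=\sum_{i}\gamma_{\rho,i}X_i=(q_{\rho,1},\ldots,q_{\rho,\alpha})^\top\) for \(\rho\in[g^F+\ell]\). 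The nodes are defined with two layers of piggybacks:
\[
L_b:=Q_b+\sum_{h=1}^{g^F} q_{\ell+h,\,g^F+b}\,e_h \quad (b\in[\ell]),\qquad
P_h:=Q_{\ell+h}+\sum_{h'=g^I+1}^{g^F} q_{\ell+h',\,h}\,e_{h'}\quad (h\in[g^I]),
\]
and \(P_h:=Q_{\ell+h}\) for \(g^I<h\le g^F\). In the first layer, the tail coordinates of all global parities are piggybacked onto the first \(g^F\) coordinates of the local parities. In the second layer, the head coordinates of the last \(g^F-g^I\) global parities are piggybacked onto the middle coordinates of the first \(g^I\) global parities. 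The counts match: there are \(g^I(g^F-g^I)\) such symbols and exactly \(g^I(g^F-g^I)\) free middle slots. With \(g^I=1,g^F=2,\ell=1\), this reproduces the example.

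\emph{MDS property.} Without piggybacks, each coordinate is a scalar \([r+\ell+g^F,r]\) MDS code, since \(\Gamma\) is superregular. Given any \(r\) surviving nodes, we decode in the following order.
\begin{enumerate}
\item Decode the tail coordinates, which carry no piggybacks. This reveals all \(q_{\rho,c}\) with \(c\) in the tail.
\item Decode the head coordinates. Their only piggybacks sit in the local parities and are tail symbols, which are now known and can be subtracted.
\item Decode the middle coordinates. Their piggybacks are tail symbols in the \(L_b\) and head symbols \(q_{\ell+h',h}\) in the \(P_h\), and both are known after the first two steps.
\end{enumerate}
Verifying that every piggyback term is known before its coordinate is decoded is the main point requiring care. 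The chosen ordering tail \(\to\) head \(\to\) middle is what makes it work.

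\emph{First repair pattern.} Download the first \(g^F\) coordinates of every \(X_i\) and \(L_b\). This gives \(q_{\rho,c}\) for all \(\rho\) and all \(c\le g^F\). Hence we obtain the head and middle coordinates of every \(Q_{\ell+h}\), together with the second-layer piggyback terms, so the first \(g^F\) coordinates of every \(P_h\) are known. Next, \(\pi(L_b)-\pi(Q_b)\) yields \(q_{\ell+h,g^F+b}\) for all \(h\), which are the tail coordinates. The total download is \(g^F(r+\ell)\). This equals the cut-set bound of Theorem~\ref{thm:cut_set_bound} with erased-set size \(g^F\) and \(r+\ell\) helpers, namely \(\frac{g^F(r+\ell)}{g^F+\ell}\alpha\).

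\emph{Second repair pattern.} Download the middle coordinates \(g^I+1,\ldots,g^F\) from all \(r+\ell+g^I\) helpers, including \(P_1,\ldots,P_{g^I}\). The information nodes give \(q_{\rho,c}\) for all \(\rho\) and middle \(c\); in particular this yields the middle coordinates of \(P_{h'}\), \(h'>g^I\). The local parities give \(q_{\ell+h',g^F+b}\), the tail coordinates, after subtracting the known \(q_{b,c}\). Each middle entry \(h'\) of \(P_h\) minus \(q_{\ell+h,h'}\), which is known, gives \(q_{\ell+h',h}\), the head coordinates. The total download is \((g^F-g^I)(r+\ell+g^I)\). This equals \(\frac{(g^F-g^I)(r+\ell+g^I)}{(g^F-g^I)+(r+\ell+g^I)-r}\alpha\), so the repair is optimal.
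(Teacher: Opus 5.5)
Your proposal is correct and matches the paper's proof: the same two-layer piggyback construction (tail coordinates of the global parities piggybacked onto the first \(g^F\) positions of the \(L_b\), and the head symbols \(q_{\ell+h',h}\) piggybacked onto the middle positions of \(P_1,\ldots,P_{g^I}\)). You also use the same tail\(\to\)head\(\to\)middle decoding order for the MDS property and the same two download schemes meeting the cut-set bound. You leave the final side-information clause implicit, but it follows at once: if \(P_1,\ldots,P_{g^I}\) are already available, you simply skip downloading their middle-coordinate subsymbols.
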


\begin{proof}
    Let \(\Gamma=(\gamma_{\rho,i})_{\rho\in[g^F+\ell],\,i\in[r]}\) be a \((g^F+\ell)\times r\) superregular matrix over \(\mathbb F_q\). For each \(i\in[r]\), let \(X_i=(x_{i,1},\ldots,x_{i,\alpha})^\top\in\mathbb F_q^\alpha\). For each \(\rho\in[g^F+\ell]\), define \(Q_\rho:=\sum_{i=1}^r\gamma_{\rho,i}X_i =(q_{\rho,1},\ldots,q_{\rho,\alpha})^\top\in\mathbb F_q^\alpha\) where \(q_{\rho,c}=\sum_{i=1}^r\gamma_{\rho,i}x_{i,c}\) for \(c\in[\alpha]\).

    Let \(e_1,\ldots,e_\alpha\) denote the standard basis vectors of \(\mathbb F_q^\alpha\). Define $L_b:=Q_b+\sum_{h=1}^{g^F}q_{\ell+h,g^F+b}e_h,  b\in[\ell].$ For \(j=g^I+1,\ldots,g^F\), define \(P_j:=Q_{\ell+j}\); for \(h\in[g^I]\), define \(P_h:=Q_{\ell+h}+\sum_{j=g^I+1}^{g^F}q_{\ell+j,h}e_j\). Thus, for \(h\in[g^I]\) and \(j=g^I+1,\ldots,g^F\), $P_h[j]=q_{\ell+h,j}+q_{\ell+j,h}$, while \(P_h[c]=q_{\ell+h,c}\) for all other coordinates \(c\). Together with the systematic nodes \(X_1,\ldots,X_r\), the nodes \(L_1,\ldots,L_\ell\) and \(P_1,\ldots,P_{g^F}\) define a systematic linear array code.

    We first prove that the resulting array code is MDS. Before adding the piggyback terms, each coordinate induces an \([r+\ell+g^F,r]\) scalar MDS code, since \(\Gamma\) is superregular. The piggybacks are triangular with respect to the coordinate order
    \[
    g^F+1,\ldots,g^F+\ell,\quad
    1,\ldots,g^I,\quad
    g^I+1,\ldots,g^F .
    \]
    Indeed, the local-parity piggybacks in coordinates \(1,\ldots,g^F\) depend only on the last \(\ell\) coordinates, while the head-global piggybacks in coordinates \(g^I+1,\ldots,g^F\) depend only on coordinates \(1,\ldots,g^I\).

    Fix any set of \(r\) surviving nodes. We first decode the last \(\ell\) coordinates \(g^F+1,\ldots,g^F+\ell\). These coordinates contain no piggyback terms, so each of them is decoded directly as a scalar MDS instance. In particular, all symbols \(q_{\ell+h,g^F+b}\), \(h\in[g^F]\), \(b\in[\ell]\), are known.

    We then decode coordinates \(1,\ldots,g^I\). After subtracting the known local-parity piggybacks from the surviving local parity subsymbols, each of these coordinates again reduces to an unpiggybacked scalar MDS instance. Hence all information subsymbols in coordinates \(1,\ldots,g^I\) are recovered. In particular, all symbols \(q_{\ell+j,h}\), \(j=g^I+1,\ldots,g^F\), \(h\in[g^I]\), are known.

    Finally, consider a coordinate \(j\in\{g^I+1,\ldots,g^F\}\). The local-parity piggybacks are known from the last \(\ell\) coordinates, and the head-global piggybacks are known from coordinates \(1,\ldots,g^I\). After subtracting these terms, the surviving subsymbols in coordinate \(j\) are exactly symbols of the unpiggybacked \(j\)-th scalar MDS instance. Thus coordinates \(g^I+1,\ldots,g^F\) are also decoded.

    Therefore any \(r\) surviving nodes determine the whole message, and the array code is MDS.

    It remains to verify the two repair properties. Let \(\pi:\mathbb F_q^\alpha\to\mathbb F_q^{g^F}\) be the projection onto the first \(g^F\) coordinates. For the first repair, the decoder downloads \(\pi(X_i)\) from each \(X_i\) and \(\pi(L_b)\) from each \(L_b\). From the downloaded subsymbols of the systematic nodes, it computes \(\pi(Q_\rho)=(q_{\rho,1},\ldots,q_{\rho,g^F})^\top\) for every \(\rho\in[g^F+\ell]\). Moreover, for each \(b\in[\ell]\),
    \[
    \pi(L_b)-\pi(Q_b)=(q_{\ell+1,g^F+b},\ldots,q_{\ell+g^F,g^F+b})^\top.
    \]
    Thus the \((g^F+b)\)-th coordinates of \(Q_{\ell+1},\ldots,Q_{\ell+g^F}\) are recovered. As \(b\) ranges over \([\ell]\), all remaining coordinates of these vectors are recovered. Therefore \(Q_{\ell+1},\ldots,Q_{\ell+g^F}\), and hence \(P_1,\ldots,P_{g^F}\), are completely determined. The total download is \(g^F(r+\ell)\) symbols. Since \(\alpha=g^F+\ell\), the centralized repair cut-set bound for repairing \(g^F\) nodes from \(r+\ell\) helper nodes is
    \[
    \frac{g^F(r+\ell)}{g^F+(r+\ell)-r}\alpha = g^F(r+\ell),
    \]
    where \(\alpha=g^F+\ell\). Hence the first repair is optimal.
 
    For the second repair, let $\pi':\mathbb F_q^\alpha\to\mathbb F_q^{g^F-g^I}$ be the projection onto coordinate positions \(g^I+1,\ldots,g^F\). The decoder downloads \(\pi'(X_i)\) from each \(X_i\), \(\pi'(L_b)\) from each \(L_b\), and \(\pi'(P_h)\) from each \(P_h\), \(h\in[g^I]\). From the downloaded subsymbols of the systematic nodes, the decoder computes \(q_{\rho,c}\) for every \(\rho\in[g^F+\ell]\) and \(c\in\{g^I+1,\ldots,g^F\}\). Now fix \(j\in\{g^I+1,\ldots,g^F\}\). The subsymbols of \(P_j=Q_{\ell+j}\) in positions \(g^I+1,\ldots,g^F\) are already known from the systematic downloads.

    For each \(b\in[\ell]\), the downloaded local parity subsymbol satisfies \(L_b[j]=q_{b,j}+q_{\ell+j,g^F+b}.\) Since \(q_{b,j}\) is computable from the systematic downloads, this determines \(q_{\ell+j,g^F+b}\). Hence the last \(\ell\) coordinates of \(Q_{\ell+j}\) are known. Moreover, for every \(h\in[g^I]\), the downloaded subsymbol of \(P_h\) satisfies \(P_h[j]=q_{\ell+h,j}+q_{\ell+j,h}\). Since \(q_{\ell+h,j}\) is computable from the systematic downloads, this determines \(q_{\ell+j,h}\). Thus the first \(g^I\) coordinates of \(Q_{\ell+j}\) are also known. It follows that \(P_j=Q_{\ell+j}\) is completely recovered. Repeating this argument for every \(j\in\{g^I+1,\ldots,g^F\}\) recovers \(P_{g^I+1},\ldots,P_{g^F}\).

    The second repair downloads \(g^F-g^I\) subsymbols from each of \(r+\ell+g^I\) helper nodes, for a total of \((g^F-g^I)(r+\ell+g^I)\) symbols. Since \(\alpha=g^F+\ell\), the centralized repair cut-set bound is
    \[
    \frac{(g^F-g^I)(r+\ell+g^I)} {(g^F-g^I)+(r+\ell+g^I)-r}\alpha = (g^F-g^I)(r+\ell+g^I).
    \]
    Hence the second repair is also optimal. If \(P_1,\ldots,P_{g^I}\) are already available as side information, then their \(\pi'\)-projections need not be downloaded, and the same recovery uses downloads only from \(X_1,\ldots,X_r,L_1,\ldots,L_\ell\).
\end{proof}

We now construct the LRCC. For each \(t\in[\lambda]\), the local groups of the \(t\)-th final codeword are indexed by \([\mu]^t\). For \(\tau\in[\mu]^t\), define \(\iota(\tau):=\tau-(t-1)\mu\in[\mu]\), which denotes the local index of the group \(\tau\) inside the \(t\)-th final codeword.

Throughout the construction, each node is viewed as a column vector in \(\mathbb F_q^\alpha\). For every local group \(\tau\in[\lambda\mu]\), take an independent copy of the underlying array block used in Lemma~\ref{lem:fixed_parity_repair_side}. Its information nodes and local parity nodes are \(X_{[r]^\tau}\) and \(L_{[\ell]^\tau}\), respectively. The local parity nodes already contain the local-parity piggybacks. For \(h\in[g^F]\), let \(P_{\tau,h}\in\mathbb F_q^\alpha\) denote the body virtual global parity vector corresponding to \(Q_{\ell+h}\) in that copy. The head-global piggybacks are not included in \(P_{\tau,h}\); they will be introduced after mixing the local groups. Write
\[
\mathbf P_\tau := \bigl( P_{\tau,1}^\top,\ldots,P_{\tau,g^F}^\top \bigr)^\top \in\mathbb F_q^{g^F\alpha}.
\]

We next choose the mixing matrices. Let \(\mathcal C_0\) denote the unpiggybacked MDS array code underlying Lemma~\ref{lem:fixed_parity_repair_side} whose local parity nodes are \(Q_1,\ldots,Q_\ell\) and whose virtual global parity nodes are \(Q_{\ell+1},\ldots,Q_{\ell+g^F}\). For \(c_1,\ldots,c_{g^F}\in\mathbb F_q^*\), define
\[
D(c_1,\ldots,c_{g^F}) := \operatorname{diag} \bigl( c_1I_\alpha,\ldots,c_{g^F}I_\alpha \bigr).
\]

To describe the mixing step compactly, we use the following notation. Given \(N\) copies \(\mathcal C_0^{(1)},\ldots,\mathcal C_0^{(N)}\) of \(\mathcal C_0\), each encoding a disjoint collection of \(r\) message blocks, let $\mathbf P^{(i)}\in\mathbb F_q^{g^F\alpha}$ denote the stacked vector of the \(g^F\) virtual global parity nodes of the \(i\)-th copy. For block-scalar diagonal matrices \(M_1,\ldots,M_N\), let $\mathcal L_{\mathcal C_0}(M_1,\ldots,M_N)$ denote the systematic \((Nr,g^F,r,\ell,\alpha)\)-LRC whose \(N\) local groups are formed by the information and local parity nodes of these copies and whose stacked global parity vector is $\sum_{i=1}^N M_i\mathbf P^{(i)}.$
The formal definition of this auxiliary construction is given in
Appendix~\ref{appe}.

Applying Lemma~\ref{lem:mixing_auxiliary_lrc} to \(\mathcal C_0\) with the parameter \(g\) in Appendix~\ref{appe} instantiated as \(g^F\), over a sufficiently large finite field, we can choose
\[
A_i = D(a_{i,1},\ldots,a_{i,g^F}), \quad i\in[\mu], \qquad B_t = D(b_{t,1},\ldots,b_{t,g^F}),
\quad t\in[\lambda],
\]
where all \(a_{i,h}\) and \(b_{t,h}\) are nonzero, such that \(\mathcal L_{\mathcal{C}_0}(A_1,\ldots,A_\mu)\) and $\mathcal L_{\mathcal{C}_0} \bigl( B_tA_i:t\in[\lambda],\,i\in[\mu] \bigr)$ are optimal-distance LRCs. 

We first define the final code. For each \(t\in[\lambda]\), let
\[
\mathbf Y_t := \sum_{\tau\in[\mu]^t} A_{\iota(\tau)}\mathbf P_\tau = \bigl( Y_{t,1}^\top,\ldots,Y_{t,g^F}^\top \bigr)^\top \in\mathbb F_q^{g^F\alpha},
\]
where \(Y_{t,h}\in\mathbb F_q^\alpha\). For
\(h\in[g^F]\), define the corresponding final global parity node as
\[
G^F_{(t-1)g^F+h} :=
\begin{cases}
\displaystyle
Y_{t,h} + \sum_{j=g^I+1}^{g^F} Y_{t,j}[h]e_j, & h\in[g^I],\\[2mm]
Y_{t,h}, & h=g^I+1,\ldots,g^F,
\end{cases}
\]
where \(e_j\) is the \(j\)-th standard basis vector of \(\mathbb F_q^\alpha\).

For \(h\in[g^I]\), the preceding definition can equivalently be written as
\[
G^F_{(t-1)g^F+h} = \sum_{\tau\in[\mu]^t} a_{\iota(\tau),h} \left( P_{\tau,h} + \sum_{j=g^I+1}^{g^F} \frac{a_{\iota(\tau),j}}      {a_{\iota(\tau),h}} P_{\tau,j}[h]e_j \right).
\]
Indeed, before the \(h\)-th virtual global parity vector of a local group is multiplied by \(a_{\iota(\tau),h}\), the associated piggyback term from \(P_{\tau,j}[h]\) is assigned the coefficient \(a_{\iota(\tau),j}/a_{\iota(\tau),h}\). After multiplication, this term has coefficient \(a_{\iota(\tau),j}\), which is exactly the coefficient of \(P_{\tau,j}\) in \(Y_{t,j}\). Consequently,
\[
G^F_{(t-1)g^F+h}[j] = Y_{t,h}[j]+Y_{t,j}[h], \qquad h\in[g^I],\quad j=g^I+1,\ldots,g^F.
\]
This is precisely the coefficient-compatible head-global piggyback
required in the conversion.

The \(t\)-th final codeword consists of the information nodes \(X_{[k^F]^t}\), the local parity nodes \(L_{[\mu\ell]^t}\), and the global parity nodes \(G^F_{[g^F]^t}\). Since the same matrices \(A_1,\ldots,A_\mu\) and the same coefficient-compatible piggyback rule are used for every \(t\), all final codewords are codewords of the same final LRC.

We next define the initial code. It contains all \(\lambda \mu\) local groups and \(g^I\) global parity nodes. For \(h\in[g^I]\), define
\[
G^I_h := \sum_{t=1}^{\lambda} b_{t,h}G^F_{(t-1)g^F+h}.
\]
After removing the head-global piggybacks, the body part of \(G^I_h\) is \(\sum_{t=1}^{\lambda}b_{t,h}Y_{t,h}\), whose coefficient on \(P_{\tau,h}\), for \(\tau\in[\mu]^t\), is \(b_{t,h}a_{\iota(\tau),h}\). Hence the \(g^I\) initial body global parity nodes are precisely the first \(g^I\) global parity nodes obtained by puncturing \(\mathcal L_{\mathcal{C}_0}(B_tA_i:t\in[\lambda],\,i\in[\mu])\). By construction, all information nodes and local parity nodes are preserved during conversion, and therefore the conversion is stable.

We now verify that the initial and final codes are optimal-distance LRCs. Before adding the piggyback layers, the final body code is \(\mathcal L_{\mathcal{C}_0}(A_1,\ldots,A_\mu)\), and the initial body code with \(g^F\) global parity nodes is \(\mathcal L_{\mathcal{C}_0}(B_tA_i:t\in[\lambda],\,i\in[\mu])\). Both body codes are optimal-distance by Lemma~\ref{lem:mixing_auxiliary_lrc}. Puncturing the last \(g^F-g^I\) global parity nodes from the latter gives a body initial code whose minimum distance is at least $g^F+\ell+1-(g^F-g^I)=g^I+\ell+1.$ The Singleton-type bound for a \(( k^I,g^I,r,\ell,\alpha)\)-LRC gives the reverse inequality. Hence the punctured initial body code is also optimal-distance.

It remains to show that the two piggyback layers do not change the erasure-correcting capability. The local-parity piggybacks add symbols from coordinate \(g^F+b\) to coordinate \(h\), where \(b\in[\ell]\) and \(h\in[g^F]\). The head-global piggybacks add symbols depending on coordinates \(h\in[g^I]\) to coordinates \(j\in\{g^I+1,\ldots,g^F\}\). Therefore, all dependencies are acyclic under the coordinate order $g^F+1,\ldots,g^F+\ell,\quad 1,\ldots,g^I,\quad g^I+1,\ldots,g^F.$ Since the unpiggybacked code \(\mathcal C_0\) has no cross-coordinate coupling and the matrices \(A_i\) and \(B_t\) are block-scalar, any correctable node-erasure pattern can be decoded in this order. Namely, one first decodes coordinates \(g^F+1,\ldots,g^F+\ell\), then subtracts the known local-parity piggybacks and decodes coordinates \(1,\ldots,g^I\), and finally subtracts both types of piggybacks before decoding coordinates \(g^I+1,\ldots,g^F\). Hence the piggyback layers do not change the erasure-correcting capability of the corresponding body codes. Consequently, the final code is an optimal-distance \((k^F,g^F,r,\ell,\alpha)\)-LRC, and the initial code is an optimal-distance \((k^I,g^I,r,\ell,\alpha)\)-LRC.

We now describe the conversion procedure. The coordinator first downloads all initial global parity nodes \(G^I_1,\ldots,G^I_{g^I}\), which costs \(g^I\alpha\) symbols.

Next, for each \(t\in[\lambda-1]\) and every \(\tau\in[\mu]^t\), the coordinator downloads the first \(g^F\) subsymbols from every node in \(X_{[r]^\tau},L_{[\ell]^\tau}\). By the repair property in Lemma~\ref{lem:fixed_parity_repair_side}, these downloads recover all body virtual global parity vectors \(P_{\tau,1},\ldots,P_{\tau,g^F}\). The coordinator can therefore compute \(\mathbf Y_t\) and generate all final global parity nodes \(G^F_{[g^F]^t}\). The read cost of this step is
\[
(\lambda-1)\mu g^F(r+\ell) = (\lambda-1) \frac{g^F}{g^F+\ell} (k^F+\mu\ell)\alpha.
\]

It remains to generate the last final codeword. For every \(h\in[g^I]\), since \(b_{\lambda,h}\ne0\), the coordinator obtains
\[
G^F_{(\lambda-1)g^F+h} = b_{\lambda,h}^{-1} \left( G^I_h - \sum_{t=1}^{\lambda-1} b_{t,h}G^F_{(t-1)g^F+h} \right).
\]
Thus, the first \(g^I\) global parity nodes of the last final codeword are recovered.

To recover the remaining global parity nodes, the coordinator downloads from each node in \(X_{[k^F]^\lambda},L_{[\mu\ell]^\lambda}\) the \(g^F-g^I\) subsymbols in coordinate positions \(g^I+1,\ldots,g^F\). Fix \(j\in\{g^I+1,\ldots,g^F\}\). The downloaded information-node subsymbols determine \(Y_{\lambda,j}[c]\) for \(c=g^I+1,\ldots,g^F\), and also determine \(Y_{\lambda,h}[j]\) for every \(h\in[g^I]\). By the local-parity piggyback relation in Lemma~\ref{lem:fixed_parity_repair_side}, the downloaded local parity subsymbols determine \(Y_{\lambda,j}[g^F+1],\ldots,Y_{\lambda,j}[g^F+\ell]\). Finally, for every \(h\in[g^I]\), the already recovered head-global parity node satisfies
\[
G^F_{(\lambda-1)g^F+h}[j] = Y_{\lambda,h}[j]+Y_{\lambda,j}[h],
\]
which determines \(Y_{\lambda,j}[h]\). Therefore, the whole vector \(Y_{\lambda,j}\), and hence \(G^F_{(\lambda-1)g^F+j}=Y_{\lambda,j}\), is recovered. Repeating this argument for every \(j\in\{g^I+1,\ldots,g^F\}\) generates all remaining final global parity nodes. The read cost of this step is
\[
(g^F-g^I)(k^F+\mu\ell) = \frac{g^F-g^I}{g^F+\ell} (k^F+\mu\ell)\alpha.
\]

Hence the total read bandwidth is
\[
\begin{aligned}
\gamma_R &= g^I\alpha + (\lambda-1) \frac{g^F}{g^F+\ell} (k^F+\mu\ell)\alpha + \frac{g^F-g^I}{g^F+\ell} (k^F+\mu\ell)\alpha\\
&= \lambda \frac{g^F}{g^F+\ell} (k^F+\mu\ell)\alpha - g^I \left( \frac{k^F+\mu\ell}{g^F+\ell}-1 \right)\alpha.
\end{aligned}
\]
Since \(\lambda=\lambda^F\), this is exactly the \(g^F>g^I\) lower bound in Theorem~\ref{thm:final_thm}. Therefore the construction attains the optimal read bandwidth.

\subsection{\texorpdfstring{$g^F<g^I$}{gF < gI}}
\label{subsec5.3}
\subsubsection{An illustrative example}
Consider a global split conversion from an initial \((8,2,2,1,\alpha=5)\)-LRC codeword to two final \((4,1,2,1,\alpha=5)\)-LRC codewords. Here \(\lambda^F=2\), \(\mu=2\), \(g^I=2\), and \(g^F=1\). We take \(\alpha=5\) and work over \(\mathbb F_5\). Let \(\zeta=1\), \(\beta=2\), and \(\gamma=3\). The five subsymbol coordinates are partitioned into two prefix blocks, \(\mathcal{B}_1=(1,2)\) and \(\mathcal{B}_2=(3,4)\), together with a tail block \(\mathcal{T}=(5)\).

We use a logical cyclic ordering of the two prefix blocks in the final codewords, while keeping the tail block last. For the first final codeword, the logical block order is \(\mathcal B_1,\mathcal B_2,\mathcal T\), whereas for the second final codeword, it is \(\mathcal B_2,\mathcal B_1,\mathcal T\). Equivalently, define $\Pi_1(x_1,x_2,x_3,x_4,x_5)^\top =(x_1,x_2,x_3,x_4,x_5)^\top $ and $\Pi_2(x_1,x_2,x_3,x_4,x_5)^\top =(x_3,x_4,x_1,x_2,x_5)^\top.$ For every node in the \(t\)-th final codeword, we write its subsymbol vector in the logical coordinate order induced by \(\Pi_t\). This is only a relabeling of subsymbol coordinates: the physical storage order remains unchanged, and no preserved information or local parity node is moved or rewritten.

Let the eight information nodes be \(X_{2i-1}=(a_{i1},b_{i1},a_{i2},b_{i2},c_i)^\top\) and \(X_{2i}=(d_{i1},e_{i1},d_{i2},e_{i2},f_i)^\top\), for \(i\in[4]\). Also, for \(i\in[4]\) and \(s\in[2]\), define $A_{is}=a_{is}+d_{is}, \bar A_{is}=b_{is}+e_{is}$, $L^1_{is}=a_{is}+2d_{is}+b_{is}+e_{is}, L^2_{is}=b_{is}+2e_{is}$, $B_{is}=a_{is}+3d_{is}$ and $ \bar B_{is}=b_{is}+3e_{is}$. The local parity node of the \(i\)-th local group is $L_i=(L^1_{i1},L^2_{i1},L^1_{i2},L^2_{i2},T_i)^\top$, where $T_i= \begin{cases} 
c_i+\zeta f_i, & i\in\{1,3\},\\
\zeta^2c_i+\zeta^3f_i, & i\in\{2,4\}.
\end{cases}$

Define $R_1=c_1+\beta f_1+\beta^2c_2+\beta^3f_2$, $R_2=c_3+\beta f_3+\beta^2c_4+\beta^3f_4$, and for \(\theta\in\mathbb F_5^\ast\), $Z(\theta)= c_1+\theta f_1+\theta^2c_2+\theta^3f_2
+\theta^4c_3+\theta^5f_3+\theta^6c_4+\theta^7f_4$.

In physical coordinate order, the first final global parity node is
\[
G_1^F= \bigl( A_{11}+A_{21},\, \bar A_{11}+\bar A_{21},\, A_{12}+A_{22},\, \bar A_{12}+\bar A_{22},\, R_1+B_{11}+B_{21} \bigr)^\top.
\]
The second final global parity node is physically stored as
\[
G_{2,\mathrm{phy}}^F= \bigl( A_{31}+A_{41},\, \bar A_{31}+\bar A_{41},\, A_{32}+A_{42},\, \bar A_{32}+\bar A_{42},\, R_2+B_{32}+B_{42} \bigr)^\top.
\]
However, the second final codeword is described in the logical coordinate order \(\Pi_2\). Hence its logical final global parity node is
\[
G_2^F=\Pi_2G_{2,\mathrm{phy}}^F = \bigl( A_{32}+A_{42},\, \bar A_{32}+\bar A_{42},\, A_{31}+A_{41},\, \bar A_{31}+\bar A_{41},\, R_2+B_{32}+B_{42} \bigr)^\top.
\]
Under this logical ordering, both final codewords follow the same encoding rule: their local parity nodes use the same componentwise construction, and in each final global parity node the tail piggyback is attached to the first logical prefix block. Hence the two final codewords are codewords of the same final LRC.

The two initial global parity nodes are
\[
G_1^I= \biggl( \sum_{i=1}^4 A_{i1},\, \sum_{i=1}^4 \bar A_{i1},\, \sum_{i=1}^4 A_{i2},\, \sum_{i=1}^4 \bar A_{i2},\, Z(\beta) \biggr)^\top,
\]
and
\[
G_2^I= \biggl( \sum_{i=1}^4 B_{i1}+R_1,\, \sum_{i=1}^4 \bar B_{i1},\, \sum_{i=1}^4 B_{i2}+R_2,\, \sum_{i=1}^4 \bar B_{i2},\, Z(\gamma) \biggr)^\top.
\]

The initial codeword and the two final codewords are shown in Table~\ref{tab:example_gi_gt_gf_initial}. The yellow entries in the initial codeword are the symbols read during conversion. The final codewords are displayed in logical coordinate order; for the second final codeword, this means that coordinates \(1,2\) and \(3,4\) are swapped relative to the physical storage order.

\begin{table*}[t]
\centering
\caption{The \((8,2,2,1,\alpha=5)\)-to-\((4,1,2,1,\alpha=5)\) split conversion example. Yellow entries are read during conversion.}
\label{tab:example_gi_gt_gf_initial}
\begingroup
\tinytabstyle

\textbf{Initial codeword.}\par\vspace{0.2ex}
\begin{adjustbox}{max width=\textwidth}
\begin{tabular}{c|c|c|c|c|c|c}
\hline
Pos. & Node & \(1\) & \(2\) & \(3\) & \(4\) & \(5\)\\
\hline
1  & \(X_1\) & \(a_{11}\) & \(b_{11}\) & \readcell{\(a_{12}\)} & \(b_{12}\) & \(c_1\)\\
2  & \(X_2\) & \(d_{11}\) & \(e_{11}\) & \readcell{\(d_{12}\)} & \(e_{12}\) & \(f_1\)\\
3  & \(L_1\) & \(L^{1}_{11}\) & \(L^{2}_{11}\) & \readcell{\(L^{1}_{12}\)} & \(L^{2}_{12}\) & \(T_1\)\\
\hline
4  & \(X_3\) & \(a_{21}\) & \(b_{21}\) & \readcell{\(a_{22}\)} & \(b_{22}\) & \(c_2\)\\
5  & \(X_4\) & \(d_{21}\) & \(e_{21}\) & \readcell{\(d_{22}\)} & \(e_{22}\) & \(f_2\)\\
6  & \(L_2\) & \(L^{1}_{21}\) & \(L^{2}_{21}\) & \readcell{\(L^{1}_{22}\)} & \(L^{2}_{22}\) & \(T_2\)\\
\hline
7  & \(X_5\) & \readcell{\(a_{31}\)} & \(b_{31}\) & \(a_{32}\) & \(b_{32}\) & \(c_3\)\\
8  & \(X_6\) & \readcell{\(d_{31}\)} & \(e_{31}\) & \(d_{32}\) & \(e_{32}\) & \(f_3\)\\
9  & \(L_3\) & \readcell{\(L^{1}_{31}\)} & \(L^{2}_{31}\) & \(L^{1}_{32}\) & \(L^{2}_{32}\) & \(T_3\)\\
\hline
10 & \(X_7\) & \readcell{\(a_{41}\)} & \(b_{41}\) & \(a_{42}\) & \(b_{42}\) & \(c_4\)\\
11 & \(X_8\) & \readcell{\(d_{41}\)} & \(e_{41}\) & \(d_{42}\) & \(e_{42}\) & \(f_4\)\\
12 & \(L_4\) & \readcell{\(L^{1}_{41}\)} & \(L^{2}_{41}\) & \(L^{1}_{42}\) & \(L^{2}_{42}\) & \(T_4\)\\
\hline
13 & \(G^I_1\)
& \readcell{\(\sum_{i=1}^4 A_{i1}\)}
& \readcell{\(\sum_{i=1}^4 \bar{A}_{i1}\)}
& \readcell{\(\sum_{i=1}^4 A_{i2}\)}
& \readcell{\(\sum_{i=1}^4 \bar{A}_{i2}\)}
& \(Z(\beta)\)\\
14 & \(G^I_2\)
& \readcell{\(\sum_{i=1}^4 B_{i1}+R_1\)}
& \(\sum_{i=1}^4 \bar{B}_{i1}\)
& \readcell{\(\sum_{i=1}^4 B_{i2}+R_2\)}
& \(\sum_{i=1}^4 \bar{B}_{i2}\)
& \(Z(\gamma)\)\\
\hline
\end{tabular}
\end{adjustbox}

\vspace{0.8mm}

\textbf{Final codewords in logical coordinate order.}\par\vspace{0.2ex}
\begin{adjustbox}{max width=\textwidth}
\begin{tabular}{c|c|c|c|c|c|c||c|c|c|c|c|c|c}
\hline
Pos. & Node in \(C^F_1\) & \(1\) & \(2\) & \(3\) & \(4\) & \(5\)
& Pos. & Node in \(C^F_2\) & \(1\) & \(2\) & \(3\) & \(4\) & \(5\)\\
\hline
1 & \(X_1\) & \(a_{11}\) & \(b_{11}\) & \(a_{12}\) & \(b_{12}\) & \(c_1\)
& 1 & \(X_5\) & \(a_{32}\) & \(b_{32}\) & \(a_{31}\) & \(b_{31}\) & \(c_3\)\\
2 & \(X_2\) & \(d_{11}\) & \(e_{11}\) & \(d_{12}\) & \(e_{12}\) & \(f_1\)
& 2 & \(X_6\) & \(d_{32}\) & \(e_{32}\) & \(d_{31}\) & \(e_{31}\) & \(f_3\)\\
3 & \(L_1\) & \(L^{1}_{11}\) & \(L^{2}_{11}\) & \(L^{1}_{12}\) & \(L^{2}_{12}\) & \(T_1\)
& 3 & \(L_3\) & \(L^{1}_{32}\) & \(L^{2}_{32}\) & \(L^{1}_{31}\) & \(L^{2}_{31}\) & \(T_3\)\\
\hline
4 & \(X_3\) & \(a_{21}\) & \(b_{21}\) & \(a_{22}\) & \(b_{22}\) & \(c_2\)
& 4 & \(X_7\) & \(a_{42}\) & \(b_{42}\) & \(a_{41}\) & \(b_{41}\) & \(c_4\)\\
5 & \(X_4\) & \(d_{21}\) & \(e_{21}\) & \(d_{22}\) & \(e_{22}\) & \(f_2\)
& 5 & \(X_8\) & \(d_{42}\) & \(e_{42}\) & \(d_{41}\) & \(e_{41}\) & \(f_4\)\\
6 & \(L_2\) & \(L^{1}_{21}\) & \(L^{2}_{21}\) & \(L^{1}_{22}\) & \(L^{2}_{22}\) & \(T_2\)
& 6 & \(L_4\) & \(L^{1}_{42}\) & \(L^{2}_{42}\) & \(L^{1}_{41}\) & \(L^{2}_{41}\) & \(T_4\)\\
\hline
7 & \(G^F_1\)
& \(A_{11}+A_{21}\)
& \(\bar{A}_{11}+\bar{A}_{21}\)
& \(A_{12}+A_{22}\)
& \(\bar{A}_{12}+\bar{A}_{22}\)
& \(R_1+B_{11}+B_{21}\)
& 7 & \(G^F_2\)
& \(A_{32}+A_{42}\)
& \(\bar{A}_{32}+\bar{A}_{42}\)
& \(A_{31}+A_{41}\)
& \(\bar{A}_{31}+\bar{A}_{41}\)
& \(R_2+B_{32}+B_{42}\)\\
\hline
\end{tabular}
\end{adjustbox}

\endgroup
\end{table*}

To generate \(G_1^F\), the coordinator reads the third physical coordinate
from each of \(X_1,X_2,L_1,X_3,X_4,L_2\). Equivalently, for each
\(i\in[2]\), it downloads \(a_{i2}\), \(d_{i2}\), and \(L^1_{i2}\).
By definition, $A_{i2}=a_{i2}+d_{i2}, \bar A_{i2}=L^1_{i2}-(a_{i2}+2d_{i2})$ and $B_{i2}=a_{i2}+3d_{i2}$. Hence the coordinator obtains $G_1^F[3]=A_{12}+A_{22}, G_1^F[4]=\bar A_{12}+\bar A_{22}$, together with \(B_{12}+B_{22}\).

Similarly, the coordinator reads the first physical coordinate of $X_5,X_6,L_3,X_7,X_8,L_4$. This gives $A_{31}+A_{41}, \bar A_{31}+\bar A_{41}$, and also gives \(B_{31}+B_{41}\). Since the second final codeword is written in the logical order \(\Pi_2\), the first two physical prefix symbols become the third and fourth logical coordinates of \(G_2^F\), namely $G^F_{2}[3]=A_{31}+A_{41}, G^F_{2}[4]=\bar A_{31}+\bar A_{41}$.

The remaining prefix coordinates are obtained from the first initial global parity node: $G^F_{1}[1]=G^I_{1}[1]-G^F_{2}[3], G^F_{1}[2]=G^I_{1}[2]-G^F_{2}[4]$, and $G^F_{2}[1]=G^I_{1}[3]-G^F_{1}[3], G^F_{2}[2]=G^I_{1}[4]-G^F_{1}[4]$. Finally, the two tail coordinates are recovered from the first and third coordinates of \(G_2^I\), which contain the two tail piggybacks: $G^F_{1}[5] =G^I_{2}[1]-(B_{31}+B_{41}) =R_1+B_{11}+B_{21}$, and $G^F_{2}[5] =G^I_{2}[3]-(B_{12}+B_{22}) =R_2+B_{32}+B_{42}$. Thus both final global parity nodes are recovered.

A direct rank computation in Magma over \(\mathbb F_5\) gives $d(C^F)=3, d(C^I)=4$. Hence both the initial and final codes are optimal-distance LRCs.

The read bandwidth is $\gamma_R=12+6=18$. Here, \(12\) symbols are read from the unchanged information and local parity nodes, while \(6\) symbols are read from the initial global parity nodes. For these parameters, the lower bound in Theorem~\ref{thm:final_thm} is also \(18\), hence this conversion attains the optimal read bandwidth.

\subsubsection{General Construction}
Throughout this subsection, we assume \(g^F<g^I\le r\) and set  $\alpha_0=g^F+\ell, \alpha=\lambda(g^F+\ell)+(g^I-g^F)$. The construction requires an MDS array block with a prescribed aligned-recovery property: while optimally repairing the first \(g^F\) global parity nodes, the downloaded symbols must simultaneously determine the first \(g^F\) subsymbols of each of the remaining \(g^I-g^F\) global parity nodes. We first construct a small-subpacketization MDS array code having this property.

\begin{lemma}
\label{lem:fixed_parity_repair_aligned}
    Let \(r,\ell,g^F,g^I\) be positive integers with \(g^F<g^I\le r\), and set \(\alpha_0:=g^F+\ell\). Let \(\mathbb F_q\) be a finite field over which there exists a \((g^I+\ell)\times r\) superregular matrix. Such a matrix can be obtained, for example, from a Cauchy matrix whenever \(q\ge r+g^I+\ell\). Then there exists a systematic linear \([r+\ell+g^I,r,\alpha_0]_q\) MDS array code with nodes
    \[
    X_1,\ldots,X_r,\quad L_1,\ldots,L_\ell,\quad P_1,\ldots,P_{g^I}
    \]
    such that \(P_1,\ldots,P_{g^F}\) can be optimally repaired from \(X_1,\ldots,X_r,L_1,\ldots,L_\ell\) by downloading the first \(g^F\) subsymbols from each helper node. Moreover, the same downloads determine the first \(g^F\) subsymbols of each of \(P_{g^F+1},\ldots,P_{g^I}\).
\end{lemma}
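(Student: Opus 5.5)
We will reuse the piggyback design of Lemma~\ref{lem:fixed_parity_repair1}, now with $g^I$ rather than $g^F$ body global parities. Let $\Gamma=(\gamma_{\rho,i})$ be a $(g^I+\ell)\times r$ superregular matrix. For each $i\in[r]$ write $X_i=(x_{i,1},\ldots,x_{i,\alpha_0})^\top$. For $\rho\in[g^I+\ell]$ set $Q_\rho:=\sum_{i=1}^r\gamma_{\rho,i}X_i$ with entries $q_{\rho,c}$. The local parities carry the piggybacks of the first $g^F$ virtual global parities only:
\[
L_b:=Q_b+\sum_{h=1}^{g^F}q_{\ell+h,g^F+b}\,e_h,\qquad b\in[\ell].
\]
The global parities are $P_h:=Q_{\ell+h}$ for all $h\in[g^I]$, with no piggyback terms. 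Thus $P_{g^F+1},\ldots,P_{g^I}$ are plain coordinatewise MDS parities.

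\textbf{MDS property.} Without piggybacks, each coordinate $c\in[\alpha_0]$ gives an $[r+\ell+g^I,r]$ scalar MDS code, because $\Gamma$ is superregular. The piggybacks only add symbols from coordinates $g^F+1,\ldots,g^F+\ell$ into coordinates $1,\ldots,g^F$. Given any $r$ surviving nodes, we decode in two stages. First decode the last $\ell$ coordinates directly; this yields every $q_{\ell+h,g^F+b}$. Then subtract the now-known piggybacks from the surviving local parity subsymbols in coordinates $1,\ldots,g^F$, and decode those coordinates as unpiggybacked scalar MDS instances. This is the same triangular argument as in Lemma~\ref{lem:fixed_parity_repair1}.

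\textbf{Repair and aligned recovery.} Let $\pi$ be the projection onto the first $g^F$ coordinates, and download $\pi(X_i)$ and $\pi(L_b)$.
\begin{itemize}
\item From the $\pi(X_i)$ we compute $\pi(Q_\rho)$ for every $\rho\in[g^I+\ell]$. This immediately gives the first $g^F$ subsymbols of all $P_h$, $h\in[g^I]$, which is the aligned claim for $P_{g^F+1},\ldots,P_{g^I}$.
\item For each $b\in[\ell]$, $\pi(L_b)-\pi(Q_b)=(q_{\ell+1,g^F+b},\ldots,q_{\ell+g^F,g^F+b})^\top$. This supplies the remaining $\ell$ coordinates of $P_1,\ldots,P_{g^F}$, so these nodes are fully recovered.
\end{itemize}
The total download is $g^F(r+\ell)$. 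This equals the cut-set bound $\frac{g^F(r+\ell)}{g^F+\ell}\alpha_0$ of Theorem~\ref{thm:cut_set_bound} with $h=g^F$ and $d=r+\ell$, so the repair is optimal.

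\textbf{Main point to check.} The real issue is the placement of the piggybacks. If $P_{g^F+1},\ldots,P_{g^I}$ also carried piggybacks in their first $g^F$ coordinates, the aligned recovery would fail. Keeping them piggyback-free makes alignment automatic and leaves the MDS proof unchanged. The only requirement is superregularity of the full $(g^I+\ell)\times r$ matrix, which the hypothesis provides. I expect no genuine obstacle beyond confirming that the decoding order stays acyclic.
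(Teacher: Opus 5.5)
Your proposal is correct and follows the paper's proof essentially line for line. You use the same $(g^I+\ell)\times r$ superregular base code, put local-parity piggybacks only on the first $g^F$ coordinates, leave all $P_h=Q_{\ell+h}$ for $h\in[g^I]$ unpiggybacked, and then give the same two-stage triangular MDS decoding and the same $\pi$-projection repair, aligned recovery and cut-set comparison.
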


\begin{proof}
    Let \(\Gamma=(\gamma_{\rho,i})_{\rho\in[g^I+\ell],\,i\in[r]}\) be a \((g^I+\ell)\times r\) superregular matrix over \(\mathbb F_q\). For each \(i\in[r]\), let \(X_i=(x_{i,1},\ldots,x_{i,\alpha_0})^\top \in\mathbb F_q^{\alpha_0}\). For each \(\rho\in[g^I+\ell]\), define \(Q_\rho=\sum_{i=1}^r\gamma_{\rho,i}X_i =(q_{\rho,1},\ldots,q_{\rho,\alpha_0})^\top \in\mathbb F_q^{\alpha_0}\) where \(q_{\rho,c}=\sum_{i=1}^r\gamma_{\rho,i}x_{i,c}\) for \(c\in[\alpha_0]\).

    Let \(e_1,\ldots,e_{\alpha_0}\) denote the standard basis vectors of \(\mathbb F_q^{\alpha_0}\). Define \(L_b:=Q_b+\sum_{h=1}^{g^F}q_{\ell+h,g^F+b}e_h\), \(b\in[\ell]\) and \(P_h:=Q_{\ell+h}\), \(h\in[g^I]\).

    We first prove that the resulting array code is MDS. Without the piggyback terms, each coordinate induces an \([r+g^I+\ell,r]\) scalar MDS code, since \(\Gamma\) is superregular. The piggyback terms only modify the first \(g^F\) coordinates by adding symbols from the last \(\ell\) coordinates. Hence, given any \(r\) surviving nodes, we first decode the last \(\ell\) coordinates \(g^F+1,\ldots,g^F+\ell\). These coordinates are unaffected by the piggyback terms, and therefore the corresponding scalar MDS codes can be decoded directly. In particular, all symbols $q_{\ell+h,g^F+b}, h\in[g^F],\ b\in[\ell]$ are recovered.

    We then decode the first \(g^F\) coordinates. For each \(h\in[g^F]\), the piggyback terms in coordinate \(h\) are now known and can be subtracted from the surviving local parity subsymbols. After removing these terms, the remaining subsymbols are exactly those of the unpiggybacked \(h\)-th scalar MDS code. Hence the first \(g^F\) coordinates can also be decoded. Therefore, any \(r\) surviving nodes determine the whole message, and the resulting array code is MDS.

    It remains to verify the repair and aligned-recovery property. Let \(\pi:\mathbb F_q^{\alpha_0}\to\mathbb F_q^{g^F}\) denote the projection onto the first \(g^F\) coordinates. To repair \(P_1,\ldots,P_{g^F}\), the decoder downloads \(\pi(X_i)\) from each information node \(X_i\) and \(\pi(L_b)\) from each local parity node \(L_b\). From the downloaded information-node subsymbols, the decoder can compute \(\pi(Q_\rho)=(q_{\rho,1},\ldots,q_{\rho,g^F})^\top\) for every \(\rho\in[g^I+\ell]\). Hence the first \(g^F\) coordinates of \(P_1,\ldots,P_{g^I}\) are recovered. Moreover, for each \(b\in[\ell]\),
    \[
    \pi(L_b)-\pi(Q_b) =  \bigl( q_{\ell+1,g^F+b},\ldots,q_{\ell+g^F,g^F+b} \bigr)^\top.
    \]
    Therefore, the \((g^F+b)\)-th coordinates of \(P_1,\ldots,P_{g^F}\) are recovered. Repeating this for every \(b\in[\ell]\) recovers all remaining coordinates of these global parity nodes. Thus \(P_1,\ldots,P_{g^F}\) are completely recovered, while the same downloads also determine the first \(g^F\) coordinates of \(P_{g^F+1},\ldots,P_{g^I}\).

    The total download is \(g^F(r+\ell)\), which equals the centralized repair cut-set bound
    \[
    \frac{g^F(r+\ell)}  {g^F+(r+\ell)-r}\alpha_0 = g^F(r+\ell),
    \]
    where \(\alpha_0=g^F+\ell\). Hence the repair of \(P_1,\ldots,P_{g^F}\) is optimal.
\end{proof}

We now construct the LRCC for the case \(g^F<g^I\). Since we take \(\alpha=\lambda\alpha_0+m =\lambda(g^F+\ell)+(g^I-g^F)\), each node in both the initial and final codes is a column vector in \(\mathbb F_q^\alpha\). We partition its coordinates into \(\lambda\) prefix blocks \(\mathcal B_1,\ldots,\mathcal B_\lambda\), together with one tail block \(\mathcal T\), where each prefix block \(\mathcal B_s\) has length \(\alpha_0=g^F+\ell\), while the tail block \(\mathcal T\) has length \(m=g^I-g^F\).

For each \(t\in[\lambda]\), the local groups of the \(t\)-th final codeword are indexed by \([\mu]^t\). For \(\tau\in[\mu]^t\), let $\iota(\tau):=\tau-(t-1)\mu\in[\mu]$, which denotes the local index of the group \(\tau\) inside the \(t\)-th final codeword.

For each \(\tau\in[\lambda\mu]\) and \(s\in[\lambda]\), take an independent copy of the MDS array code in Lemma~\ref{lem:fixed_parity_repair_aligned}. Use the \(r\) information nodes of this copy to define the \(\mathcal B_s\)-parts of \(X_{[r]^\tau}\), and use its \(\ell\) local parity nodes to define the \(\mathcal B_s\)-parts of \(L_{[\ell]^\tau}\). Denote the \(g^I\) global parity nodes produced by this copy, which will be used only as virtual nodes, by $P_{\tau,1}^{(s)},\ldots,P_{\tau,g^I}^{(s)} \in\mathbb F_q^{\alpha_0}$. Define
$$
\mathbf P_\tau^{(s)} := \big( (P_{\tau,1}^{(s)})^\top,\ldots, (P_{\tau,g^I}^{(s)})^\top \big)^\top \in\mathbb F_q^{g^I\alpha_0}.
$$

We next decompose the coordinate space \(\mathbb F_q^{g^I\alpha_0}\). For \(v\in\mathbb F_q^{g^I\alpha_0}\), write $v=\bigl(v_1^\top,\ldots,v_{g^I}^\top\bigr)^\top$, where $v_j=(v_{j,1},\ldots,v_{j,\alpha_0})^\top\in\mathbb F_q^{\alpha_0}$. Recall that \(\alpha_0=g^F+\ell\).  We decompose $\mathbb F_q^{g^I\alpha_0} = \mathcal F\oplus\mathcal E\oplus\mathcal W$ as follows. The subspace \(\mathcal F\) is supported on the first \(g^F\) \(\alpha_0\)-blocks, namely on the coordinates $v_{j,c}$ with \(j\in[g^F]\) and \(c\in[\alpha_0]\). The remaining \(m=g^I-g^F\) blocks are divided into two coordinate subspaces. The subspace \(\mathcal E\) is supported on $v_{g^F+b,a}, b\in[m],\ a\in[g^F]$ while \(\mathcal W\) is supported on  $v_{g^F+b,g^F+c}, b\in[m],\ c\in[\ell]$. Hence
$\dim\mathcal F=g^F(g^F+\ell), \dim\mathcal E=(g^I-g^F)g^F$ and $ \dim\mathcal W=(g^I-g^F)\ell$. Let $\mathcal U:=\mathcal F\oplus\mathcal E$, and denote the coordinate projections onto \(\mathcal F\), \(\mathcal E\), and \(\mathcal U\) by \(\pi_{\mathcal F}\), \(\pi_{\mathcal E}\), and \(\pi_{\mathcal U}\), respectively.

We view \(\pi_{\mathcal F}(v)\) as \(g^F\) consecutive blocks of length \(\alpha_0\). For \(\pi_{\mathcal E}(v)\), fix the block order
$$
\pi_{\mathcal E}(v) = \bigl( v_{g^F+1,1},\ldots,v_{g^F+1,g^F}; v_{g^F+2,1},\ldots,v_{g^F+2,g^F}; \ldots; v_{g^I,1},\ldots,v_{g^I,g^F} \bigr)^\top \in\mathbb F_q^{g^F m}.
$$
Accordingly, \(\pi_{\mathcal E}(v)\) is viewed as \(m\) consecutive blocks of length \(g^F\).

Let $\operatorname{emb}_{\mathcal E}:\mathbb F_q^{g^F m} \to\mathbb F_q^{g^I\alpha_0}$ denote the coordinate embedding onto \(\mathcal E\), with zeros on \(\mathcal F\) and \(\mathcal W\). Explicitly, for $z=\bigl( z_{1,1},\ldots,z_{1,g^F}; z_{2,1},\ldots,z_{2,g^F}; \ldots; z_{m,1},\ldots,z_{m,g^F} \bigr)^\top$, the vector \(\operatorname{emb}_{\mathcal E}(z)\) places \(z_{b,a}\) in coordinate \((g^F+b,a)\), namely the \(a\)-th coordinate of the \((g^F+b)\)-th \(\alpha_0\)-block, for \(b\in[m]\) and \(a\in[g^F]\).

We apply Lemma~\ref{lem:mixing_auxiliary_lrc} in Appendix~\ref{appe} to the MDS array code $\mathcal{C}$ constructed in Lemma~\ref{lem:fixed_parity_repair_aligned}, with the parameter \(g\) in the appendix specialized to \(g^I\). This yields block-scalar diagonal matrices \(A_i\), \(i\in[\mu]\), and \(B_t\), \(t\in[\lambda]\), which will be used in the construction below. We will later verify that, with these choices, both the initial and final codes are optimal-distance LRCs. 

In addition, the conversion procedure below requires every such matrix \(M\) to satisfy \(M\mathcal W\subseteq\mathcal W\). This condition holds automatically, since a block-scalar diagonal matrix scales each \(\alpha_0\)-block by a nonzero scalar and therefore preserves \(\mathcal W\). Consequently, \(\pi_{\mathcal U}(Mv)\) is determined by \(\pi_{\mathcal U}(v)\), and the induced map  \(x\mapsto\pi_{\mathcal U}(Mx)\) is invertible on \(\mathcal U\).

For \(t,s\in[\lambda]\), define
\[
\mathbf{G}_{t,s}:= \sum_{\tau\in[\mu]^t} A_{\iota(\tau)}\mathbf{P}_\tau^{(s)} \in\mathbb F_q^{g^I\alpha_0}.
\]
We first define the prefix parts of the final global parity nodes. For the \(t\)-th final codeword and the prefix block \(\mathcal B_s\), let
\[
\mathbf{G}^F_{[g^F]^t,s} := \pi_{\mathcal F}(\mathbf{G}_{t,s}) \in\mathbb F_q^{g^F\alpha_0}.
\]
Write
\[
\mathbf{G}^F_{[g^F]^t,s} = \bigl( (G^F_{(t-1)g^F+1,s})^\top,\ldots, (G^F_{tg^F,s})^\top \bigr)^\top, \qquad G^F_{i,s}\in\mathbb F_q^{\alpha_0},
\]
for each \(i\in[g^F]^t\), define the \(\mathcal B_s\)-part of the final global parity node \(G_i^F\) to be \(G^F_{i,s}\).

It remains to define the tail parts of the final global parity nodes. We first prepare scalar tail codes shared by the initial and final codewords. Choose a systematic scalar \([r+\ell+g^I,r]_q\) MDS code. For each \(\tau\in[\lambda\mu]\) and \(b\in[m]\), take an independent copy of this code. Use its \(r\) information symbols to define the \(b\)-th coordinates of the \(\mathcal T\)-parts of \(X_{[r]^\tau}\), and use its \(\ell\) local parity symbols to define the \(b\)-th coordinates of the \(\mathcal T\)-parts of \(L_{[\ell]^\tau}\).

For each \(b\in[m]\), apply Lemma~\ref{lem:mixing_auxiliary_lrc} in Appendix~\ref{appe} with \(g=g^I\) and \(\alpha=1\) to the \(\lambda\mu\) scalar MDS copies associated with the \(b\)-th tail coordinate. The copies can be mixed so that, for every \(t\in[\lambda]\), the \(\mu\) copies indexed by \([\mu]^t\), together with their global parity symbols, form a codeword of the same optimal-distance \((k^F,g^I,r,\ell)\)-LRC, while all \(\lambda\mu\) copies together form a codeword of an optimal-distance \((k^I,g^I,r,\ell)\)-LRC.

Puncturing the last \(m=g^I-g^F\) global parity symbols from the common final scalar code yields an optimal-distance \((k^F,g^F,r,\ell)\)-LRC. We use this punctured code to define the final tail parity symbols, while the scalar initial code will be used later to define the tail parts of the initial global parity nodes.

For \(t\in[\lambda]\), \(a\in[g^F]\), and \(b\in[m]\), let \(R_{t,a,b}^F\) denote the \(a\)-th global parity symbol of the \(t\)-th final scalar codeword associated with the \(b\)-th tail coordinate. For \(i=(t-1)g^F+a\in[g^F]^t\), define the tail parity vector
\[
 R_i^F := \bigl( R_{t,a,1}^F,\ldots,R_{t,a,m}^F \bigr)^\top \in\mathbb F_q^m.
\]
We also arrange these symbols in the \(\mathcal E\)-coordinate order as
\[
\mathbf R_{t,\mathcal E}^F := \bigl( R_{t,1,1}^F,\ldots,R_{t,g^F,1}^F; \ R_{t,1,2}^F,\ldots,R_{t,g^F,2}^F; \ \ldots; \ R_{t,1,m}^F,\ldots,R_{t,g^F,m}^F \bigr)^\top \in\mathbb F_q^{g^F m}.
\]

For each \(t\in[\lambda]\), define the symbols \(E_{t,b,a}^F\), \(b\in[m]\), \(a\in[g^F]\), by writing the \(\mathcal E\)-part of the diagonal mixed vector as
\[
\pi_{\mathcal E}(\mathbf G_{t,t}) = \bigl( E_{t,1,1}^F,\ldots,E_{t,1,g^F}^F; \ E_{t,2,1}^F,\ldots,E_{t,2,g^F}^F; \ \ldots; \ E_{t,m,1}^F,\ldots,E_{t,m,g^F}^F \bigr)^\top.
\]
For \(i=(t-1)g^F+a\in[g^F]^t\), define the tail piggyback vector
\[
 E_i^F := \bigl( E_{t,1,a}^F,\ldots,E_{t,m,a}^F \bigr)^\top \in\mathbb F_q^m.
\]

Under the fixed \(\mathcal E\)-coordinate order, the \((b,a)\)-th coordinate of $\pi_{\mathcal E}(\mathbf G_{t,t}) + \mathbf R_{t,\mathcal E}^F$ is $E_{t,b,a}^F+R_{t,a,b}^F.$
Hence, after regrouping the coordinates by the global parity index \(a\), define the \(\mathcal T\)-part of \(G_i^F\) to be $R_i^F+E_i^F.$

Combining the prefix and tail parts, for each \(i\in[g^F]^t\), define
\[
G_i^F := \bigl( (G^F_{i,1})^\top,\ldots, (G^F_{i,\lambda})^\top, (R_i^F+E_i^F)^\top \bigr)^\top \in\mathbb F_q^\alpha.
\]
The nodes \(G_i^F\), \(i\in[g^F]^t\), are the \(g^F\) final global parity nodes of the \(t\)-th final codeword.

We next verify that all final codewords are codewords of the same final LRC. The preceding definition uses the physical block order $\mathcal B_1,\ldots,\mathcal B_\lambda,\mathcal T.$
For each \(t\in[\lambda]\), define the cyclic permutation \(\sigma_t:[\lambda]\to[\lambda]\) by $\sigma_t(j) := 1+\bigl((t+j-2)\bmod\lambda\bigr),
 j\in[\lambda]$.
We describe the \(t\)-th final codeword in the logical block order $\mathcal B_{\sigma_t(1)},\ldots, \mathcal B_{\sigma_t(\lambda)},\mathcal T$.
Since \(\sigma_t(1)=t\), the diagonal mixed vector \(\mathbf G_{t,t}\), whose \(\mathcal E\)-part defines the tail piggyback, corresponds to the first logical prefix block \(\mathcal B_t\).

This cyclic ordering only relabels the subsymbol coordinates: the physical storage order remains unchanged, and no preserved information or local parity node is moved or rewritten. In logical coordinates, every final codeword uses the same prefix encoding rule determined by \(\mathcal C\) and \(A_1,\ldots,A_\mu\), the same collection of scalar tail codes, and the same rule for deriving the tail piggyback from the \(\mathcal E\)-part associated with the first logical prefix block. Hence all final codewords are codewords of the same final LRC.

We next define the initial global parity nodes. For each \(s\in[\lambda]\), define the stacked vector of their \(\mathcal B_s\)-parts by
\[
\mathbf G^I_{[g^I],s} := \sum_{t\in[\lambda]} B_t\mathbf G_{t,s} + B_s\operatorname{emb}_{\mathcal E} \bigl(\mathbf R_{s,\mathcal E}^F\bigr) \in\mathbb F_q^{g^I\alpha_0}.
\]
Write
\[
\mathbf G^I_{[g^I],s} = \bigl( (G^I_{1,s})^\top,\ldots, (G^I_{g^I,s})^\top \bigr)^\top, \qquad G^I_{a,s}\in\mathbb F_q^{\alpha_0},
\]
for each \(a\in[g^I]\), define the \(\mathcal B_s\)-part of the initial global parity node \(G_a^I\) to be \(G_{a,s}^I\).

For the tail block \(\mathcal T\), use the initial scalar codeword associated with each \(b\in[m]\). For \(a\in[g^I]\) and \(b\in[m]\), let \(R_{a,b}^I\) denote the \(a\)-th global parity symbol in the initial scalar codeword associated with the \(b\)-th tail coordinate. Define
\[
 R^I_a := (R^I_{a,1},\ldots,R^I_{a,m})^\top \in\mathbb F_q^m.
\]
Combining the prefix and tail parts, for each \(a\in[g^I]\), define
\[
G^I_a := \bigl( (G^I_{a,1})^\top,\ldots, (G^I_{a,\lambda})^\top, ( R^I_a)^\top \bigr)^\top \in\mathbb F_q^\alpha.
\]
The nodes \(G_1^I,\ldots,G_{g^I}^I\) are the initial global parity nodes. By construction, all information and local parity nodes are preserved during conversion. Hence the conversion is stable.

We now verify that the initial and final codes are optimal-distance LRCs. Before adding the prefix-tail piggybacks, each prefix block of every final codeword is obtained by puncturing the last \(m=g^I-g^F\) global parity nodes from \(\mathcal L_{\mathcal C}(A_1,\ldots,A_\mu)\). Since the latter is an optimal-distance \((k^F,g^I,r,\ell,\alpha_0)\)-LRC, the punctured code has minimum distance at least $g^I+\ell+1-m=g^F+\ell+1$.
The Singleton-type bound gives the reverse inequality, so each prefix block is an optimal-distance \((k^F,g^F,r,\ell,\alpha_0)\)-LRC. Each tail coordinate is encoded by the same optimal-distance scalar \((k^F,g^F,r,\ell)\)-LRC. Hence every erasure pattern of at most \(g^F+\ell\) nodes can be decoded independently in all prefix blocks and tail coordinates, and their concatenation is an optimal-distance \((k^F,g^F,r,\ell,\alpha)\)-LRC. The logical cyclic ordering does not affect this property.

In the actual final code, \(E_i^F\) is piggybacked from the diagonal prefix block onto the \(\mathcal T\)-part of \(G_i^F\). The prefix blocks are decoded first, after which the vectors \(E_i^F\) are computed and subtracted from the corresponding tail parts. The scalar tail codes can then be decoded. Thus the piggybacks do not change the optimal-distance property of the final code.

Similarly, before adding the prefix-tail piggybacks, each prefix block of the initial code is the optimal-distance LRC $\mathcal L_{\mathcal C} \bigl( B_tA_i:t\in[\lambda],\,i\in[\mu] \bigr)$, while each tail coordinate is encoded by the optimal-distance scalar initial LRC. By the same coordinatewise decoding argument, their concatenation is an optimal-distance \((k^I,g^I,r,\ell,\alpha)\)-LRC. In the actual initial code, the term $B_s\operatorname{emb}_{\mathcal E} \bigl(\mathbf R_{s,\mathcal E}^F\bigr)$ is added to the \(\mathcal B_s\)-part. One first decodes the scalar initial tail codes, which determine \(\mathbf R_{s,\mathcal E}^F\), and then subtracts this piggyback term before decoding the corresponding prefix block. Hence the initial code is also optimal-distance.

We now describe the conversion procedure. First, for each \(s\in[\lambda]\), the coordinator downloads \(\pi_{\mathcal U}(\mathbf G^I_{[g^I],s})\) from the \(\mathcal B_s\)-parts of the initial global parity nodes. Since $\dim\mathcal U = g^F(g^F+\ell)+(g^I-g^F)g^F = g^F(g^I+\ell),$ this step reads \(\lambda g^F(g^I+\ell)\) symbols from the initial global parity nodes.

Next, for every \(t,s\in[\lambda]\) with \(t\ne s\), and for every
local group \(\tau\in[\mu]^t\), the coordinator downloads the first \(g^F\) subsymbols of the \(\mathcal B_s\)-part of every node in \(X_{[r]^\tau}\) and \(L_{[\ell]^\tau}\). By Lemma~\ref{lem:fixed_parity_repair_aligned}, these downloads determine \(\pi_{\mathcal U}\bigl(\mathbf P_\tau^{(s)}\bigr)\). Since \(A_{\iota(\tau)}\) and \(B_tA_{\iota(\tau)}\) preserve \(\mathcal W\), the coordinator can compute $\pi_{\mathcal U} \bigl(A_{\iota(\tau)}\mathbf P_\tau^{(s)}\bigr) ~\text{and}~ \pi_{\mathcal U} \bigl(B_tA_{\iota(\tau)}\mathbf P_\tau^{(s)}\bigr).$ Summing over \(\tau\in[\mu]^t\), the coordinator obtains \(\pi_{\mathcal U}(\mathbf G_{t,s})\) and \(\pi_{\mathcal U}(B_t\mathbf G_{t,s})\) for every \(t\ne s\). In particular, it recovers the off-diagonal final prefix part $\mathbf G^F_{[g^F]^t,s}=\pi_{\mathcal F}(\mathbf G_{t,s})$ while \(\pi_{\mathcal U}(B_t\mathbf G_{t,s})\) is retained to remove the corresponding off-diagonal term from \(\mathbf G^I_{[g^I],s}\). This step reads $\lambda(\lambda-1)g^F(k^F+\mu\ell)$ symbols from the preserved information and local parity nodes.

It remains to recover the diagonal prefix parts and the tail parts. Fix \(s\in[\lambda]\). From the downloaded \(\pi_{\mathcal U}(\mathbf G^I_{[g^I],s})\), the coordinator subtracts the known off-diagonal terms \(\pi_{\mathcal U}(B_t\mathbf G_{t,s})\), \(t\ne s\), and obtains $\pi_{\mathcal U}\!\left(B_s\left( \mathbf G_{s,s} + \operatorname{emb}_{\mathcal E} \bigl(\mathbf R_{s,\mathcal E}^F\bigr) \right) \right).$ Since the map induced by \(B_s\) on \(\mathcal U\) is invertible, the coordinator recovers $\pi_{\mathcal U}\!\left( \mathbf G_{s,s} + \operatorname{emb}_{\mathcal E} \bigl(\mathbf R_{s,\mathcal E}^F\bigr) \right).$  Its \(\mathcal F\)-part is $\pi_{\mathcal F}(\mathbf G_{s,s}) = \mathbf G^F_{[g^F]^s,s},$ which gives the diagonal prefix part of the \(s\)-th final codeword. Its \(\mathcal E\)-part is $\pi_{\mathcal E}(\mathbf G_{s,s}) + \mathbf R_{s,\mathcal E}^F.$
In the fixed \(\mathcal E\)-coordinate order, the \((b,a)\)-th coordinate of this vector is $E_{s,b,a}^F+R_{s,a,b}^F, b\in[m], a\in[g^F]$. Therefore, for $i=(s-1)g^F+a\in[g^F]^s$, regrouping these coordinates by the global parity index \(a\) gives $\bigl( R_{s,a,1}^F+E_{s,1,a}^F, \ldots, R_{s,a,m}^F+E_{s,m,a}^F \bigr)^\top = R_i^F+E_i^F$.
Together with the prefix parts \(\mathbf G^F_{[g^F]^s,j}\), \(j\in[\lambda]\), these tail vectors determine the final global parity nodes \(G_i^F\), \(i\in[g^F]^s\). Repeating this procedure for every \(s\in[\lambda]\) generates all final global parity nodes.

The total read bandwidth is $\gamma_R =\lambda(\lambda-1)g^F(k^F+\mu\ell) +\lambda g^F(g^I+\ell) =\lambda g^F\big((\lambda-1)(k^F+\mu\ell)+g^I+\ell\big)$. Since $\alpha =\lambda(g^F+\ell)+(g^I-g^F) =(\lambda-1)(g^F+\ell)+g^I+\ell$, we obtain
$$
\gamma_R = \lambda g^F \frac{ (\lambda-1)(k^F+\mu\ell)+g^I+\ell }{ (\lambda-1)(g^F+\ell)+g^I+\ell}\alpha.
$$
When \(g^F<g^I\), this is exactly the lower bound in Theorem~\ref{thm:final_thm}. Hence the construction attains the optimal read bandwidth.

\subsection{Overall Achievability}

The three constructions above yield the following overall achievability result.

\begin{theorem}
\label{thm:overall_achievability}
    Let \(r,\ell,g^I,g^F,k^F,\lambda\) be positive integers such that $r\mid k^F, \lambda\ge2, g^I,g^F\le r. $ Set $k^I=\lambda k^F, \mu=\frac{k^F}{r}. $ Over any sufficiently large finite field, there exists a stable optimal-distance $(k^I,g^I,r,\ell;\,k^F,g^F,r,\ell,\alpha)\text{-LRCC}$ in the global split regime whose read bandwidth is
    \[
    \gamma_R=\begin{cases} \displaystyle\lambda g^F\frac{(\lambda-1)(k^F+\mu\ell)+g^I+\ell}{(\lambda-1)(g^F+\ell)+g^I+\ell}\alpha,
        & \text{if }~g^F\le g^I,\\[4mm]\displaystyle\lambda\frac{g^F}{g^F+\ell}(k^F+\mu\ell)\alpha-g^I\left(\frac{k^F+\mu\ell}{g^F+\ell}-1\right)\alpha,& \text{if }~ g^F>g^I.\end{cases}
    \] Moreover, the subpacketization can be chosen as
    \[
    \alpha
    =
    \begin{cases}
        g^F+\ell,
        & g^F\ge g^I,\\[1mm]
        \lambda(g^F+\ell)+(g^I-g^F),
        & g^F<g^I.
    \end{cases}
    \]
\end{theorem}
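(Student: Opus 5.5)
The proof will assemble the three constructions of Subsections~\ref{subsec5.1}--\ref{subsec5.3}, one for each relation between \(g^I\) and \(g^F\). First, fix a sufficiently large field \(\mathbb F_q\). It must be large enough to contain a \((\max\{g^I,g^F\}+\ell)\times r\) Cauchy, hence superregular, matrix; this needs \(q\ge r+\max\{g^I,g^F\}+\ell\). It must also be large enough for the mixing lemma (Lemma~\ref{lem:mixing_auxiliary_lrc}) to supply nonzero block-scalar coefficients \(a_{i,h},b_{t,h}\). In the case \(g^F<g^I\), it must additionally supply the scalar tail mixing with \(\alpha=1\). Taking \(q\) at least the maximum of these finitely many thresholds serves all three cases simultaneously.

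\textbf{Case \(g^F=g^I\).} Invoke the construction built on Lemma~\ref{lem:fixed_parity_repair1}, with \(\alpha=g+\ell\). Stability, the optimal-distance property of both codes, and the read bandwidth \(g\alpha+(\lambda-1)\frac{g}{g+\ell}(k^F+\mu\ell)\alpha\) are all established there. It remains to check that this equals the first branch at \(g^F=g^I=g\). The denominator becomes \((\lambda-1)(g+\ell)+g+\ell=\lambda(g+\ell)\), so the first branch reduces to
\[
g\,\frac{(\lambda-1)(k^F+\mu\ell)+g+\ell}{g+\ell}\,\alpha ,
\]
which is the stated value. The second branch also agrees at \(g^F=g^I\), by the formula for \(B-A\) in the proof of Theorem~\ref{thm:final_thm}. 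This justifies putting the boundary case \(g^F=g^I\) under either branch, and the subpacketization is \(g^F+\ell\), as claimed.

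\textbf{Case \(g^F>g^I\).} Use the construction based on Lemma~\ref{lem:fixed_parity_repair_side}, with \(\alpha=g^F+\ell\). The three read costs computed there are \(g^I\alpha\), then \((\lambda-1)\frac{g^F}{g^F+\ell}(k^F+\mu\ell)\alpha\), then \(\frac{g^F-g^I}{g^F+\ell}(k^F+\mu\ell)\alpha\). Their sum is exactly the second branch.

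\textbf{Case \(g^F<g^I\).} Use the construction based on Lemma~\ref{lem:fixed_parity_repair_aligned}, with \(\alpha=\lambda(g^F+\ell)+(g^I-g^F)\). Using the identity \(\alpha=(\lambda-1)(g^F+\ell)+g^I+\ell\), its read bandwidth equals the first branch.

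The main point needing care is not any new argument but uniformity of hypotheses. Each construction assumes \(g^I,g^F\le r\), and the constructions require all information and local parity nodes to be preserved. By the discussion after Lemma~\ref{lem:split-new-retired}, preserving all local groups is equivalent to stability, so every construction is stable. I would also double-check that the mixing lemma in Appendix~\ref{appe} requires only a field-size lower bound depending on the parameters, so that the phrase ``sufficiently large'' is legitimate for a single \(q\).

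Finally, Theorem~\ref{thm:final_thm} shows that these values equal the lower bound in every case. The resulting codes are therefore read-bandwidth optimal, which completes the characterization.
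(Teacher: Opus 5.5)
Your proposal is correct and takes essentially the same approach as the paper, whose proof simply invokes the three constructions of Section~\ref{sec5} for the cases \(g^F=g^I\), \(g^F>g^I\), and \(g^F<g^I\). Your explicit checks that each construction's read bandwidth equals the stated branch are accurate: the denominator becomes \(\lambda(g+\ell)\) when \(g^F=g^I\), the three read costs sum to the second branch when \(g^F>g^I\), and \(\alpha=(\lambda-1)(g^F+\ell)+g^I+\ell\) when \(g^F<g^I\); together with your field-size bookkeeping, these make explicit what the paper verifies inside each subsection.
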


\begin{proof}
The cases \(g^F=g^I\), \(g^F>g^I\), and \(g^F<g^I\) are
established by the constructions in Sections~\ref{subsec5.1}, \ref{subsec5.2}, and \ref{subsec5.3},
respectively.
\end{proof}

Combining Theorem~\ref{thm:final_thm} and
Theorem~\ref{thm:overall_achievability} gives the exact optimal read
bandwidth for stable optimal-distance LRCCs throughout the
parameter range \(g^I,g^F\le r\) over sufficiently large finite
fields.

\section{Conclusion}\label{sec6}

In this work, we studied the fundamental limits on the read bandwidth of stable optimal-distance LRCCs in the global split regime over the parameter range \(g^I,g^F\le r\). Using an information-theoretic approach, we derived, to the best of our knowledge, the first read-bandwidth lower bounds for this setting, without imposing any linearity assumption on the initial code, the final code, or the conversion procedure. We then developed MDS array codes with prescribed repair or alignment properties using piggybacking techniques and used them as building blocks for stable optimal-distance LRCCs that attain the corresponding lower bounds over sufficiently large finite fields. Together, the converse and achievability results completely characterize the optimal read bandwidth throughout the parameter range \(g^I,g^F\le r\).

A natural problem is to study whether lower bounds and matching constructions can be obtained in this framework when \(g^I\) or \(g^F\) is larger than \(r\). Another direction is to consider conversions between different types of codes, such as conversions between LRCs with different locality parameters, MDS-to-LRC conversions, and LRC-to-MDS conversions, and to investigate whether information-theoretic methods can be used to derive lower bounds together with corresponding optimal constructions. It would also be interesting to study conversion problems for broader classes of codes, including Reed--Muller codes, regenerating codes, algebraic geometry codes, and other related code families.

\appendices
\section{Proof of the Mixing Lemma}
\label{appe}

In this appendix, we give the formal definition of the auxiliary mixing construction used in Section~\ref{sec5} and prove the required matrix-existence result. Let \(r,\ell,g,\alpha\) be positive integers, and let \(\mathcal C\) be a systematic linear \([r+\ell+g,r,\alpha]_q\) MDS array code whose nodes are ordered as $X_1,\ldots,X_r,\ L_1,\ldots,L_\ell,\ P_1,\ldots,P_g $.
For a message vector \(x\in(\mathbb F_q^\alpha)^r\), denote the corresponding node contents by \(X_i(x)\), \(L_b(x)\), and \(P_a(x)\), and define
\[
    \mathbf P(x) := \bigl(P_1(x)^\top,\ldots,P_g(x)^\top\bigr)^\top \in\mathbb F_q^{g\alpha}.
\]
When the underlying message is clear, we suppress the argument \(x\) and simply write \(X_i\), \(L_b\), \(P_a\), and \(\mathbf P\).

For \(c_1,\ldots,c_g\in\mathbb F_q\), define the block-scalar diagonal matrix
\[
    D(c_1,\ldots,c_g):= \operatorname{diag}(c_1I_\alpha,\ldots,c_gI_\alpha) \in \mathbb F_q^{g\alpha\times g\alpha}.
\]
Let
\[
    \mathcal D_g := \{D(c_1,\ldots,c_g): c_1,\ldots,c_g\in\mathbb F_q^*\}\subseteq\operatorname{GL}_{g\alpha}(\mathbb F_q).
\]

For \(N\ge 1\) and \(M_1,\ldots,M_N\in\mathcal D_g\), define the code \(\mathcal L_{\mathcal{C}}(M_1,\ldots,M_N)\) as follows. Take \(N\) copies \(\mathcal C^{(1)},\ldots,\mathcal C^{(N)}\) of \(\mathcal C\), each encoding a disjoint collection of \(r\) message blocks. For \(i\in[N]\), denote the nodes of the \(i\)-th copy by
\[
    X^{(i)}_1,\ldots,X^{(i)}_r,\quad L^{(i)}_1,\ldots,L^{(i)}_\ell,\quad P^{(i)}_1,\ldots,P^{(i)}_g .
\]
The nodes \(X^{(i)}_1,\ldots,X^{(i)}_r,L^{(i)}_1,\ldots,L^{(i)}_\ell\) form the \(i\)-th local group of \(\mathcal L_{\mathcal C}(M_1,\ldots,M_N)\). The nodes \(P^{(i)}_1,\ldots,P^{(i)}_g\) are virtual nodes: they are not stored and are used only to define the global parity nodes. Write
\[
    \mathbf P^{(i)} := \bigl((P^{(i)}_1)^\top,\ldots,(P^{(i)}_g)^\top\bigr)^\top \in\mathbb F_q^{g\alpha}.
\]
Define
\[
    \mathbf G := \sum_{i=1}^N M_i\mathbf P^{(i)} \in\mathbb F_q^{g\alpha},
\]
and write
\[
    \mathbf G=(G_1^\top,\ldots,G_g^\top)^\top, \qquad G_a\in\mathbb F_q^\alpha .
\]
The nodes \(G_1,\ldots,G_g\) are stored as the global parity nodes of \(\mathcal L_{\mathcal{C}}(M_1,\ldots,M_N)\). 

Since the puncturing of \(\mathcal C\) onto $X_1,\ldots,X_r,L_1,\ldots,L_\ell$ is an \([r+\ell,r,\alpha]_q\) MDS array code, every local group above has minimum distance \(\ell+1\). Therefore, \(\mathcal L_{\mathcal C}(M_1,\ldots,M_N)\), consisting of these \(N\) local groups together with the global parity nodes \(G_1,\ldots,G_g\), is a systematic \((Nr,g,r,\ell,\alpha)\)-LRC.

\begin{lemma}[Mixing Lemma]
\label{lem:mixing_auxiliary_lrc}
    Let \(\mu\) and \(\lambda\) be positive integers. If \(q\) is sufficiently large, then there exist matrices $A_1,\ldots,A_\mu$, $B_1,\ldots,B_\lambda\in\mathcal D_g$ such that $\mathcal L_{\mathcal C}(A_1,\ldots,A_\mu)$ is an optimal-distance \((\mu r,g,r,\ell,\alpha)\)-LRC, and $\mathcal L_{\mathcal C} \bigl(B_tA_i:\ t\in[\lambda],\ i\in[\mu]\bigr)$ is an optimal-distance \((\lambda \mu r,g,r,\ell,\alpha)\)-LRC.
\end{lemma}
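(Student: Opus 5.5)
We prove the Mixing Lemma with a Schwartz--Zippel / polynomial-nonvanishing argument. We treat the diagonal entries \(a_{i,h}\) (\(i\in[\mu],h\in[g]\)) and \(b_{t,h}\) (\(t\in[\lambda],h\in[g]\)) as indeterminates and show three things:
\begin{itemize}
\item[(i)] every erasure pattern of size at most \(g+\ell\) can be recovered in \(\mathcal L_{\mathcal C}(M_1,\ldots,M_N)\) if and only if a certain finite family of determinants is nonzero;
\item[(ii)] each such determinant is a nonzero polynomial in the indeterminates;
\item[(iii)] over a large enough field, a common non-root with all coordinates nonzero exists.
\end{itemize}

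\textbf{Reduction to determinants.} Since \(\mathcal C\) and the mixing are linear, \(\mathcal L_{\mathcal C}(M_1,\ldots,M_N)\) is a linear array code. An erasure pattern \(E\) is recoverable iff the restriction of the encoding map to the surviving nodes is injective. By the Singleton-type bound, optimal distance is equivalent to recoverability of every \(E\) with \(|E|=g+\ell\). Every such \(E\) has to be handled, including patterns that erase more than \(\ell\) nodes in several local groups; the total excess over \(\ell\) is at most \(g\).

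Fix \(E\). Let \(e_i\) be the number of erased nodes in local group \(i\), and set \(x_i=\max(e_i-\ell,0)\). Groups with \(e_i\le \ell\) are decoded locally, since the punctured code on \(X^{(i)},L^{(i)}\) is \([r+\ell,r]\) MDS. We then subtract their contributions \(M_i\mathbf P^{(i)}\) from the surviving global parities, which are functions of the now-known data.

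The remaining unknowns lie in the groups with \(x_i>0\). Conditioning on the surviving nodes of such a group, its unknown information has dimension \(x_i\alpha\). This is because the code \(\mathcal C\) is MDS, so the surviving \(r+\ell-e_i\) nodes are independent and \(\mathbf P^{(i)}\) is a linear function of the \(x_i\) "missing" dimensions. There are \(\sum x_i\le g-|E\cap\{G\}|\) unknowns, measured in \(\alpha\)-blocks, and we have \(g-|E\cap\{G\}|\) surviving global parities. Recoverability therefore amounts to a square (after choosing a sub-system) linear system having full column rank. This is the nonvanishing of a determinant \(\Delta_E\), polynomial in the \(a\)'s and (for the initial code) the products \(b_ta_i\).

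\textbf{Nonvanishing of each polynomial.} This is the main obstacle. For each \(E\) we exhibit one specialization of the indeterminates, possibly with zero entries, at which \(\Delta_E\ne 0\).

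The first attempt is a "block-assignment" specialization. Let \(S\) be the set of surviving global parity indices, and partition \(S\) into disjoint subsets \(S_i\) with \(|S_i|=x_i\). Set the coefficient of group \(i\) to \(1\) on indices \(h\in S_i\) and to \(0\) on the other surviving indices. Erased global indices are irrelevant.

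The system then decouples. For each group \(i\), we must recover its unknowns from \(P^{(i)}_h\), \(h\in S_i\), together with its surviving local nodes. This is exactly an erasure pattern in the \([r+\ell+g,r]\) MDS code \(\mathcal C\) with \(r\) known nodes, hence it is solvable, and the block determinant is nonzero.

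For the initial code the coefficient of group \((t,i)\) on index \(h\) is \(b_{t,h}a_{i,h}\), which is a product structure rather than a free choice. We therefore need a specialization of \(a\) and \(b\) realizing the 0/1 pattern on the at most \(g\) groups with \(x>0\). Only these groups matter; the coefficients of all other groups are irrelevant after local decoding. Since there are at most \(g\) relevant groups and each needs at most one distinct index set, we choose \(b_{t,h}=1\) if \(h\in\bigcup_{i:(t,i)\text{ relevant}}S_{(t,i)}\) and \(0\) otherwise. We also choose \(a_{i,h}=1\) for \(h\in\bigcup_t S_{(t,i)}\).

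A conflict could arise when two relevant groups \((t,i)\) and \((t',i')\) share \(t\) or \(i\) and so receive unwanted nonzero coefficients. In that case we use triangularity rather than exact decoupling: order the relevant groups, and note that the resulting matrix is block-triangular with nonzero diagonal blocks when the \(S\)'s are chosen compatibly. If this fails, the fallback is a Vandermonde-type specialization, \(a_{i,h}=\theta_i^{h}\) and \(b_{t,h}=\eta_t^{h}\), under which \(b_{t,h}a_{i,h}=(\eta_t\theta_i)^h\). This reduces the problem to the standard argument that generic evaluation points make a block-Vandermonde-times-MDS system nonsingular.

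\textbf{Conclusion.} Multiply all the finitely many nonzero polynomials \(\Delta_E\), over both codes, together with \(\prod a_{i,h}\prod b_{t,h}\). The result is a nonzero polynomial of degree bounded independently of \(q\). By the Schwartz--Zippel lemma, for \(q\) exceeding this degree there is an assignment in \(\mathbb F_q^*\) at which none of the factors vanish. That assignment gives the required \(A_i,B_t\in\mathcal D_g\).
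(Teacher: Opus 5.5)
Your overall architecture matches the paper's proof. You reduce each erasure pattern of size $g+\ell$ to a square determinant on a chosen set of surviving global-parity blocks. You show that determinant is a nonzero polynomial via a $0/1$ specialization that makes it block diagonal with MDS blocks $\pi_{S_i}U_i$. You then apply Schwartz--Zippel to the product of all determinants together with $\prod a_{i,h}\prod b_{t,h}$. The final-code part is fine.

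The gap is in the step you flag as the main obstacle: for the initial code you never establish $\Delta^I_E\not\equiv 0$. Your specialization zeroes $b_{t,h}$ off $\bigcup_i S_{(t,i)}$, but leaves $a_{i,h}$ unspecified off $\bigcup_t S_{(t,i)}$. You then fall back on two arguments, neither of which is carried out:
\begin{itemize}
\item A block-triangularity claim ``when the $S$'s are chosen compatibly.'' This is unproven, and choosing the $S$'s cannot produce it. For two relevant groups $(t,i),(t,i')$ with the same $t$, rows $h\in S_{(t,i)}$ see group $(t,i')$ with coefficient $b_{t,h}a_{i',h}=a_{i',h}$, and rows in $S_{(t,i')}$ see group $(t,i)$ with coefficient $a_{i,h}$. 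So both off-diagonal blocks are generally nonzero, and no ordering is triangular unless those $a$-entries vanish.
\item A Vandermonde specialization backed by an unstated ``standard argument.'' That argument does not apply off the shelf, because the points $\zeta_{t,i}=\eta_t\theta_i$ are multiplicatively dependent: $\zeta_{t,i}\zeta_{t',i'}=\zeta_{t,i'}\zeta_{t',i}$.
\end{itemize}

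The missing observation is that no conflict needs to arise at all. The sets $S_{(t,i)}$ are pairwise disjoint, so each retained row index $h$ lies in exactly one relevant group $(t,i)$. For that $h$, choose both $b_{\cdot,h}$ and $a_{\cdot,h}$ as indicator vectors: $b_{t,h}=a_{i,h}=1$, $b_{t',h}=0$ for $t'\neq t$, and $a_{i',h}=0$ for $i'\neq i$. Then $b_{t',h}a_{i',h}$ equals $1$ if $(t',i')=(t,i)$ and $0$ otherwise. The matrix defining $\Delta^I_E$ is therefore exactly block diagonal with blocks $\pi_{S_{(t,i)}}U_{(t,i)}$, each nonsingular by the MDS property of $\mathcal C$. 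This is precisely the paper's specialization. With this line added and the triangularity/Vandermonde discussion removed, your proof is complete and essentially coincides with the paper's.

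Two minor points:
\begin{itemize}
\item With a fixed choice of sub-system, the determinant condition is sufficient rather than ``iff''; sufficiency is all you use.
\item The bound $\sum_i x_i\le g-s$, where $s$ is the number of erased global parities, needs its one-line count: if some $x_i>0$, then $\sum_i e_i\ge \ell+\sum_i x_i$, and $\sum_i e_i+s=g+\ell$.
\end{itemize}
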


\begin{proof}
    For \(\mathcal{Q}\subseteq[g]\), let $\pi_\mathcal{Q}:\mathbb F_q^{g\alpha}\longrightarrow \mathbb F_q^{|\mathcal{Q}|\alpha} $ denote the projection onto the \(\alpha\)-blocks indexed by \(\mathcal Q\). Set $Z_1,\ldots,Z_{r+\ell}:=X_1,\ldots,X_r,L_1,\ldots,L_\ell .$ For \(\mathcal{E}\subseteq[r+\ell]\), define $h(\mathcal{E}):=\max\{|\mathcal{E}|-\ell,0\},$ and let
    \[
    \mathcal K_\mathcal{E}:=\{x\in(\mathbb F_q^\alpha)^r:Z_j(x)=0\ \text{for all }j\in[r+\ell]\setminus \mathcal{E}\}
    \]
    and
    \[
    \mathcal R_\mathcal{E}:= \{\mathbf P(x):x\in\mathcal K_\mathcal{E}\} \subseteq \mathbb F_q^{g\alpha}.
    \]

    We first record a basic consequence of the MDS property.  Let \(h:=h(\mathcal E)\le g\). Then \(\dim \mathcal R_\mathcal{E}=h\alpha\), and for every \(\mathcal{Q}\subseteq[g]\) with \(|\mathcal{Q}|=h\), the restriction map $\pi_\mathcal{Q}:\mathcal R_\mathcal{E}\longrightarrow \mathbb F_q^{h\alpha}$ is an isomorphism. Indeed, suppose first that \(|\mathcal E|\le\ell\). At least \(r\) local nodes survive, and hence the MDS property gives $ \mathcal K_{\mathcal E}=\{0\}, \mathcal R_{\mathcal E}=\{0\}.$ The assertion therefore follows in this case. Now assume \(|\mathcal{E}|=\ell+h\) with \(h>0\). Exactly \(r-h\) local nodes survive. Extend these nodes to any set of \(r\) nodes of \(\mathcal C\). By the MDS property, the projection onto the enlarged set has rank \(r\alpha\). Since the additional \(h\) nodes contribute at most \(h\alpha\) dimensions, the projection onto the original \(r-h\) surviving local nodes has full rank \((r-h)\alpha\). Therefore, by the rank-nullity theorem, $\dim \mathcal K_\mathcal{E}=r\alpha-(r-h)\alpha=h\alpha. $ Fix \(\mathcal Q\subseteq[g]\) with \(|\mathcal Q|=h\). The \(r-h\) surviving local nodes together with the global parity nodes \(P_a\), \(a\in\mathcal Q\), form \(r\) nodes of \(\mathcal C\). Therefore, if \(x\in\mathcal K_{\mathcal E}\) satisfies $\pi_{\mathcal Q}\mathbf P(x)=0,$ then these \(r\) node values are all zero, and the MDS property implies \(x=0\). Thus $x\longmapsto\pi_{\mathcal Q}\mathbf P(x)$ is injective on \(\mathcal K_{\mathcal E}\). It follows that \(x\mapsto\mathbf P(x)\) is injective on \(\mathcal K_{\mathcal E}\), and hence $ \dim\mathcal R_{\mathcal E} = \dim\mathcal K_{\mathcal E} =
    h\alpha.$ Moreover, the restriction of \(\pi_{\mathcal Q}\) to \(\mathcal R_{\mathcal E}\) is injective. Since its domain and codomain both have dimension \(h\alpha\), it is an isomorphism.

    We next give a criterion tailored to the optimal-distance requirement. Let \(\Omega_N\) be the collection of all tuples $\omega = (\mathcal E_1,\ldots,\mathcal E_N$, $\mathcal S)$ where $\mathcal E_i\subseteq[r+\ell], \mathcal S\subseteq[g]$ with $ \sum_{i=1}^N|\mathcal E_i|+|\mathcal S|=g+\ell$. For \(\omega\in\Omega_N\), set $h_i:=h(\mathcal E_i), H:=\sum_{i=1}^N h_i$ and $ s:=|\mathcal S|$. Fix \(M_1,\ldots,M_N\in\mathcal D_g\). Suppose that, for every \(\omega\in\Omega_N\) with \(H>0\), the map $\Phi_\omega: \prod_{i:h_i>0}\mathcal R_{\mathcal E_i} \longrightarrow \mathbb F_q^{(g-s)\alpha}$ defined by $\Phi_\omega\bigl((u_i)_{i:h_i>0}\bigr) := \sum_{i:h_i>0} \pi_{[g]\setminus\mathcal S}M_i u_i$ is injective. Then \(\mathcal L_{\mathcal C}(M_1,\ldots,M_N)\) is an optimal-distance \((Nr,g,r,\ell,\alpha)\)-LRC.

    Indeed, it suffices to correct every erasure pattern of size exactly \(g+\ell\), since any smaller pattern can be enlarged to one of this size. Fix \(\omega=(\mathcal E_1,\ldots,\mathcal E_N,\mathcal S)\in\Omega_N\). If \(H=0\), then every local group has at most \(\ell\) erased nodes, so all local erasures can be recovered locally and the erased global parity nodes can then be recomputed. Now suppose \(H>0\). Since at least one \(h_i\) is positive,
    \[
    \begin{aligned}
    \sum_{i=1}^N|\mathcal E_i|&=\sum_{i:h_i>0}(\ell+h_i)+\sum_{i:h_i=0}|\mathcal E_i| \ge\ell+H.
    \end{aligned}
    \] Together with $\sum_{i=1}^N|\mathcal E_i|+s=g+\ell,$ this gives \(H\le g-s\). Suppose that two codewords agree on all surviving nodes. For the \(i\)-th local group, let \(x_i\) be the difference of the corresponding message vectors. Then $x_i\in\mathcal K_{\mathcal E_i}.$ If \(h_i=0\), then \(\mathcal K_{\mathcal E_i}=\{0\}\), and hence \(x_i=0\). For every \(i\) with \(h_i>0\), set $u_i:=\mathbf P(x_i)\in\mathcal R_{\mathcal E_i}.$ The surviving global parity nodes therefore give $\sum_{i:h_i>0} \pi_{[g]\setminus\mathcal S}M_i u_i =0.$ By the injectivity of \(\Phi_\omega\), \(u_i=0\) for every \(i\) with \(h_i>0\). Since the preceding MDS consequence shows that \(x\mapsto\mathbf P(x)\) is injective on \(\mathcal K_{\mathcal E_i}\), we also have \(x_i=0\) for every \(i\) with \(h_i>0\). Thus all \(x_i\) are zero, so the two codewords coincide. Therefore every erasure pattern of size at most \(g+\ell\) is correctable, and $d\ge g+\ell+1$. The Singleton-type bound gives the reverse inequality, and hence the code is optimal-distance.

   We now show that the injectivity conditions above can be enforced by block-scalar diagonal matrices. Fix $\omega = (\mathcal E_1,\ldots,\mathcal E_N,\mathcal S)$ $ \in\Omega_N $ with \(H>0\). Since \(H\le g-|\mathcal S|\), choose and fix a set $\mathcal Q_\omega \subseteq[g]\setminus\mathcal S,  |\mathcal Q_\omega|=H.$ For each \(i\) with \(h_i>0\), let \(U_i\) be a matrix whose columns form a basis of \(\mathcal R_{\mathcal E_i}\), and let $i_1<\cdots<i_v$ be all indices satisfying \(h_i>0\). For indeterminates \(m_{i,j}\), we use the same notation $ M_i=D(m_{i,1},\ldots,m_{i,g})$ for the corresponding block-diagonal matrix over the polynomial ring \(\mathbb F_q[\{m_{i,j}\}_{i\in[N],\,j\in[g]}]\). Define $\Delta_\omega(M_1,\ldots,M_N) := \det
    \begin{bmatrix}
        \pi_{\mathcal Q_\omega}M_{i_1}U_{i_1}&\cdots&\pi_{\mathcal Q_\omega}M_{i_v}U_{i_v}
    \end{bmatrix}.$ The concatenated matrix has \(H\alpha\) rows and $\sum_{j=1}^v h_{i_j}\alpha=H\alpha $ columns, and is therefore square. Hence \(\Delta_\omega\) is a polynomial in the variables \(m_{i,j}\). At any assignment for which \(\Delta_\omega\ne0\), the map $(u_i)_{i:h_i>0}  \longmapsto \sum_{i:h_i>0} \pi_{\mathcal Q_\omega}M_i u_i$ is injective. Since \(\mathcal Q_\omega\subseteq[g]\setminus\mathcal S\), this implies the injectivity of \(\Phi_\omega\).

    It remains to show that \(\Delta_\omega\) is not the zero polynomial. Choose a partition $ \mathcal Q_\omega = \bigsqcup_{i:h_i>0}\mathcal Q_i, |\mathcal Q_i|=h_i.$ By the preceding MDS consequence, $\det\bigl(\pi_{\mathcal Q_i}U_i\bigr)\ne0$ for every \(i\) with \(h_i>0\). For each such \(i\), specialize the variables indexed by \(\mathcal Q_\omega\) as $ m_{i,j} =
    \begin{cases}
        1, & j\in\mathcal Q_i,\\
        0, & j\in\mathcal Q_\omega\setminus\mathcal Q_i.
    \end{cases}$
    The variables with \(j\notin\mathcal Q_\omega\) may be assigned arbitrarily. After permuting the row blocks according to the above partition, the matrix defining \(\Delta_\omega\) becomes block diagonal with diagonal blocks $\pi_{\mathcal Q_i}U_i, \text{for}~ i ~\text{with}~h_i>0.$ Therefore, under this specialization, $\Delta_\omega =  \pm  \prod_{i:h_i>0}  \det\bigl(\pi_{\mathcal Q_i}U_i\bigr)  \ne0.$ Thus \(\Delta_\omega\) is not the zero polynomial. The specializations used here and below serve only to show that the determinant polynomials are nonzero; the specialized matrices need not belong to \(\mathcal D_g\). The factor \(\mathcal P_{\mathrm{inv}}\) introduced below will enforce the invertibility of the final matrices.

    For convenience, define \[ \Omega_N^+ :=\left\{   (\mathcal E_1,\ldots,\mathcal E_N,\mathcal S)\in\Omega_N: \sum_{i=1}^N h(\mathcal E_i)>0\right\}. \] We first apply the preceding construction to the final code. For indeterminates \(a_{i,j}\), write $A_i=D(a_{i,1},\ldots,a_{i,g}), i\in[\mu],$ and define $\mathcal P_F(A_1,\ldots,A_\mu) := \prod_{\omega\in\Omega_\mu^+} \Delta_\omega(A_1,\ldots,A_\mu).$ By the preceding argument, every factor in this finite product is a nonzero polynomial. Hence \(\mathcal P_F\) is nonzero.

    It remains to handle the product form \(B_tA_i\) for the initial code. For indeterminates \(b_{t,j}\), write $B_t=D(b_{t,1},\ldots,b_{t,g}), t\in[\lambda].$ Index the \(\lambda\mu\) local groups by \((t,i)\in[\lambda]\times[\mu]\), and let \(\Omega_{\mathrm I}\) be the collection of all tuples $\omega =\bigl( \{\mathcal E_{t,i}\}_{t\in[\lambda],\,i\in[\mu]}, \mathcal S \bigr)$ such that $\mathcal E_{t,i}\subseteq[r+\ell], \mathcal S\subseteq[g],$ and $\sum_{t=1}^{\lambda}\sum_{i=1}^{\mu}|\mathcal E_{t,i}|+|\mathcal S|=g+\ell.$ For \(\omega\in\Omega_{\mathrm I}\), set $h_{t,i}:=h(\mathcal E_{t,i}), H_\omega := \sum_{t=1}^{\lambda}\sum_{i=1}^{\mu}h_{t,i}.$ Whenever \(H_\omega>0\), the same counting argument as above gives $ H_\omega\le g-|\mathcal S|.$ Fix a set $\mathcal Q_\omega \subseteq[g]\setminus\mathcal S, |\mathcal Q_\omega|=H_\omega.$

    For each \((t,i)\) with \(h_{t,i}>0\), let \(U_{t,i}\) be a matrix whose columns form a basis of \(\mathcal R_{\mathcal E_{t,i}}\). List all such pairs in lexicographic order as $(t_1,i_1),\ldots,(t_v,i_v),$ and define $ \Delta_\omega^I (A_1,\ldots,A_\mu,B_1,\ldots,B_\lambda) := \det
    \begin{bmatrix}
        \pi_{\mathcal Q_\omega} B_{t_1}A_{i_1}U_{t_1,i_1}&\cdots
        &\pi_{\mathcal Q_\omega}B_{t_v}A_{i_v}U_{t_v,i_v}
    \end{bmatrix}.$ The concatenated matrix has \(H_\omega\alpha\) rows and $\sum_{j=1}^v h_{t_j,i_j}\alpha = H_\omega\alpha$ columns, and is therefore square. Hence \(\Delta_\omega^I\) is a polynomial in the variables \(a_{i,j}\) and \(b_{t,j}\). At any assignment for which \(\Delta_\omega^I\ne0\), the map
    \[
    (u_{t,i})_{h_{t,i}>0} \longmapsto\sum_{(t,i):\,h_{t,i}>0}\pi_{\mathcal Q_\omega}B_tA_i u_{t,i}
    \]
    is injective. Since \(\mathcal Q_\omega\subseteq[g]\setminus\mathcal S\), this implies the injectivity condition in the preceding criterion for the erasure pattern \(\omega\).

    Now we claim that \(\Delta_\omega^I\) is not the zero polynomial. Choose a partition $\mathcal Q_\omega = \bigsqcup_{(t,i):\,h_{t,i}>0} \mathcal Q_{t,i}, |\mathcal Q_{t,i}|=h_{t,i}.$ By the preceding MDS consequence, $\det\bigl(\pi_{\mathcal Q_{t,i}}U_{t,i}\bigr) \ne0$ for every \((t,i)\) with \(h_{t,i}>0\).

    For each \(j\in\mathcal Q_{t,i}\), specialize $a_{i,j}=b_{t,j}=1,$ and set $a_{i',j}=0 ~\text{for every }i'\ne i, b_{t',j}=0 ~\text{for every }t'\ne t.$ After permuting the row blocks according to the partition of \(\mathcal Q_\omega\), the matrix defining \(\Delta_\omega^I\) becomes block diagonal with diagonal blocks $\pi_{\mathcal Q_{t,i}}U_{t,i}, (t,i):h_{t,i}>0.$ Therefore, under this specialization, $\Delta_\omega^I = \pm \prod_{(t,i):\,h_{t,i}>0} \det\bigl( \pi_{\mathcal Q_{t,i}}U_{t,i} \bigr) \ne0.$ Thus \(\Delta_\omega^I\) is not the zero polynomial.

    Let $\Omega_{\mathrm I}^+:= \left\{ \omega\in\Omega_{\mathrm I}: H_\omega>0 \right\}.$ Define $\mathcal P_I (A_1,\ldots,A_\mu,B_1,\ldots,B_\lambda) := \prod_{\omega\in\Omega_{\mathrm I}^+} \Delta_\omega^I (A_1,\ldots,A_\mu,B_1,\ldots,B_\lambda),$ where an empty product is understood to be \(1\). Since every factor is a nonzero polynomial and the polynomial ring over \(\mathbb F_q\) is an integral domain, \(\mathcal P_I\) is nonzero.

    Finally, define $\mathcal P_{\mathrm{inv}} := \left( \prod_{i=1}^{\mu}\prod_{j=1}^{g}a_{i,j}  \right) \left(  \prod_{t=1}^{\lambda}\prod_{j=1}^{g}b_{t,j} \right),$ and let $\mathcal P := \mathcal P_F  \mathcal P_I  \mathcal P_{\mathrm{inv}}.$ The polynomial \(\mathcal P\) is nonzero, and its total degree depends only on \(r,\ell,g,\alpha,\mu,\lambda\). Hence, for \(q>\deg\mathcal P\), the Schwartz--Zippel lemma~\cite{schwartz1980fast,zippel1979probabilistic} guarantees an assignment $a_{i,j},b_{t,j}\in\mathbb F_q$ such that \(\mathcal P\ne0\). Since \(\mathcal P_{\mathrm{inv}}\) is a factor of \(\mathcal P\), all \(a_{i,j}\) and \(b_{t,j}\) are nonzero under this assignment. Consequently, $A_i=D(a_{i,1},\ldots,a_{i,g}), i\in[\mu],$ and $B_t=D(b_{t,1},\ldots,b_{t,g}),  t\in[\lambda],$ belong to \(\mathcal D_g\); in particular, \(B_tA_i\in\mathcal D_g\) for every \(t\in[\lambda]\) and \(i\in[\mu]\).

    Under this assignment, $\mathcal P_F(A_1,\ldots,A_\mu)\ne0.$ Hence $\Delta_\omega(A_1,\ldots,A_\mu)\ne0  ~\text{for every }\omega\in\Omega_\mu^+.$ By the criterion proved above, $\mathcal L_{\mathcal C}(A_1,\ldots,A_\mu)$ is an optimal-distance \((\mu r,g,r,\ell,\alpha)\)-LRC. Likewise, under the same assignment, $\mathcal P_I (A_1,\ldots,A_\mu,B_1,\ldots,B_\lambda)$ $ \ne0.$ Therefore,$ \Delta_\omega^I (A_1,\ldots,A_\mu,B_1,\ldots,B_\lambda) \ne0 ~\text{for every }\omega\in\Omega_{\mathrm I}^+.$ Applying the criterion with the local groups indexed by \((t,i)\in[\lambda]\times[\mu]\), we conclude that $\mathcal L_{\mathcal C} \bigl( B_tA_i: t\in[\lambda],\ i\in[\mu] \bigr)$ is an optimal-distance \((\lambda\mu r,g,r,\ell,\alpha)\)-LRC. This completes the proof.
\end{proof}


\bibliographystyle{IEEEtran}
\bibliography{ref}

\end{document}